\documentclass[journal,12pt,onecolumn,draftclsnofoot,]{IEEEtran}
\usepackage{fontenc}
\pdfminorversion=4
\usepackage{graphicx}
\usepackage[cmex10]{amsmath}
\interdisplaylinepenalty=2500
\usepackage{array}
\usepackage{amssymb}
\usepackage{amsthm}
\usepackage{xcolor}
\usepackage[FIGBOTCAP]{subfigure}
\usepackage{parskip}
\usepackage{mathrsfs}
\usepackage{bm}
\usepackage[noadjust]{cite}
\usepackage{amsmath,graphicx}
\usepackage[bookmarks=false]{hyperref}
\usepackage[font=small,labelfont=bf]{caption}
\hypersetup{
    colorlinks=true,
    linkcolor=blue,
    filecolor=magenta,
    urlcolor=blue,
}
\usepackage{bm}
\usepackage{epstopdf}
\usepackage{multirow}
\usepackage{algorithm}
\usepackage{algpseudocode}
\usepackage{amssymb}
\usepackage{amsmath}
\usepackage{fancyhdr}
\usepackage{graphicx}
\usepackage{float}
\usepackage{subfigure}
\usepackage{comment}

\theoremstyle{definition}
\newtheorem{definition}{Definition}
\newtheorem{proposition}{Proposition}
\newtheorem{corollary}{Corollary}

\markboth{IEEE Transactions on Pattern Analysis and Machine Intelligence}{Skm: My article name}

\bibliographystyle{IEEEtran}
\begin{document}

\graphicspath{{figures/}}

\title{Multivariate Extension of Matrix-based R{\'e}nyi's $\alpha$-order Entropy Functional}
\author{Shujian~Yu,~\IEEEmembership{Student~Member,~IEEE}, Luis~Gonzalo~S\'{a}nchez~Giraldo,\\
Robert Jenssen,~\IEEEmembership{Member,~IEEE}, and Jos\'{e}~C.~Pr\'{i}ncipe,~\IEEEmembership{Fellow,~IEEE}%
\thanks{S.~Yu and J.~C.~Pr\'{i}ncipe are with the Department of Electrical and Computer Engineering, University of Florida, Gainesville, FL 32611,
  USA (email: yusjlcy9011@ufl.edu; principe@cnel.ufl.edu).}
\thanks{L.~G.~S\'{a}nchez Giraldo is with the Department of Computer Science, University of Miami, Coral Gables, FL 33124,
USA (email: lgsanchez@cs.miami.edu).}
\thanks{R.~Jenssen is with the Department of Physics and Technology, UiT - The Arctic University of Norway, Troms{\o} 9037, Norway (email: robert.jenssen@uit.no).}}%

\IEEEtitleabstractindextext{%
	\begin{abstract}
	  The matrix-based R{\'e}nyi's $\alpha$-order entropy functional was recently introduced using the normalized eigenspectrum of a Hermitian matrix of the projected data in a reproducing kernel Hilbert space (RKHS). However, the current theory in the matrix-based R{\'e}nyi's $\alpha$-order entropy functional only defines the entropy of a single variable or mutual information between two random variables. In information theory and machine learning communities, one is also frequently interested in multivariate information quantities, such as the multivariate joint entropy and different interactive quantities among multiple variables. In this paper, we first define the matrix-based R{\'e}nyi's $\alpha$-order joint entropy among multiple variables. We then show how this definition can ease the estimation of various information quantities that measure the interactions among multiple variables, such as interactive information and total correlation. We finally present an application to feature selection to show how our definition provides a simple yet powerful way to estimate a widely-acknowledged intractable quantity from data. A real example on hyperspectral image (HSI) band selection is also provided.
	\end{abstract}
	
	% Note that keywords are not normally used for peerreview papers.
	\begin{IEEEkeywords}
		R{\'e}nyi's $\alpha$-order entropy functional, Multivariate information quantities, Feature selection.
\end{IEEEkeywords}}

\maketitle
\IEEEdisplaynontitleabstractindextext
\IEEEpeerreviewmaketitle
%\long\def\/*#1*/{}
\graphicspath{{figures/}}

% For peer review papers, you can put extra information on the cover
% page as needed:
% \begin{center} \bfseries EDICS Category: 3-BBND \end{center}
%
% for peerreview papers, inserts a page break and creates the second title.
% Will be ignored for other modes.
%\IEEEpeerreviewmaketitle

\section{Introduction}\label{sec:introduction}
The R{\'e}nyi's $\alpha$-order entropy~\cite{renyi1961measures} was defined in $1961$ as a one-parameter generalization of the celebrated Shannon entropy. In the same paper, Alfr{\'e}d R{\'e}nyi also introduced the $\alpha$-order divergence as a natural extension of the Shannon relative entropy. Following R{\'e}nyi's work, different definitions on $\alpha$-order mutual information have been proposed in the last decades, demonstrating elegant properties and great potentials for widespread adoption~\cite{verdu2015alpha}.

Fifty years after the definition of Alfr{\'e}d R{\'e}nyi, the matrix-based R{\'e}nyi's $\alpha$-order entropy functional was introduced by S\'{a}nchez Giraldo \emph{et al.}~\cite{giraldo2015measures}, in which both the entropy and the mutual information are defined over the normalized eigenspectrum of the Gram matrix, which is a Hermitian matrix of the projected data in a reproducing kernel Hilbert space (RKHS). These new functional definitions do not require a probability interpretation and avoid real-valued or discrete probability density function (PDF) estimation, but exhibit similar properties to R{\'e}nyi's $\alpha$-order entropy.

However, the current formulations in this theory only define the entropy of a single variable or the mutual information between two variables, which can be a limiting factor when multiple variables are available. In information theory and machine learning communities, one is also frequently interested in multivariate information quantities, such as the multivariate joint entropy and the interactions among multiple variables. For example, in multi-input single-output (MISO) communication systems, the basic bivariate model (between one input and one output) will certainly fail to discriminate effects due to uncontrolled input sources from those due to random noises, i.e., we cannot figure out the impairments due to system noise in the absence of knowledge of the relationships among multiple input sources~\cite{timme2011multivariate}. In machine learning, we are always interested in measuring the relationships among two or more variables to enable learning more compact representations~\cite{ver2016information} or for selecting a more informative feature set~\cite{brown2012conditional}.

%\footnote{\textcolor{red}{By variable, we mean a random element, which can be vector valued random variable for instance.}}

In this paper, we extend S\'{a}nchez Giraldo \emph{et~al.}'s definition to the multivariate scenario and illustrate the characteristics and potential applications of this extension. Specifically, in section~\ref{section2}, we provide the definitions of the matrix-based R{\'e}nyi's $\alpha$-order entropy functional, including (joint) entropy and mutual information. Then, in section~\ref{section3}, we define the proposed extension of the matrix-based R{\'e}nyi's $\alpha$-order joint entropy to multiple variables and formally show it is consistent with the bivariate definition. After that, in section~\ref{section4}, we show that this matrix-based formulation on the normalized eigenspectrum enables straightforward definitions of interactions among multiple variables and give an example of their applicability for feature selection in section~\ref{real_application} that illustrates how this simple definition provides advantageous result in comparison to well known techniques. We finally conclude this paper and provide an outlook regarding the potential of our definitions for future work in section~\ref{conclusions}.

\section{Preliminary knowledge: from Renyi's entropy to its matrix-based functional} \label{section2}

In information theory, a natural extension of the well-known Shannon's entropy is R{\'e}nyi's $\alpha$-order entropy~\cite{renyi1961measures}. For a random variable $X$ with probability density function (PDF) $f(x)$ in a finite set $\mathcal{X}$, the $\alpha$-entropy $\mathbf{H}_\alpha(X)$ is defined as:

\begin{equation}
\mathbf{H}_{\alpha}(f)=\frac{1}{1-\alpha}\log\int_\mathcal{X}f^\alpha(x)dx\label{eq1}
\end{equation}

The limiting case of Eq.~(\ref{eq1}) for $\alpha\rightarrow1$ yields Shannon's differential entropy. It also turns out that for any positive real $\alpha$, the above quantity can be expressed, under some restrictions, as a function of inner products between PDFs~\cite{principe2010information}. In particular, the $2$-order entropy of $f$ and the cross-entropy between $f$ and $g$ along with Parzen density estimation~\cite{parzen1962estimation} yield simple yet elegant expressions that can serve as objective functions for a family of supervised or unsupervised learning algorithms when the PDF is unknown~\cite{principe2010information}.

R{\'e}nyi's entropy and divergence evidence a long track record of usefulness in information theory and its applications~\cite{principe2010information}. Unfortunately, the accurate PDF estimation of high dimensional, continuous, and complex data impedes its more widespread adoption in data-driven science. To solve this problem, S\'{a}nchez Giraldo $et~al$.~\cite{giraldo2015measures} suggested a quantity that resembles quantum R{\'e}nyi's entropy~\cite{muller2013quantum} defined in terms of the normalized eigenspectrum of the Gram matrix of the data projected to an RKHS, thus estimating the entropy directly from data without PDF estimation. S\'{a}nchez Giraldo $et~al$.'s matrix entropy functional is defined as follows.

%~\\
\theoremstyle{definition}
\begin{definition} Let $\kappa:\mathcal{X}\times\mathcal{X}\mapsto\mathbb{R}$ be a real valued positive definite kernel that is also infinitely divisible~\cite{bhatia2006infinitely}. Given $X=\{x_1,x_2,...,x_n\}$ and the Gram matrix $K$ obtained from evaluating a positive definite kernel $\kappa$ on all pairs of exemplars, that is $(K)_{ij}=\kappa(x_i,x_j)$, a matrix-based analogue to R{\'e}nyi's $\alpha$-entropy for a normalized positive definite (NPD) matrix $A$ of size $n\times n$,  such that $\mathrm{tr}(A)=1$, can be given by the following functional:
\begin{equation}
\mathbf{S}_\alpha(A)=\frac{1}{1-\alpha}\log_2\left(\mathrm{tr}(A^\alpha)\right)=
\frac{1}{1-\alpha}\log_2\big[\sum_{i=1}^n\lambda_i(A)^\alpha\big] \label{eq8}
\end{equation}
where $A_{ij}=\frac{1}{n}\frac{K_{ij}}{\sqrt{K_{ii}K_{jj}}}$ and $\lambda_i(A)$ denotes the $i$-th eigenvalue of $A$.
\end{definition}

%~\\
\theoremstyle{definition}
\begin{definition}
Given $n$ pairs of samples $\{z_i=(x_i,y_i)\}_{i=1}^n$, each sample contains two different types of measurements $x\in \mathcal{X}$ and $y\in \mathcal{Y}$ obtained from the same realization, and the positive definite kernels $\kappa_1:\mathcal{X}\times \mathcal{X}\mapsto\mathbb{R}$ and $\kappa_2:\mathcal{Y}\times \mathcal{Y}\mapsto\mathbb{R}$, a matrix-based analogue to R{\'e}nyi's $\alpha$-order joint-entropy can be defined as:
\begin{equation}
\mathbf{S}_\alpha(A,B)=\mathbf{S}_\alpha\left(\frac{A\circ B}{\mathrm{tr}(A\circ B)}\right) \label{eq10}
\end{equation}
where $A_{ij}=\kappa_1(x_i,x_j)$, $B_{ij}=\kappa_2(x_i,x_j)$ and $A\circ B$ denotes the Hadamard product between the matrices $A$ and $B$. The local structure of the Gram matrices $A$ and $B$ simplifies the estimation of the joint distribution to pairwise element multiplication and it is the source of the simplicity of our estimation methodology.
\end{definition}

The following proposition proved by S\'{a}nchez Giraldo, $et~al$.~\cite[page~5]{giraldo2015measures} makes the definition of the above joint entropy compatible with the individual entropies of its components, and also allows us to define a matrix notion of R{\'e}nyi's conditional entropy $\mathbf{S}_{\alpha}(A|B)$ (or $\mathbf{S}_{\alpha}(B|A)$) and mutual information $\mathbf{I}_{\alpha}(A;B)$ in analogy with Shannon's definition.

\begin{proposition}
Let $A$ and $B$ be two $n\times n$ positive definite matrices with trace $1$ with nonnegative entries, and $A_{ii}=B_{ii}=\frac{1}{n}$, for $i=1,2,\cdots,n$. Then the following two inequalities hold:
\vspace{-5pt}
\begin{enumerate}
\item $\mathbf{S}_\alpha\big(\frac{A\circ B}{\mathrm{tr}(A\circ B)}\big)\leq\mathbf{S}_\alpha(A)+\mathbf{S}_\alpha(B),$
\item $\mathbf{S}_\alpha\big(\frac{A\circ B}{\mathrm{tr}(A\circ B)}\big)\geq\max[\mathbf{S}_\alpha(A),\mathbf{S}_\alpha(B)].$
\end{enumerate}
\end{proposition}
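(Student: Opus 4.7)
My plan is to reduce both parts to majorization relations on the spectra of $A$, $B$, and $C := (A\circ B)/\mathrm{tr}(A\circ B)$, and then invoke Schur-concavity of $\mathbf{S}_\alpha$. From the hypothesis $A_{ii}=B_{ii}=1/n$ one has $\mathrm{tr}(A\circ B)=\sum_i A_{ii}B_{ii}=1/n$, so $C=n(A\circ B)$ is PSD of unit trace with constant diagonal $C_{ii}=1/n$. A direct check on $\mathbf{S}_\alpha(X)=\frac{1}{1-\alpha}\log\sum_i\lambda_i(X)^\alpha$ shows $\mathbf{S}_\alpha$ is Schur-concave in the spectrum for every admissible $\alpha$, since $t\mapsto t^\alpha$ is convex (resp.\ concave) for $\alpha>1$ (resp.\ $0<\alpha<1$) while the prefactor $\tfrac{1}{1-\alpha}$ reverses sign accordingly; hence any majorization $\lambda(X)\prec\lambda(Y)$ yields $\mathbf{S}_\alpha(X)\geq\mathbf{S}_\alpha(Y)$.

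For part (2), I would introduce the Hadamard-with-$B$ map $\Phi_B(X)=n(X\circ B)$ on $n\times n$ matrices and verify that it is a bistochastic (completely positive, trace-preserving, unital) quantum channel. Writing any PSD decomposition $B=\sum_k b_k b_k^*$, one has $\Phi_B(X)=n\sum_k D_{b_k}XD_{b_k}^*$, where $D_{b_k}$ is the diagonal matrix with $b_k$ on its diagonal, displaying $\Phi_B$ as a sum of diagonal conjugations and hence CP. Both $\mathrm{tr}(\Phi_B(X))=n\sum_i X_{ii}B_{ii}=\mathrm{tr}(X)$ and $\Phi_B(I)=n\,\mathrm{diag}(B)=I$ hinge directly on $B_{ii}=1/n$. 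Since $C=\Phi_B(A)$, Uhlmann's majorization theorem for bistochastic channels gives $\lambda(C)\prec\lambda(A)$, and the Schur-concavity observation yields $\mathbf{S}_\alpha(C)\geq\mathbf{S}_\alpha(A)$. The symmetric argument using $\Phi_A(X)=n(X\circ A)$ gives $\mathbf{S}_\alpha(C)\geq\mathbf{S}_\alpha(B)$, completing (2).

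For part (1), I would exploit the factorization $C=nV^*(A\otimes B)V$, where $V:\mathbb{C}^n\to\mathbb{C}^n\otimes\mathbb{C}^n$, $Ve_i=e_i\otimes e_i$, is the diagonal-embedding isometry ($V^*V=I$). Since $\mathbf{S}_\alpha(A\otimes B)=\mathbf{S}_\alpha(A)+\mathbf{S}_\alpha(B)$ (tensor products multiply Schatten traces via $(A\otimes B)^\alpha=A^\alpha\otimes B^\alpha$), the inequality is equivalent to $\mathbf{S}_\alpha(C)\leq\mathbf{S}_\alpha(A\otimes B)$, which by Schur-concavity translates into the majorization $\lambda(A\otimes B)\prec\lambda(C)$ after padding $\lambda(C)$ with $n^2-n$ zeros. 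My strategy is to apply the pinching channel $\mathcal{P}_P(M)=PMP+(I-P)M(I-P)$ with $P=VV^*$ to $M=A\otimes B$: its ``diagonal'' block $P(A\otimes B)P$ is unitarily equivalent to $\tfrac{1}{n}C$ and has trace $1/n$, while the complementary block has trace $1-1/n$. Combining the bistochasticity of pinching with Ky Fan's variational formula $\sum_{k=1}^K\lambda_k^\downarrow(M)=\max_{\mathrm{rank}\,Q=K}\mathrm{tr}(QM)$ and the trace balance enforced by the constant-diagonal hypothesis should deliver the top-$K$ partial-sum bounds that constitute the required majorization.

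The hardest step is precisely this last one: pinching by itself only gives majorization in the ``wrong'' direction ($\lambda(\mathcal{P}_P(A\otimes B))\prec\lambda(A\otimes B)$), so the crux is to isolate the diagonal-subspace block, whose spectrum is $\tfrac{1}{n}\lambda(C)$, and to argue that it captures the top mass of $A\otimes B$ under the constraint $A_{ii}=B_{ii}=1/n$. I expect this to be resolved either by a Birkhoff--von~Neumann construction exhibiting an explicit doubly stochastic transport from $\lambda(A\otimes B)$ to the padded $\lambda(C)$, or by a restricted Ky Fan optimization over projections of the form $VQV^*$ (with $Q$ a rank-$K$ projection on $\mathbb{C}^n$) in which the suboptimality of this restricted family against the unrestricted maximum is controlled using the diagonal hypothesis in an essential way.
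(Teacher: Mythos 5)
First, a point of reference: the paper itself does not prove this proposition --- it explicitly attributes the proof to the cited reference \cite{giraldo2015measures}, so there is no in-paper argument to compare yours against, and I assess your proposal on its own terms. Your proof of part (2) is complete and correct: with $B_{ii}=1/n$ the Schur multiplier $\Phi_B(X)=n(X\circ B)$ is indeed completely positive (via the decomposition into diagonal conjugations), trace-preserving and unital, so Uhlmann's theorem gives $\lambda(\Phi_B(A))\prec\lambda(A)$, and the Schur-concavity of $\mathbf{S}_\alpha$ in the spectrum finishes the claim; this is essentially the Bapat--Sunder argument and is the standard route for this half.

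Part (1) has a fatal problem beyond the gap you acknowledge: the majorization $\lambda(A\otimes B)\prec\lambda(C)\cup\{0\}^{n^2-n}$ that your plan targets is false, so neither a Birkhoff--von~Neumann transport nor a restricted Ky Fan optimization can produce it. Take $n=3$,
\begin{equation*}
A=\frac{1}{3}\begin{pmatrix}1&1&0\\1&1&0\\0&0&1\end{pmatrix},\qquad
B=\frac{1}{3}\begin{pmatrix}1&0&0\\0&1&1\\0&1&1\end{pmatrix},
\end{equation*}
which have trace $1$, nonnegative entries and diagonal $1/3$ (they are positive semidefinite; replace $A$ by $(1-\epsilon)A+\epsilon I/3$ and likewise for $B$ to get strict positive definiteness without changing the conclusion). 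Here $A\circ B=I/9$, so $C=I/3$ and $\lambda^{\downarrow}(C)=(1/3,1/3,1/3)$, while $\lambda^{\downarrow}(A)=\lambda^{\downarrow}(B)=(2/3,1/3,0)$ gives $\lambda_1^{\downarrow}(A\otimes B)=4/9>1/3=\lambda_1^{\downarrow}(C)$: already the first Ky Fan partial sum goes the wrong way. Worse, any majorization-based argument would deliver inequality (1) uniformly in $\alpha$, and in this very example the inequality itself fails for large $\alpha$: at $\alpha=4$ one gets $\mathbf{S}_4(C)=\log_2 3\approx 1.585$ versus $\mathbf{S}_4(A)+\mathbf{S}_4(B)=\tfrac{2}{3}\log_2\tfrac{81}{17}\approx 1.502$ (equality occurs at $\alpha=3$, and the direction reverses beyond it). So part (1) cannot be established by any argument that sees the spectra only through a majorization relation; a correct proof must use the value of $\alpha$ in an essential way, and the statement as written evidently needs a restriction on $\alpha$ (it survives in this example for $\alpha\le 3$ and in particular near $\alpha=1$, the regime the paper actually uses). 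You should consult the proof in \cite{giraldo2015measures} for this half rather than pursuing the tensor-product majorization route.
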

\vspace{-10pt}
Since there is no consensus on the definition of R{\'e}nyi's conditional entropy and mutual information~\cite{verdu2015alpha}, motivated by the additive and subtractive relationships among different information theoretic quantities of Shannon's definition, $\mathbf{S}_\alpha(A|B)$ and $\mathbf{I}_\alpha(A;B)$ can be computed as:
\begin{equation}
\mathbf{S}_\alpha(A|B)=\mathbf{S}_\alpha(A,B)-\mathbf{S}_\alpha(B),\label{eq5}
\end{equation}
\begin{equation}
\mathbf{I}_\alpha(A;B)=\mathbf{S}_\alpha(A)+\mathbf{S}_\alpha(B)-\mathbf{S}_\alpha(A,B). \label{eq6}
\end{equation}
%By \textbf{Proposition $1$}, $\mathbf{S}_\alpha(A|B)$ and $\mathbf{I}_\alpha(A;B)$ are defined as:

In this paper, we use the radial basis function (RBF) kernel $\kappa(x_i,x_j)=\exp(-\frac{\|x_i-x_j\|^2}{2\sigma^2})$ to obtain the Gram matrices. This way, the user has to make two decisions (hyper-parameters) that change with the data and the task goal: the selection of the kernel size $\sigma$ to project the data to the RKHS and the selection of the order $\alpha$. The selection of $\sigma$ can follow Silverman's rule of thumb for density estimation~\cite{silverman1986density}, or other heuristics from a graph cut perspective, such as $10$ to $20$ percent of the total range of the Euclidean distances between all pairwise data points~\cite{shi2000normalized}. The choice of $\alpha$ is associated with the task goal. If the application requires emphasis on tails of the distribution (rare events) or multiple modalities, $\alpha$ should be less than $2$ and possibly approach to $1$ from above. $\alpha=2$ provides neutral weighting~\cite{principe2010information}. Finally, if the goal is to characterize modal behavior, $\alpha$ should be greater than $2$.

\section{Joint entropy among multiple variables} \label{section3}
In this section, we first give the definition of the matrix-based R{\'e}nyi's $\alpha$-order joint-entropy among multiple variables and then present two corollaries that serve as a foundation to this definition.

\theoremstyle{definition}
\begin{definition}
Given a collection of $n$ samples $\{s_i=(x_1^i,x_2^i,\cdots, x_k^i)\}_{i=1}^n$, where the superscript $i$ denotes the sample index, each sample contains $k$ ($k\geq2$) measurements $x_1\in \mathcal{X}_1$, $x_2\in \mathcal{X}_2$, $\cdots$, $x_k\in \mathcal{X}_k$ obtained from the same realization, and the positive definite kernels $\kappa_1:\mathcal{X}_1\times \mathcal{X}_1\mapsto\mathbb{R}$, $\kappa_2:\mathcal{X}_2\times \mathcal{X}_2\mapsto\mathbb{R}$, $\cdots$, $\kappa_k:\mathcal{X}_k\times \mathcal{X}_k\mapsto\mathbb{R}$, a matrix-based analogue to R{\'e}nyi's $\alpha$-order joint-entropy among $k$ variables can be defined as:
\begin{equation}
\mathbf{S}_\alpha(A_1,A_2,\cdots,A_k)=\mathbf{S}_\alpha\left(\frac{A_1\circ A_2\circ\cdots\circ A_k}{\mathrm{tr}(A_1\circ A_2\circ\cdots\circ A_k)}\right) \label{eq7}
\end{equation}
where $(A_1)_{ij}=\kappa_1(x_1^i,x_1^j)$, $(A_2)_{ij}=\kappa_2(x_2^i,x_2^j)$, $\cdots$, $(A_k)_{ij}=\kappa_k(x_k^i,x_k^j)$, and $\circ$ denotes the Hadamard product.
\end{definition}

The following two corollaries provide the theoretical backing for using (\ref{eq7}) to quantify joint entropy among multiple variables.

\begin{corollary}
Let $[k]$ be the index set $\{1,2,\cdots,k\}$. We partition $[k]$ into two complementary subsets $\mathbf{s}$ and $\tilde{\mathbf{s}}$. For any $\mathbf{s}\subset[k]$, denote all indices in $\mathbf{s}$ with $s_1, s_2, \cdots, s_{|\mathbf{s}|}$, where $|\cdot|$ stands for cardinality. Similarly, denote all indices in $\tilde{\mathbf{s}}$ with $\tilde{s}_1, \tilde{s}_2, \cdots, \tilde{s}_{|\mathbf{\tilde{s}}|}$. Also let $A_1$, $A_2$, $\cdots$, $A_k$ be $k$ $n\times n$ positive definite matrices with trace $1$ and nonnegative entries, and $(A_1)_{ii}=(A_2)_{ii}=\cdots=(A_k)_{ii}=\frac{1}{n}$, for $i=1,2,\cdots,n$. Then the following two inequalities hold:

\begin{equation} \label{corollary1.1}
\mathbf{S}_\alpha\left(\frac{A_1\circ A_2\circ\cdots\circ A_k}{\mathrm{tr}(A_1\circ A_2\circ\cdots\circ A_k)}\right)\leq
    \mathbf{S}_\alpha\left(\frac{A_{s_1}\circ A_{s_2}\circ\cdots\circ A_{s_{|\mathbf{s}|}}}{\mathrm{tr}(A_{s_1}\circ A_{s_2}\circ\cdots\circ A_{s_{|\mathbf{s}|}})}\right)+
    \mathbf{S}_\alpha\left(\frac{A_{\tilde{s}_1}\circ A_{\tilde{s}_2}\circ\cdots\circ A_{\tilde{s}_{|\mathbf{\tilde{s}}|}}}{\mathrm{tr}(A_{\tilde{s}_1}\circ A_{\tilde{s}_2}\circ\cdots\circ A_{\tilde{s}_{|\mathbf{\tilde{s}}|}})}\right),
\end{equation}

\begin{equation} \label{corollary1.2}
\mathbf{S}_\alpha\left(\frac{A_1\circ A_2\circ\cdots\circ A_k}{\mathrm{tr}(A_1\circ A_2\circ\cdots\circ A_k)}\right)\geq\max
    \left[\mathbf{S}_\alpha\left(\frac{A_{s_1}\circ A_{s_2}\circ\cdots\circ A_{s_{|\mathbf{s}|}}}{\mathrm{tr}(A_{s_1}\circ A_{s_2}\circ\cdots\circ A_{s_{|\mathbf{s}|}})}\right),
    \mathbf{S}_\alpha\left(\frac{A_{\tilde{s}_1}\circ A_{\tilde{s}_2}\circ\cdots\circ A_{\tilde{s}_{|\mathbf{\tilde{s}}|}}}{\mathrm{tr}(A_{\tilde{s}_1}\circ A_{\tilde{s}_2}\circ\cdots\circ A_{\tilde{s}_{|\mathbf{\tilde{s}}|}})}\right)\right].
\end{equation}

\end{corollary}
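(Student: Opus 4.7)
The plan is to reduce the $k$-variable inequalities to the bivariate Proposition~1 by bundling the factors indexed by $\mathbf{s}$ and by $\tilde{\mathbf{s}}$ into two single matrices. Concretely, set $A=\frac{A_{s_1}\circ\cdots\circ A_{s_{|\mathbf{s}|}}}{\mathrm{tr}(A_{s_1}\circ\cdots\circ A_{s_{|\mathbf{s}|}})}$ and $B=\frac{A_{\tilde{s}_1}\circ\cdots\circ A_{\tilde{s}_{|\mathbf{\tilde{s}}|}}}{\mathrm{tr}(A_{\tilde{s}_1}\circ\cdots\circ A_{\tilde{s}_{|\mathbf{\tilde{s}}|}})}$. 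The strategy is then to (i) verify that $A$ and $B$ satisfy the hypotheses of Proposition~1, (ii) prove the identity $\frac{A\circ B}{\mathrm{tr}(A\circ B)}=\frac{A_1\circ\cdots\circ A_k}{\mathrm{tr}(A_1\circ\cdots\circ A_k)}$, and (iii) read off the two corollary inequalities as immediate consequences of Proposition~1 applied to the pair $(A,B)$.

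For step (i), positive definiteness of $A$ and $B$ follows from iterated application of the Schur product theorem and is preserved by division by a positive scalar. Nonnegativity of entries is preserved by the Hadamard product and by positive scaling, so it is immediate, and the trace-one condition holds by construction. The only point that requires a brief computation is the diagonal condition $A_{ii}=B_{ii}=\frac{1}{n}$: since $(A_j)_{ii}=\frac{1}{n}$ for every $j$, the unnormalized Hadamard product over $\mathbf{s}$ has diagonal entries $n^{-|\mathbf{s}|}$ and therefore trace $n^{1-|\mathbf{s}|}$, so the normalized ratio is exactly $\frac{1}{n}$; the same calculation gives $B_{ii}=\frac{1}{n}$.

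For step (ii), because $\mathbf{s}$ and $\tilde{\mathbf{s}}$ partition $[k]$, commutativity and associativity of the Hadamard product give $A\circ B=\frac{A_1\circ\cdots\circ A_k}{\mathrm{tr}(A_{s_1}\circ\cdots\circ A_{s_{|\mathbf{s}|}})\,\mathrm{tr}(A_{\tilde{s}_1}\circ\cdots\circ A_{\tilde{s}_{|\mathbf{\tilde{s}}|}})}$; taking the trace of both sides and dividing, the two scalar normalizers cancel, yielding the claimed identity.

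With (i) and (ii) in hand, step (iii) is a direct invocation: Proposition~1 applied to $(A,B)$ gives $\mathbf{S}_\alpha\!\left(\frac{A\circ B}{\mathrm{tr}(A\circ B)}\right)\leq \mathbf{S}_\alpha(A)+\mathbf{S}_\alpha(B)$ and $\mathbf{S}_\alpha\!\left(\frac{A\circ B}{\mathrm{tr}(A\circ B)}\right)\geq\max[\mathbf{S}_\alpha(A),\mathbf{S}_\alpha(B)]$, which by the identity of step (ii) are precisely (\ref{corollary1.1}) and (\ref{corollary1.2}). I do not anticipate a genuine obstacle; the only thing that demands care is bookkeeping the normalization constants through the Hadamard--trace manipulations, and it is exactly the diagonal hypothesis $(A_j)_{ii}=\frac{1}{n}$ that makes the whole reduction close.
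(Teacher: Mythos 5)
Your proposal is correct and follows essentially the same route as the paper: bundle the factors over $\mathbf{s}$ and $\tilde{\mathbf{s}}$ into normalized matrices $A$ and $B$, verify the hypotheses of Proposition~1 via the Schur product theorem, and invoke it. The only difference is that you spell out the normalization bookkeeping (the diagonal computation and the cancellation of the two trace factors) that the paper leaves implicit.
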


\begin{proof}
Setting $A=\frac{A_{s_1}\circ A_{s_2}\circ\cdots\circ A_{s_{|\mathbf{s}|}}}{\mathrm{tr}(A_{s_1}\circ A_{s_2}\circ\cdots\circ A_{s_{|\mathbf{s}|}})}$ and $B=\frac{A_{\tilde{s}_1}\circ A_{\tilde{s}_2}\circ\cdots\circ A_{\tilde{s}_{|\mathbf{\tilde{s}}|}}}{\mathrm{tr}(A_{\tilde{s}_1}\circ A_{\tilde{s}_2}\circ\cdots\circ A_{\tilde{s}_{|\mathbf{\tilde{s}}|}})}$. According to the Schur product theorem, $A$ and $B$ are $n\times n$ positive definite matrices with trace $1$ and nonnegative entries, and $A_{ii}=B_{ii}=\frac{1}{n}$, for $i=1,2,\cdots,n$. Then \textbf{Proposition $1$} implies (\ref{corollary1.1}) and (\ref{corollary1.2}).
\end{proof}

\begin{corollary}
Let $A_1$, $A_2$, $\cdots$, $A_k$ be $k$ $n\times n$ positive definite matrices with trace $1$ and nonnegative entries, and $(A_1)_{ii}=(A_2)_{ii}=\cdots=(A_k)_{ii}=\frac{1}{n}$, for $i=1,2,\cdots,n$. Then the following two inequalities hold:

\begin{equation} \label{corollary2.1}
\mathbf{S}_\alpha\left(\frac{A_1\circ A_2\circ\cdots\circ A_k}{\mathrm{tr}(A_1\circ A_2\circ\cdots\circ A_k)}\right)\leq
\mathbf{S}_\alpha(A_1)+\mathbf{S}_\alpha(A_2)+\cdots+\mathbf{S}_\alpha(A_k),
\end{equation}

\begin{equation} \label{corollary2.2}
\mathbf{S}_\alpha\left(\frac{A_1\circ A_2\circ\cdots\circ A_k}{\mathrm{tr}(A_1\circ A_2\circ\cdots\circ A_k)}\right)\geq\max
[\mathbf{S}_\alpha(A_1),\mathbf{S}_\alpha(A_2),\cdots,\mathbf{S}_\alpha(A_k)].
\end{equation}

\end{corollary}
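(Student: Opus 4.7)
My plan is to prove Corollary~2 by induction on $k$, using Corollary~1 as the inductive engine and Proposition~1 as the base case. The observation driving the whole argument is that the Hadamard product, together with the trace-normalization step, is associative in the sense that
$$
\frac{A_1\circ A_2\circ\cdots\circ A_k}{\mathrm{tr}(A_1\circ A_2\circ\cdots\circ A_k)}
=\frac{P\circ A_k}{\mathrm{tr}(P\circ A_k)},
\qquad
P=\frac{A_1\circ\cdots\circ A_{k-1}}{\mathrm{tr}(A_1\circ\cdots\circ A_{k-1})},
$$
because the scalar $\mathrm{tr}(A_1\circ\cdots\circ A_{k-1})$ factors out of both numerator and denominator on the right. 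This is what lets me peel off one matrix at a time without ever leaving the normalized setting.

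For the upper bound (\ref{corollary2.1}), I would induct on $k$. The base case $k=2$ is exactly Proposition~1. For the inductive step, I set $\mathbf{s}=\{1,2,\ldots,k-1\}$ and $\tilde{\mathbf{s}}=\{k\}$ in Corollary~1 to obtain
$$
\mathbf{S}_\alpha\!\left(\tfrac{A_1\circ\cdots\circ A_k}{\mathrm{tr}(A_1\circ\cdots\circ A_k)}\right)
\le
\mathbf{S}_\alpha\!\left(\tfrac{A_1\circ\cdots\circ A_{k-1}}{\mathrm{tr}(A_1\circ\cdots\circ A_{k-1})}\right)+\mathbf{S}_\alpha(A_k),
$$
and then apply the inductive hypothesis to the first term on the right. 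For the lower bound (\ref{corollary2.2}), I would apply Corollary~1 once with $\mathbf{s}=\{j\}$ and $\tilde{\mathbf{s}}=[k]\setminus\{j\}$, obtaining
$$
\mathbf{S}_\alpha\!\left(\tfrac{A_1\circ\cdots\circ A_k}{\mathrm{tr}(A_1\circ\cdots\circ A_k)}\right)\ge \mathbf{S}_\alpha(A_j)
$$
for each $j\in\{1,\ldots,k\}$, and then take the maximum over $j$.

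The only non-routine checks are the hypotheses needed to invoke Proposition~1 (and hence Corollary~1) at each stage: the matrix $P$ above must itself be positive definite with trace $1$, nonnegative entries, and constant diagonal $1/n$. Positive definiteness and nonnegativity of entries follow from the Schur product theorem applied to $A_1,\ldots,A_{k-1}$; the trace is $1$ by construction; and the diagonal condition $P_{ii}=1/n$ follows because $(A_1\circ\cdots\circ A_{k-1})_{ii}=(1/n)^{k-1}$ for every $i$, so $\mathrm{tr}(A_1\circ\cdots\circ A_{k-1})=n\cdot(1/n)^{k-1}$ and dividing restores the constant diagonal $1/n$. This is the main potential obstacle, but it is a direct consequence of the Schur product theorem and the assumed normalization of the $A_i$, so the induction goes through cleanly.
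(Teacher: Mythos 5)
Your proposal is correct and follows essentially the same route as the paper: the paper also peels off one matrix at a time via Corollary~1 (phrased as a telescoping sum of the inequalities for $i=2,\dots,k$ rather than as explicit induction), and your verification that the normalized partial product $P$ retains positive definiteness, trace $1$, and constant diagonal $1/n$ is the same Schur-product-theorem observation the paper relies on implicitly. Your lower-bound argument (one application of Corollary~1 with $\mathbf{s}=\{j\}$ for each $j$, then taking the maximum) is marginally more direct than the paper's chaining of $k-1$ max-inequalities, but it is the same idea.
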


\begin{proof}
For every $i\in[2,k]$, let $\mathbf{s}=\{i\}$ and $\mathbf{\tilde{s}}=\{1,2,\cdots,i-1\}$, by \textbf{Corollary $1$}, we have:
\begin{equation} \label{eq12}
\mathbf{S}_\alpha\left(\frac{A_1\circ A_2\circ\cdots\circ A_i}{\mathrm{tr}(A_1\circ A_2\circ\cdots\circ A_i)}\right)\leq
\mathbf{S}_\alpha(A_i)+
\mathbf{S}_\alpha\left(\frac{A_1\circ A_2\circ\cdots\circ A_{i-1}}{\mathrm{tr}(A_1\circ A_2\circ\cdots\circ A_{i-1})}\right),
\end{equation}

\begin{equation} \label{eq13}
\mathbf{S}_\alpha\left(\frac{A_1\circ A_2\circ\cdots\circ A_i}{\mathrm{tr}(A_1\circ A_2\circ\cdots\circ A_i)}\right)\geq
\max\left[\mathbf{S}_\alpha(A_i),
\mathbf{S}_\alpha\left(\frac{A_1\circ A_2\circ\cdots\circ A_{i-1}}{\mathrm{tr}(A_1\circ A_2\circ\cdots\circ A_{i-1})}\right)\right],
\end{equation}

Adding the $k-1$ inequalities in (\ref{eq12}) and subtracting common terms on both sides, we get (\ref{corollary2.1}). Similarly, combing the $k-1$ inequalities in (\ref{eq13}), we get (\ref{corollary2.2}).
\end{proof}

\section{Interaction quantities among multiple variables} \label{section4}

Given the definitions in section~\ref{section3}, we discuss the matrix-based analogues to three multivariate information quantities that were introduced in previous work to measure the interactions among multiple variables. Note that, there are various definitions to measure such interactions. Here, we only review three of the major ones, as this section aims to illustrate the great simplicity offered by our definitions. Interested readers can refer to \cite{timme2011multivariate} for an experimental survey on different definitions and their properties.

\subsection{Mutual information}
The mutual information can be extended straightforwardly as a measure of the interactions among more than two variables by grouping the variables into sets, treating each set as a new single variable. For instance, the total amount of information about a random variable $Y$ that is gained from the other $k$ variables, $X_1$, $X_2$, $\cdots$, $X_k$, can be defined as:
\begin{multline}
\mathbf{I}_\alpha(B;\{A_1,A_2,\cdots,A_k\})=\mathbf{S}_\alpha(B)+\\ \mathbf{S}_\alpha\left(\frac{A_1\circ A_2\circ\cdots\circ A_k}{\mathrm{tr}(A_1\circ A_2\circ\cdots\circ A_k)}\right)
-\mathbf{S}_\alpha\left(\frac{A_1\circ A_2\circ\cdots\circ A_k\circ B}{\mathrm{tr}(A_1\circ A_2\circ\cdots\circ A_k\circ B)}\right), \label{eq14}
\end{multline}
where $A_1$, $A_2$, $\cdots$, $A_k$, and $B$ denote the normalized Gram matrices evaluated over $X_1$, $X_2$, $\cdots$, $X_k$, and $Y$ respectively. According to \textbf{Corollary $1$}, $\mathbf{I}_\alpha(B;{A_1,A_2,…,A_k})\geq0$. However, Eq.~(\ref{eq14}) cannot measure separately contributions in the information about $Y$ from individual variables $X_i$ for $i=1,2,\cdots,k$.

\subsection{Interaction information}

Interaction information (II) \cite{mcgill1954multivariate} extends the concept of the mutual information as the information gained about one variable by knowing the other \cite{timme2011multivariate}. This way, the II among three variables is defined as the gain (or loss) in sample information transmitted between any two of the variables, due to the additional knowledge of a third variable~\cite{mcgill1954multivariate}:
\begin{equation}\label{eq_mcgill}
\begin{split}
\mathbf{II}(X_1;X_2;X_3)&=\mathbf{I}(X_1;X_2|X_3)-\mathbf{I}(X_1;X_2) \\
&=\mathbf{I}(\{X_1,X_2\};X_3)-\mathbf{I}(X_1;X_3)-\mathbf{I}(X_1;X_3).
\end{split}
\end{equation}

Eq.~(\ref{eq_mcgill}) can be written as an expansion of the entropies and joint entropies of the variables,
\begin{multline}\label{eq_mcgill_expansion}
\mathbf{II}(X_1;X_2;X_3)=-[\mathbf{H}(X_1)+\mathbf{H}(X_2)+\mathbf{H}(X_3)]\\
+[\mathbf{H}(X_1,X_2)+\mathbf{H}(X_1,X_3)+\mathbf{H}(X_2,X_3)]-\mathbf{H}(X_1,X_2,X_3).
\end{multline}

This form leads to an expansion of the II to $k$ number of variables (i.e., $k$-way interactions). Given $\mathcal{S}=\{X_1,X_2,...,X_k\}$, let $\mathcal{T}$ denote a subset of $\mathcal{S}$, then the II becomes an alternating sum over all subsets $\mathcal{T}\subseteq \mathcal{S}$~\cite{jakulin2003quantifying}:
\begin{equation}\label{eq_II_expansion}
\mathbf{II}(\mathcal{S})=-\sum\limits_{\mathcal{T}\subseteq \mathcal{S}}(-1)^{|\mathcal{S}|-|\mathcal{T}|}\mathbf{H}(\mathcal{T}).
\end{equation}

A similar quantity to II is the co-information (CI)~\cite{bell2003co}, which can be derived using a lattice structure of statistical dependency~\cite{bell2003co}. Specifically, CI is expressed as:
\begin{equation}\label{eq_CI_expansion}
\mathbf{CI}(\mathcal{S})=-\sum\limits_{\mathcal{T}\subseteq \mathcal{S}}(-1)^{|\mathcal{T}|}\mathbf{H}(\mathcal{T})=(-1)^{|\mathcal{S}|}\mathbf{II}(\mathcal{S}).
\end{equation}
Clearly, CI is equal to II except for a change in sign in the case that $\mathcal{S}$ contains an odd number of variables. Compared to II, CI ensures a proper set- or measure-theoretic interpretation: CI measures the centermost atom to which all variables contribute when we use Venn Diagrams to represent different entropy terms~\cite{fano1961transmission,yeung1991new}. Note that, the difference in the sign also gives different meanings to CI and II. For example, a positive value implies redundancy for CI, but synergy for II~\cite{timme2011multivariate}.

%One can see that Eq.~(\ref{eq_II_expansion}) is closely related to the set-theoretic derivation in~\cite{yeung1991new} when we use Venn Diagrams to represent different entropy terms (i.e., II is interpreted as the overlapping region of multiple sets).

Given Eqs.~(\ref{eq_II_expansion}) and (\ref{eq_CI_expansion}), let $[k]$ be the index set $\{1,2,\cdots,k\}$, for any $\mathbf{s}\subseteq[k]$, denote all indices in $\mathbf{s}$ by $s_1$, $s_2$, $\cdots$, $s_{|\mathbf{s}|}$, where $|\cdot|$ stands for cardinality, the matrix-form of II and CI can be measured with Eq.~(\ref{eq15}) and Eq.~(\ref{eq16}), respectively:
\begin{equation}
\mathbf{II}_\alpha(A_1;A_2;\cdots;A_k)=-\sum\limits_{\mathbf{s}\subseteq[k]}(-1)^{k-|\mathbf{s}|}
\mathbf{S}_\alpha\left(\frac{A_{s_1}\circ A_{s_2}\circ\cdots\circ A_{s_{|\mathbf{s}|}}}{\mathrm{tr}(A_{s_1}\circ A_{s_2}\circ\cdots\circ A_{s_{|\mathbf{s}|}})}\right). \label{eq15}
\end{equation}
\begin{equation}
\mathbf{CI}_\alpha(A_1;A_2;\cdots;A_k)=-\sum\limits_{\mathbf{s}\subseteq[k]}(-1)^{|\mathbf{s}|}
\mathbf{S}_\alpha\left(\frac{A_{s_1}\circ A_{s_2}\circ\cdots\circ A_{s_{|\mathbf{s}|}}}{\mathrm{tr}(A_{s_1}\circ A_{s_2}\circ\cdots\circ A_{s_{|\mathbf{s}|}})}\right). \label{eq16}
\end{equation}

%Clearly, ${\displaystyle \mathbf{CI}_\alpha(A_1;A_2;\cdots;A_k)=(-1)^k\mathbf{II}_\alpha(A_1;A_2;\cdots;A_k)}$.

\subsection{Total correlation}
The total correlation (TC)~\cite{watanabe1960information} is defined by extending the idea that mutual information is the KL divergence between the joint distribution and the product of marginals. It measures the total amount of dependence among the variables. Formally, TC can be written in terms of individual entropies and joint entropy as:

\small
\begin{equation}
\begin{split}
\mathbf{TC}_\alpha(A_1,A_2,\cdots,A_k)&=\mathbf{S}_\alpha(A_1)+\mathbf{S}_\alpha(A_2)+\cdots+
\mathbf{S}_\alpha(A_k)\\
&-\mathbf{S}_\alpha\big(\frac{A_1\circ A_2\circ\cdots\circ A_k}{\mathrm{tr}(A_1\circ A_2\circ\cdots\circ A_k)}\big). \label{eq17}
\end{split}
\end{equation}
\normalsize
By \textbf{Corollary $2$}, $\mathbf{TC}_\alpha(A_1,A_2,\cdots,A_k)\geq0$. TC is zero if and only if all variables are mutually independent~\cite{timme2011multivariate}.

\section{Application for feature selection} \label{real_application}

In sections~\ref{section3} and~\ref{section4}, we generalized the matrix-based R{\'e}nyi's $\alpha$-order joint entropy to multiple variables. The new definition enables efficient and effective measurement of various multivariate interaction quantities. With these novel definitions, we are ready to address the problem of feature selection. Given a set of variables $S=\{X_1,X_2,\cdots,X_n\}$, feature selection refers to seeking a small subset of informative variables $S^*\subset S$ from $S$, such that the subset $S^*$ contains the most relevant yet least redundant information about a desired variable $Y$.

Suppose we want to select $k$ variables, then the ultimate objective is to maximize $\mathbf{I}(\{X_{i_1},X_{i_2},\cdots,X_{i_k}\};Y)$, where $i_1$, $i_2$, $\cdots$, $i_k$ denote the indices of selected variables. Despite the simple expression, before our work, the estimation of this quantity was considered intractable~\cite{fleuret2004fast,brown2012conditional}, even with the aid of Shannon's chain rule~\cite{mackay2003information}. As a result, tremendous efforts have been made to use different information-theoretic criteria to approximate $\mathbf{I}(\{X_{i_1},X_{i_2},\cdots,X_{i_k}\};Y)$ by retaining only the first-order or at most the second-order interactions terms amongst different features~\cite{brown2012conditional}. The theoretical relation amongst different criteria in different methods was recently investigated by Brown~\emph{et al.}~\cite{brown2012conditional}. According to the authors, numerous criteria proposed in the last decades can be placed under the same umbrella, i.e., balancing the tradeoff using different assumptions among three key terms: the individual predictive power of the feature, the unconditional correlations and the class-conditional correlations.

%Using Shannon's chain rule for mutual information, $\mathbf{I}(Y;\{X_{i_1},X_{i_2},\cdots,X_{i_k}\})$ can be decomposed into multiple multivariate interactions quantities.

%Albeit its simplicity, $\mathbf{I}(Y;\{X_{i_1},X_{i_2},\cdots,X_{i_k}\})$ is hard to be estimated, even with the aid of Shannon's chain rule for mutual information. As an alternative, previous work typically approximate $\mathbf{I}(Y;\{X_{i_1},X_{i_2},\cdots,X_{i_k}\})$ by retaining only the second-order interactions terms (e.g.,~\cite{peng2005feature}) or at most third-order interaction terms (e.g., \cite{brown2009new}) with different criterion. However, interested readers should note that Eq.~(\ref{eq14}) gives an explicit estimation to $\mathbf{I}(Y;\{X_{i_1},X_{i_2},\cdots,X_{i_k}\})$ by our definition.

Obviously, benefitting from the novel definition proposed in this paper, we can now explicitly maximize the ultimate objective $\mathbf{I}(\{X_{i_1},X_{i_2},\cdots,X_{i_k}\};Y)$, without any approximations or decompositions, using Eq.~(\ref{eq14}). We compare our method with $6$ state-of-the-art information-theoretic feature selection methods, namely \emph{Mutual Information-based Feature Selection} (MIFS)~\cite{battiti1994using}, \emph{First-Order Utility} (FOU)~\cite{brown2009new}, \emph{Mutual Information Maximization} (MIM)~\cite{lewis1992feature}, \emph{Maximum-Relevance Minimum-Redundancy} (MRMR)~\cite{peng2005feature}, \emph{Joint Mutual Information} (JMI)~\cite{yang2000data} and \emph{Conditional Mutual Information Maximization} (CMIM)~\cite{fleuret2004fast}. Among them, MIM is the baseline method that scores each feature independently without considering any interactions terms. MRMR is perhaps the most widely used method in various applications. According to~\cite{brown2012conditional}, JMI and CMIM outperform their counterparts, since both methods integrate the aforementioned three key terms.

We list the criteria for all methods in Table~\ref{lab:difference_summarization} for clarity. All the methods employ a greedy procedure to incrementally build the selected feature set, in each step. We implemented and optimized the codes for all the above methods in Matlab $2016$b. All methods are compared in terms of the average cross validation (CV) classification accuracy on a range of features. We employ $10$-fold CV in datasets with sample size more than $100$ and leave-one-out (LOO) CV otherwise. One should also note that the majority of the prevalent information-theoretic feature selection methods are built upon classic discrete Shannon's information quantities~\cite{li2017feature}. For mutual information estimation of those methods, continuous features are discretized using an equal-width strategy into $5$ bins~\cite{vinh2014reconsidering}, while features already with a categorical range were left untouched. For our method, we use $\alpha=1.01$ as suggested in~\cite{giraldo2015measures} to approximate the Shannon's information and make this comparison fair. We also fix the kernel size $\sigma=1$ in all experiments for simplicity. A thorough treatment to effects of $\alpha$ and $\sigma$ is discussed later.

\subsection{Artificial data} \label{app_artificial}
In this first experiment, we wish to evaluate all competing methods on data in which the optimal number of features and the inter-dependencies amongst features are known in advance. To this end, we select the MADELON dataset, a well-known benchmark from the NIPS $2003$ Feature Selection Challenge~\cite{guyon2005result}. MADELON is an artificial dataset containing $2000$ data points grouped in $32$ clusters placed on the vertices of a five-dimensional hypercube and randomly labeled $+1$ or $-1$. The five dimensions constitute $5$ informative features. $15$ linear combinations of those informative features were added to form redundant informative features. Based on those informative $20$ features one must separate the examples into the $2$ classes ($\pm1$ labels). Apart from informative features, MADELON was also added $480$ distractor features (or noises) called ``probes" having no predictive power. The order of the features and patterns were randomized.

Following~\cite{guyon2005result}, we use a $3$-NN classifier and select $20$ features. Fig.~\ref{fig:real_data}(a) shows the validation results. As can be seen, our method demonstrates overwhelming advantage on the first $5$ features. Our method works very similar to the ultimate objective, as opposed to the other methods that neglect high-order interactions terms. One should also note that the advantage of our method becomes weaker after $6$ features. This is because, after selecting the $5$ most informative features, the linear combinations of informative features become redundant information to our method such that their functionalities can be fully substituted with the first $5$ features. In other words, the value of $\mathbf{I}(\{X_{i_1},X_{i_2},\cdots,X_{i_k}\};Y)-\mathbf{I}(\{X_{i_1},X_{i_2},\cdots,X_{i_{k-1}}\};Y)$ ($k>5$) becomes tiny such that our method cannot distinguish linear combinations from noises. By contrast, since other methods cannot find the most $5$ informative features at first, it is possible for them to select one of the most informative features in later steps. For example, if one method selects the combination of the first and second informative features at step $3$, this method may even achieve higher classification accuracy in later steps if the third or fourth informative feature is selected at that step. This is because of the possible existence of synergistic information~\cite{williams2010nonnegative}. In our approach this would call for smaller and smaller $\sigma$ and/or different values of $\alpha$, but it was not implemented here.

\begin{figure*}[!t]
\setlength{\abovecaptionskip}{0.cm}
\setlength{\belowcaptionskip}{-0.0cm}
\centering
\subfigure[MADELON] {\includegraphics[width=.32\textwidth]{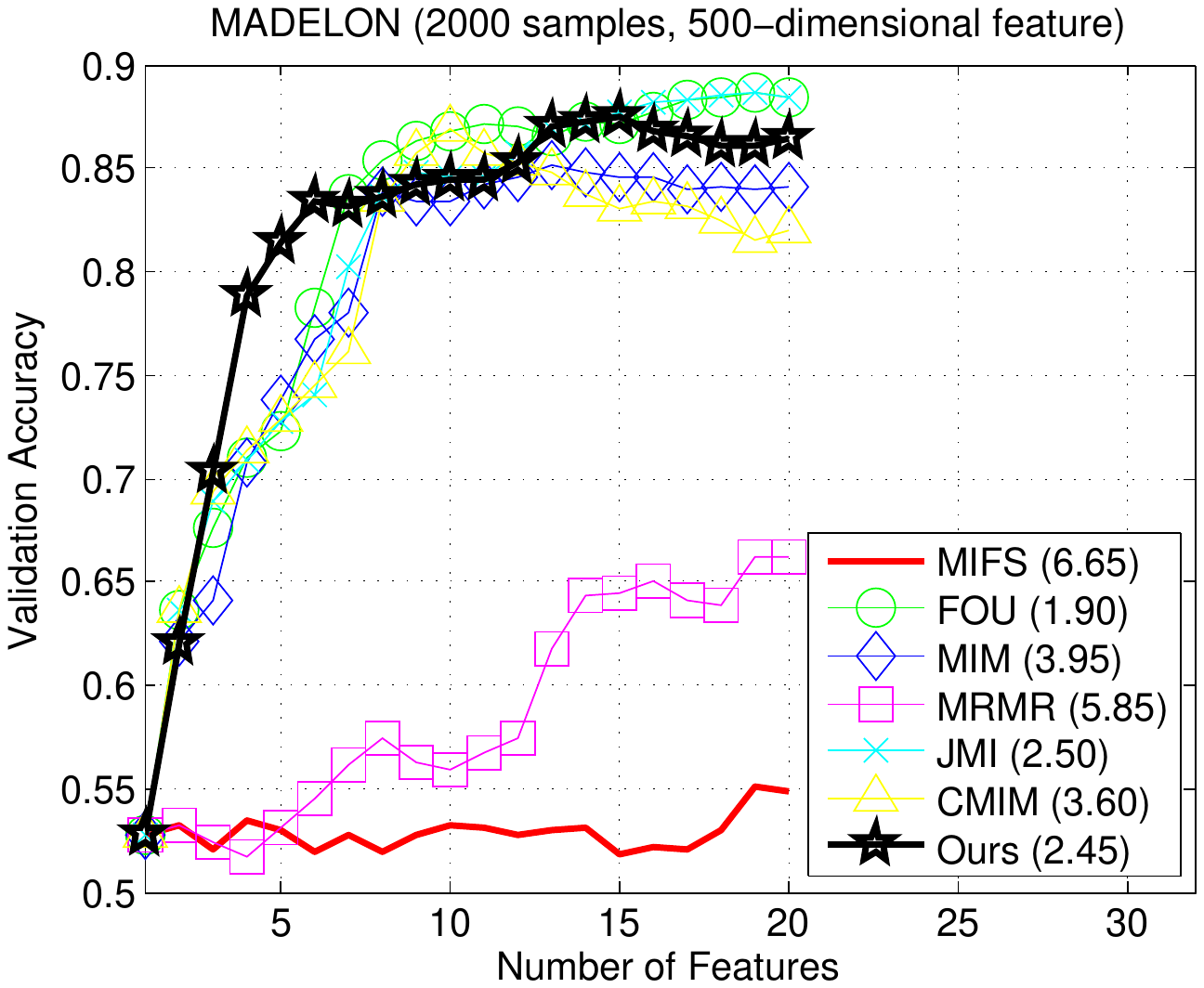}}
\subfigure[breast] {\includegraphics[width=.32\textwidth]{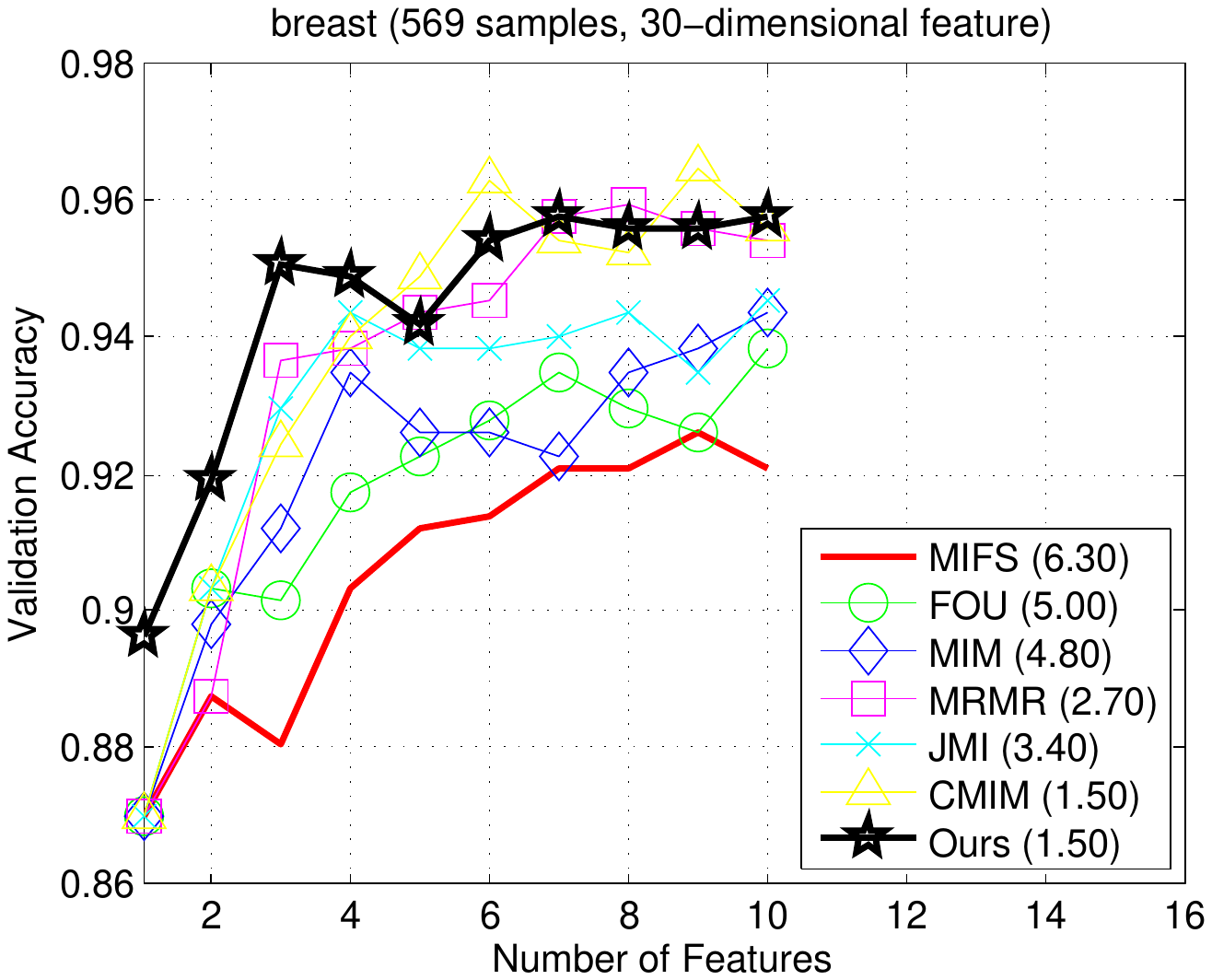}}
\subfigure[semeion] {\includegraphics[width=.32\textwidth]{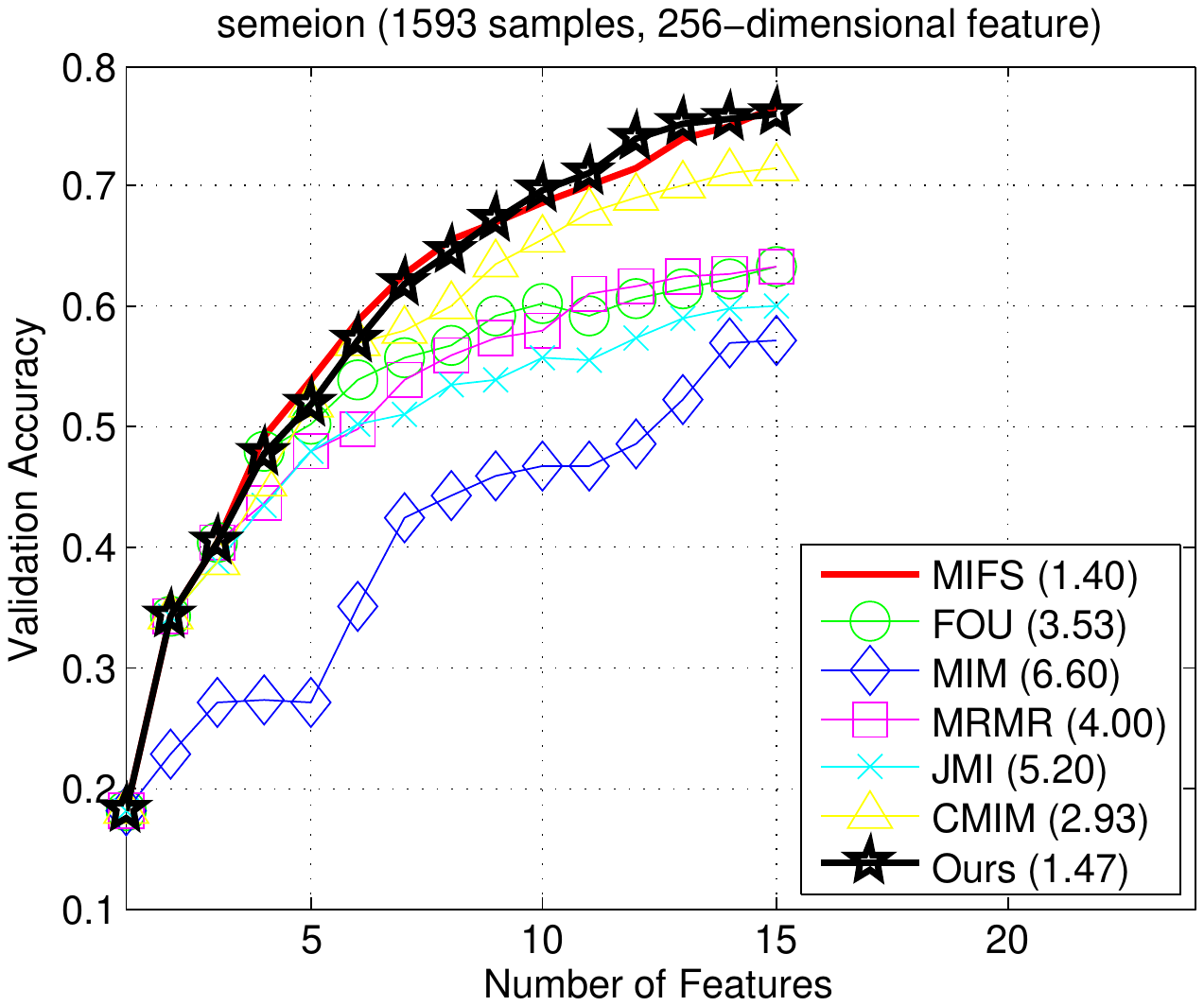}} \\
\subfigure[waveform] {\includegraphics[width=.32\textwidth]{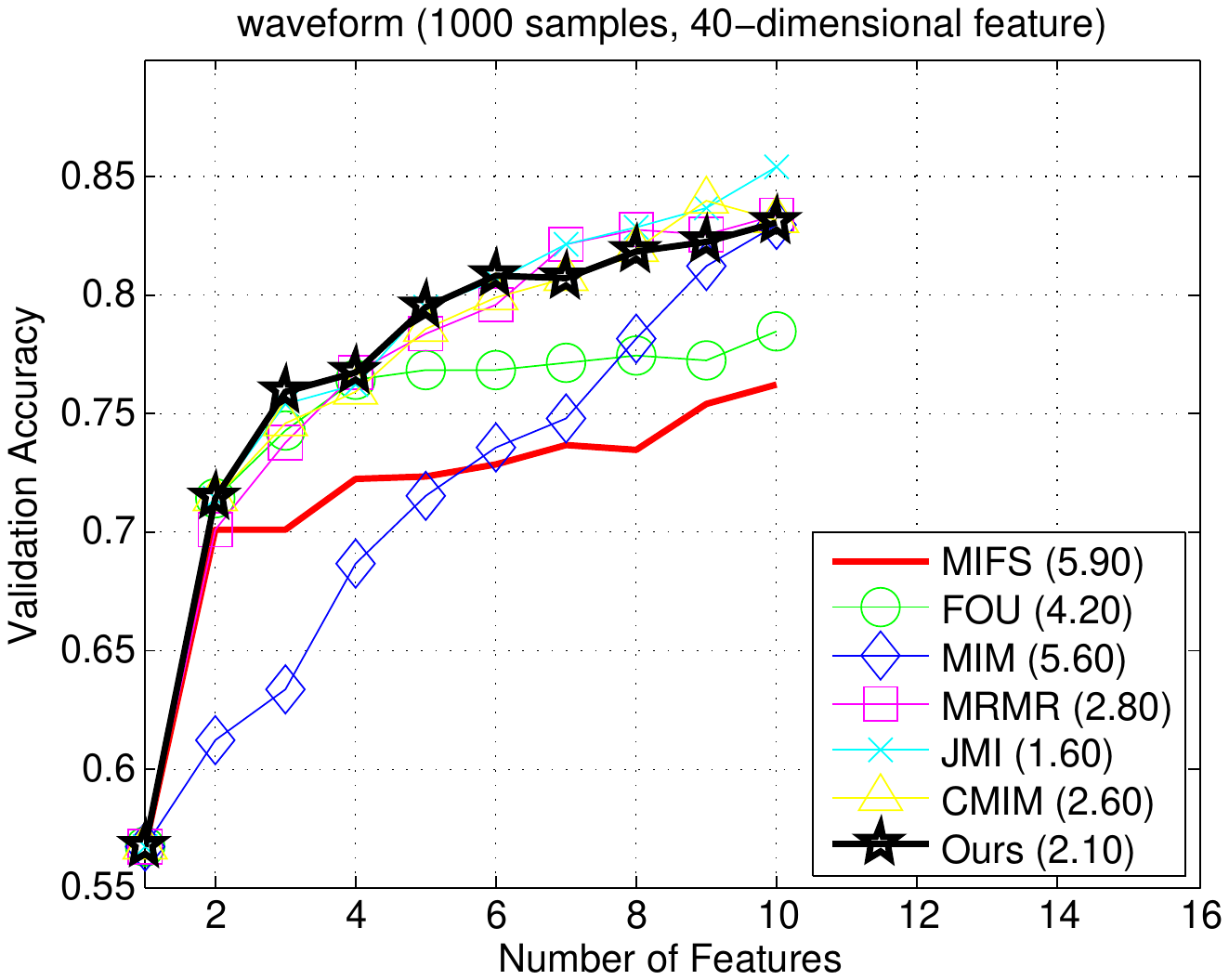}}
\subfigure[Lung] {\includegraphics[width=.32\textwidth]{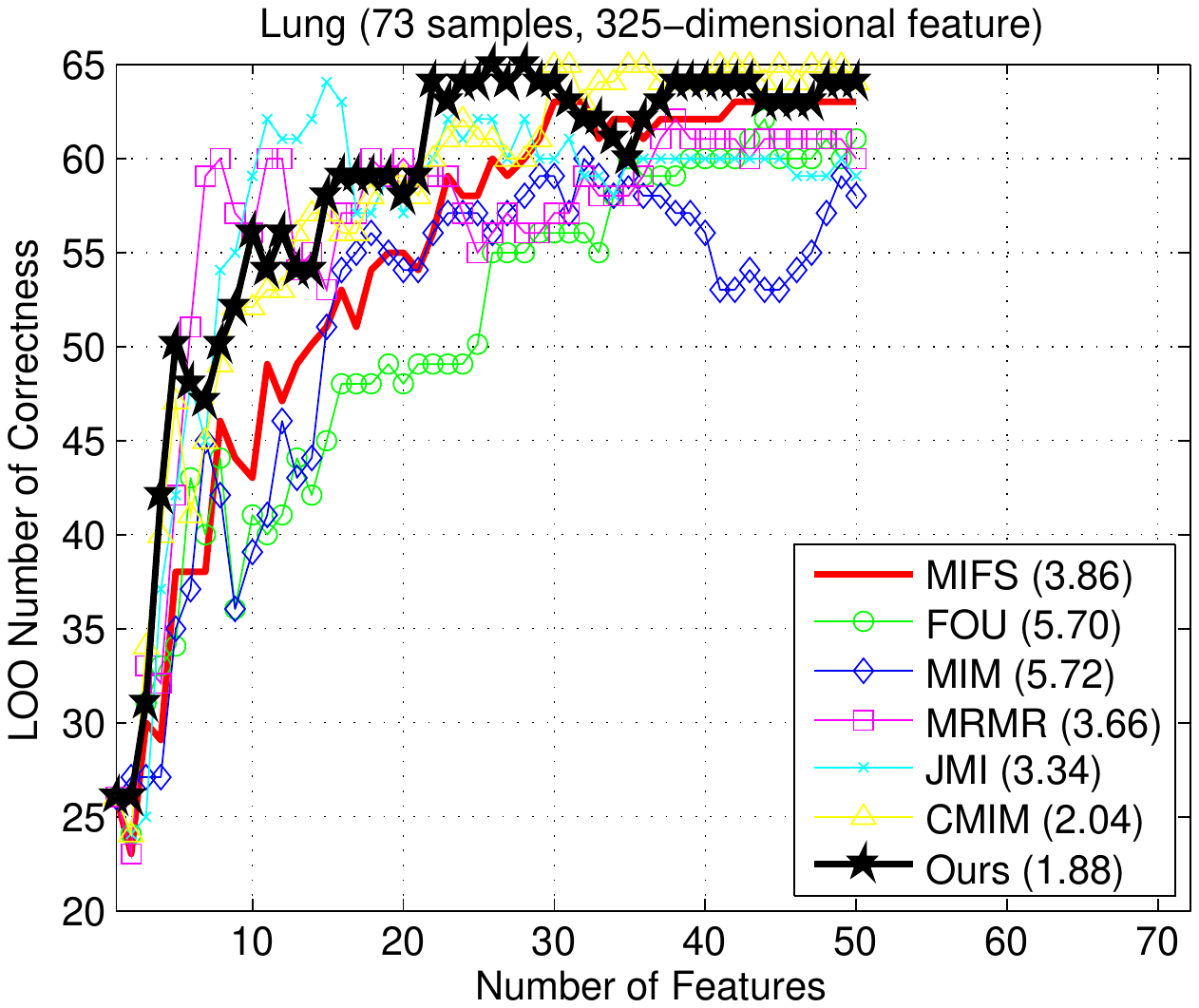}}
\subfigure[Lymph] {\includegraphics[width=.32\textwidth]{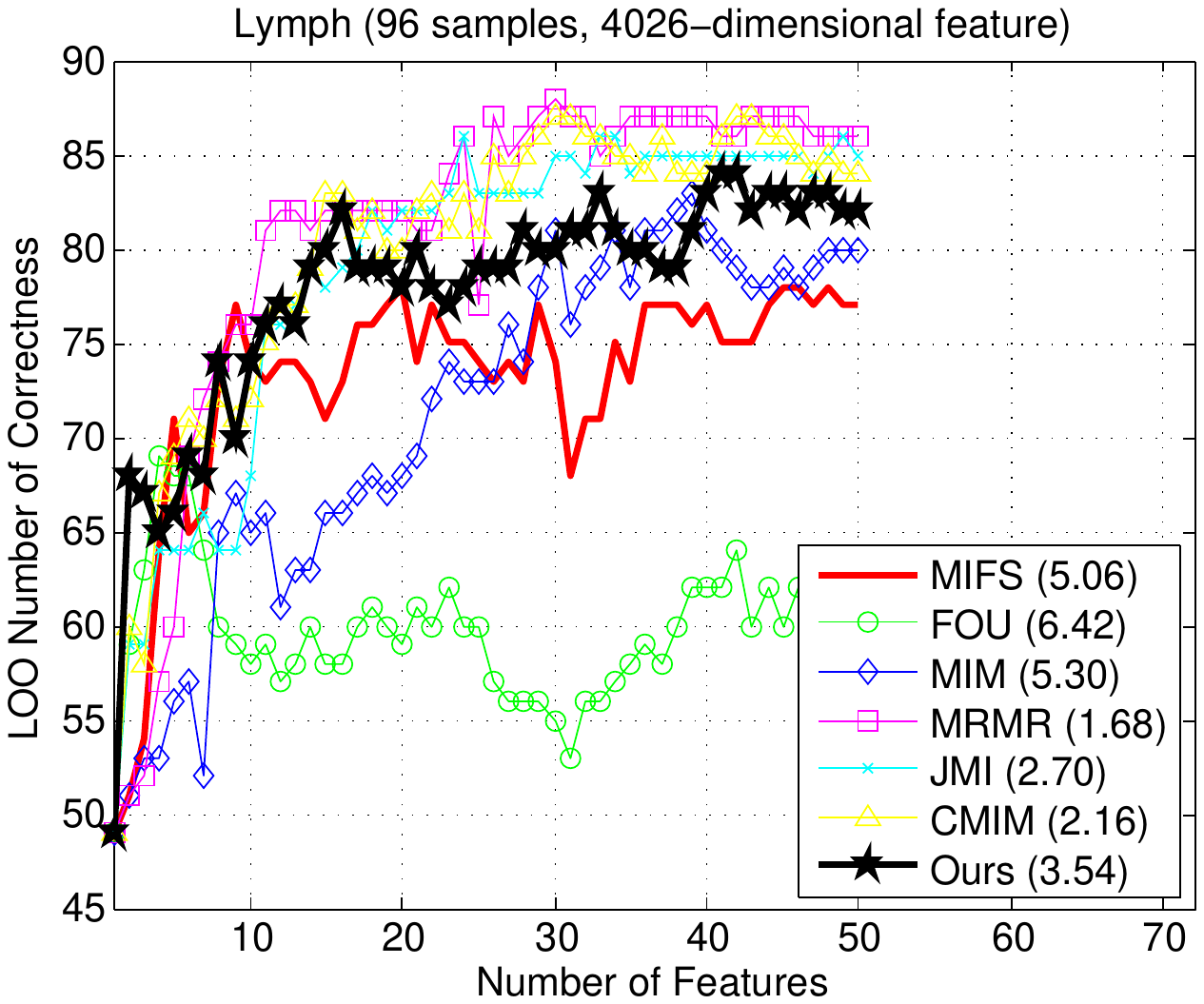}} \\
\subfigure[ORL] {\includegraphics[width=.32\textwidth]{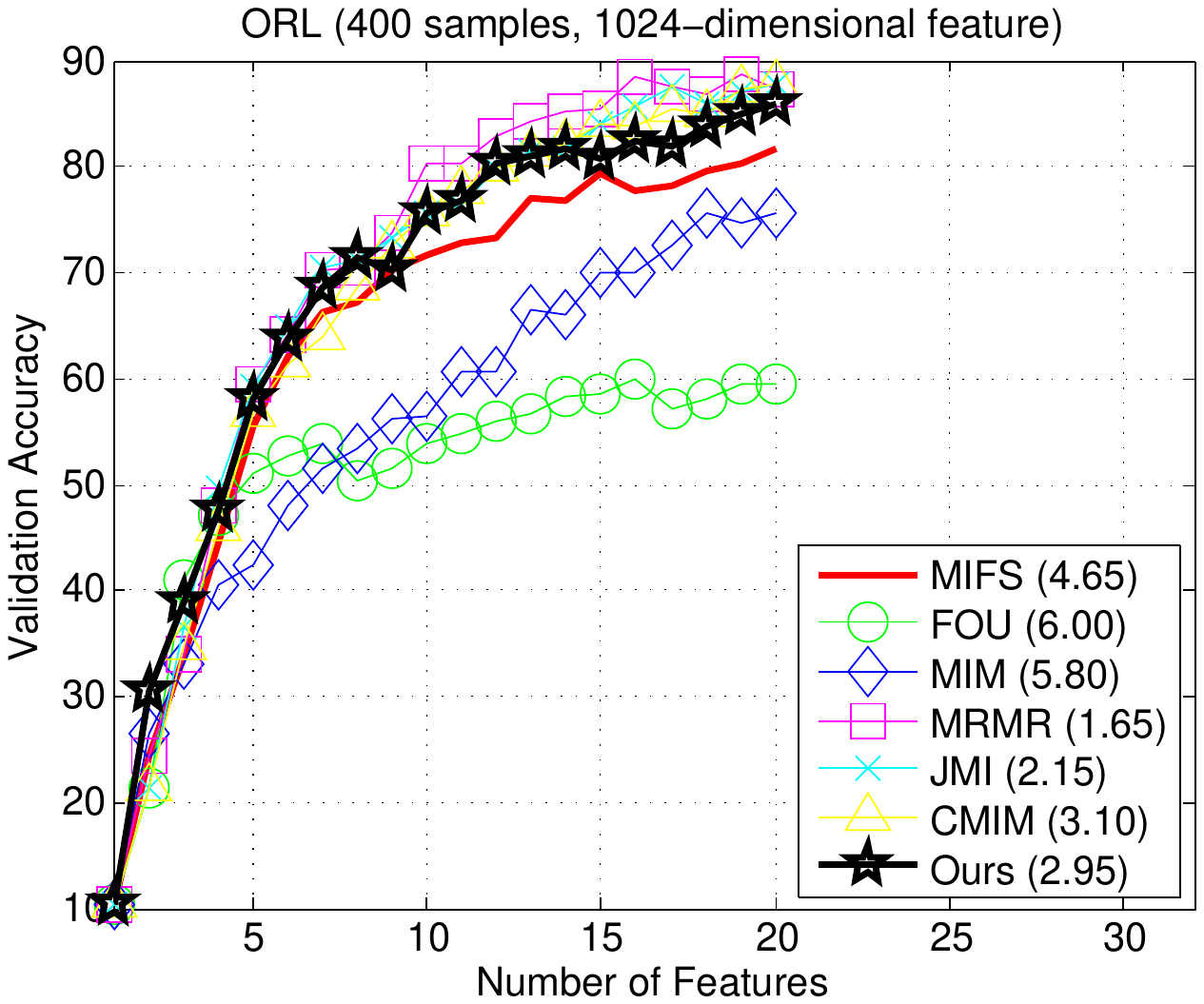}}
\subfigure[warpPIE10P] {\includegraphics[width=.32\textwidth]{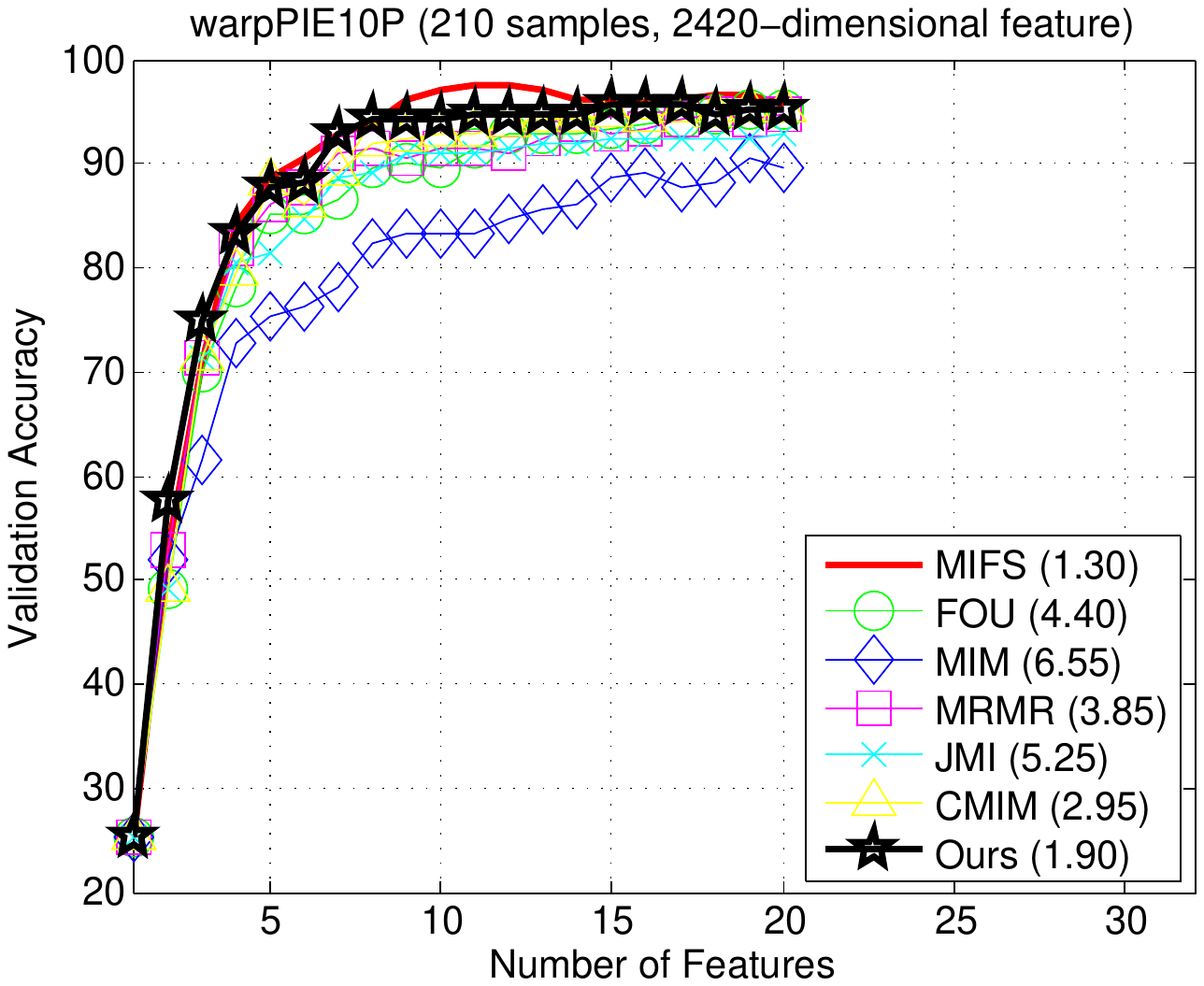}}
\caption{Validation accuracy or Leave-one-out (LOO) results on synthetic and real datasets. The number of samples and the feature dimensionality for each dataset are listed in the title. The value beside each method in the legend indicates the average rank in that dataset. Our method performs favorably in all datasets regardless of data characteristics.}
\label{fig:real_data}
\end{figure*}

\subsection{Real data} \label{app_real}

\begin{table*}
\scriptsize
\centering
\caption{A summarization of different information-theoretic feature selection methods and their average ranks over different number of features in each dataset. The overall average ranks over different datasets are also reported. The best two performance in each dataset are marked with \textcolor{red}{red} and \textcolor{blue}{blue} respectively.}\label{lab:difference_summarization}
\begin{tabular}{ccccccccccc}\hline
 & Criteria & MADELON & breast & semeion & waveform & Lung & Lymph & ORL & PIE & Ave. \\\hline
$\text{MIFS}$~\cite{battiti1994using} & $\mathbf{I}(X_{i_k};Y)-\beta\sum\limits_{l=1}^{k-1}\mathbf{I}(X_{i_k};X_{i_l})$ & $6.65$ & $6.30$ & $\color{red}{1.40}$ & $5.90$ & $3.86$ & $5.06$ & $4.65$ & $\color{red}{1.30}$ & $4.75$ \\
$\text{FOU}$~\cite{brown2009new} & $\mathbf{I}(X_{i_k};Y)-\sum\limits_{l=1}^{k-1}[\mathbf{I}(X_{i_k};X_{i_l})-\mathbf{I}(X_{i_k};X_{i_l}|Y)]$ & $\color{red}{1.90}$ & $5.00$ & $3.53$ & $4.20$ & $5.70$ & $6.42$ & $6.00$ & $4.40$ & $5.13$ \\
$\text{MIM}$~\cite{lewis1992feature} & $\mathbf{I}(X_{i_k};Y)$ & $3.95$ & $4.80$ & $6.60$ & $5.60$ & $5.72$ & $5.30$ & $5.80$ & $6.55$ & $6.13$ \\
$\text{MRMR}$~\cite{peng2005feature} & $\mathbf{I}(X_{i_k};Y)-\frac{1}{k-1}\sum\limits_{l=1}^{k-1}\mathbf{I}(X_{i_k};X_{i_l})$ & $5.85$ & $2.70$ & $4.00$ & $2.80$ & $3.66$ & $\color{red}{1.68}$ & $\color{red}{1.65}$ & $3.85$ & $3.50$ \\
$\text{JMI}$~\cite{yang2000data} & $\sum\limits_{l=1}^{k-1}\mathbf{I}(\{X_{i_k},X_{i_l}\};Y)$ & $2.50$ & $3.40$ & $5.20$ & $\color{red}{1.60}$ & $3.34$ & $2.70$ & $\color{blue}{2.15}$ & $5.25$ & $3.50$ \\
$\text{CMIM}$~\cite{fleuret2004fast} & $\min\limits_l\mathbf{I}(X_{i_k};Y|X_{i_l})$ & $3.60$ & $\color{red}{1.50}$ & $2.93$ & $2.60$ & $\color{blue}{2.04}$ & $\color{blue}{2.16}$ & $3.10$ & $2.95$ & $\color{blue}{2.81}$ \\
$\text{Ours}$ & $\mathbf{I}(\{X_{i_1},X_{i_2},\cdots,X_{i_k}\};Y)$ & $\color{blue}{2.45}$ & $\color{red}{1.50}$ & $\color{blue}{1.47}$ & $\color{blue}{2.10}$ & $\color{red}{1.88}$ & $3.54$ & $2.95$ & $\color{blue}{1.90}$ & $\color{red}{2.19}$ \\\hline
\end{tabular}
\end{table*}

We then evaluate the performance of all methods on $7$ well-known public datasets used in previous research~\cite{brown2012conditional,vinh2014reconsidering}, covering a wide variety of example-feature ratios, class numbers, and different domains including microarray data, image data, biological data, and telecommunication data. Datasets from diverse domains with different characteristics serve as high-quality test bed for a comprehensive evaluation. Different from other datasets, the datasets \emph{Lung} and \emph{Lymph} are already discretized by Peng~\emph{et al.}~\cite{peng2005feature} such that the raw data is not available. This is not a problem for previous information theoretic feature selection methods built upon Shannon's definition. However, our mutual information estimation relies on the Gram matrix evaluated on pairwise samples, this discretization will hurt the ability of our method to take advantage of continuous random variable information, and create an artificial upper limit for performance.

The features within each dataset have a variety of characteristics - some binary/discrete, and some continuous. Following~\cite{brown2012conditional,vinh2014reconsidering}, the base classifier for all data sets is chosen as a linear Support Vector Machine (SVM) (with the regularization parameter set to $1$). The validation results for all competing methods are presented in Fig.~\ref{fig:real_data}(b)-Fig.~\ref{fig:real_data}(h). We also report the ranks in each dataset and the average ranks across all datasets in Table~\ref{lab:difference_summarization}. For each method, its rank in each dataset is summarized as the mean value of ranks across different number of features.

As can be seen, our method can always achieve superior performance on most datasets no matter the number of features. An interesting observation comes from the dataset \emph{breast}, in which our advantage starts from the first feature. This suggests that data discretization will deteriorate mutual information estimation performance, otherwise all the methods will have the same classification accuracy in the first feature (see results on other datasets).

However, the performance of our method is degraded on \emph{Lymph}, as expected. Apart from the improper data discretization (we cannot precisely estimate the Gram matrix because the raw data is unavailable), another reason that causes the degradation is the decreased resolution of our information quantity estimator resulting from a small Gram matrix evaluated on small-sample and high-dimensionality datasets. In fact, it has been observed that our method computes the same value of $\mathbf{I}(\{X_{i_1},X_{i_2},\cdots,X_{i_k}\};Y)$ if $X_{i_k}$ comes from two different feature sources. So a better selection of $\sigma$ should be pursued. %Note that, this work primarily attempts to show the simplicity and flexibility that our new definition brings to the machine learning community. We leave an exhaustive investigation to an advanced information-theoretic feature selection method as future work.

%Therefore, a coarse-to-fine approach that firstly gives a rough subset of candidate features and then determines the most informative one by considering some specific high-order interaction terms (or using wrappers or embedded methods~\cite{guyon2003introduction}) seems to be a promising direction. Note that, this work primarily attempts to show the simplicity and flexibility that our new definition brings to the machine learning community. We leave an exhaustive investigation to an advanced information-theoretic feature selection method as future work.

To complement ranks reported in Table~\ref{lab:difference_summarization}, we perform a Nemenyi's post-hoc test~\cite{NemenyiDistribution} to discover the statistical difference in all competing methods. Specifically, we use the critical difference (CD)~\cite{demvsar2006statistical} as a reference, methods with ranks differ by less than CD are not statistically different, and can be grouped together. The test results are shown in Fig.~\ref{fig:nemenyi_test}, where the black line represents the axis on which the average ranks of methods are drawn, with those appearing on the left hand side performing better. The groups of methods that were not significantly different were connected with a green dashed line. On the one hand, different criteria all achieved visually remarkable improvements against the baseline method MIM (as suggested by the first grouping). On the other hand, only our method and CMIM are significantly different from the baseline method MIM (as suggested by the second grouping).

\begin{figure}[!htbp]
\setlength{\belowcaptionskip}{-0.0cm}
\centering
\includegraphics[width=.70\textwidth]{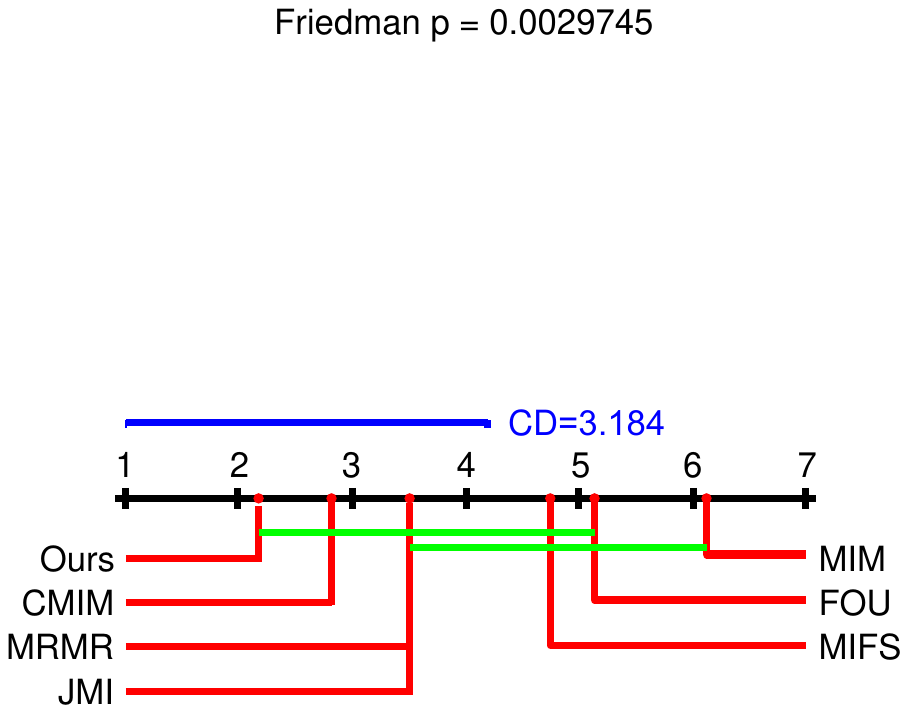}
\caption{Critical difference (CD)~\cite{demvsar2006statistical} generated using Nemenyi's post-hoc test~\cite{NemenyiDistribution} with significance level $0.05$. The groups are identified using the mean rank of a model $\pm$ the CD (marked with a horizontal blue line). There is no evidence of significant differences for models in the same group (joined by the dashed green lines). Our method is the only one that is significantly different from the baseline method MIM.}
\label{fig:nemenyi_test}
\end{figure}

%\subsubsection{Effects of $\sigma$ and $\alpha$}
We also analyze the sensitivity to parameters. Our method has two important parameters: the kernel size $\sigma$ and the entropy order $\alpha$. The parameter $\sigma$ controls the locality of our estimator. Theoretically, for small $\sigma$, the Gram matrix approaches identity and thus its eigenvalues become more similar, with $1/n$ as the limit case. By contrast, for large $\sigma$, the Gram matrix approaches all-ones matrix as the limit case and its eigenvalues become zero except for the one. Therefore, extremely small and large values of $\sigma$ are of limited interests. We expect our estimator to work well in a large range of $\sigma$, because this application concerns more on the large/small relationships among several measurements (rather than their specific values), and these relationships will not be affected if the value of $\sigma$ does not result in the saturation of mutual information estimation. However, a relatively large $\sigma$ (in a reasonable range) is still preferred. This is because both entropy and mutual information monotonically increase as $\sigma$ decreases~\cite{giraldo2015measures}, large $\sigma$ makes the mutual information between labels and selected feature subset increase slowly, thus encouraging the discriminability if we are going to continue the selection. We investigate how $\sigma$ affects the performance of our method for different values, $\{0.1,0.5,1,5,10,50,100\}$. We also evaluate our performance with $\sigma$ tuned with $10$ to $20$ percent of the total (median) range of the Euclidean distances between all pairwise data points~\cite{shi2000normalized}. For example, in dataset \emph{semeion}, this range corresponds to $9.17<\sigma<9.64$. Performance variance result is presented in Fig.~\ref{fig:parameter_robustness}. Due to space limitations, we only report the results in terms of validation accuracy on \emph{waveform} and \emph{semeion} datasets. We can observe that the accuracy values and the average ranks are not sensitive to $\sigma$ in a large range ($0.5$ to $10$ in \emph{waveform}, $0.1$ to $10$ in \emph{semeion}), but relatively large $\sigma$ seems better.

% Therefore, both entropy and mutual information (between two random variables) monotonically increase as $\sigma\rightarrow0$~\cite{giraldo2015measures}.
As discussed earlier, $\alpha$ changes the emphasis from the tails of the distribution (smaller $\alpha$) to places with large concentration of mass (larger $\alpha$)~\cite{principe2010information}. Since classification uses a counting norm, values lower than $2$ are preferred. We use $\alpha=1.01$ to approximate Shannon's entropy and make the comparison fair. We also observed a performance gain when $\alpha$ is smaller than $1$. See Appendix~\ref{appendix_A} and~\ref{appendix_B} for results.

%effects of the tails: values of $\alpha$ smaller than $1$ emphasize the importance of samples in the tails, values larger than $2$ emphasize the modes of the distributions, whereas values close to $2$ give equal weights to every sample, no matter where it is located in the distribution~\cite{principe2010information}. Since classification uses a counting norm, values lower than $2$ are preferred. This paper uses $\alpha=1.01$ to approximate Shannon's entropy and make the comparison fair. However, we also expect a performance gain when $\alpha$ is smaller than $1$. See supplementary material for results.}

\begin{figure}[!t]
\setlength{\abovecaptionskip}{0.cm}
\setlength{\belowcaptionskip}{-0.0cm}
\centering
\subfigure[waveform] {\includegraphics[width=.45\textwidth]{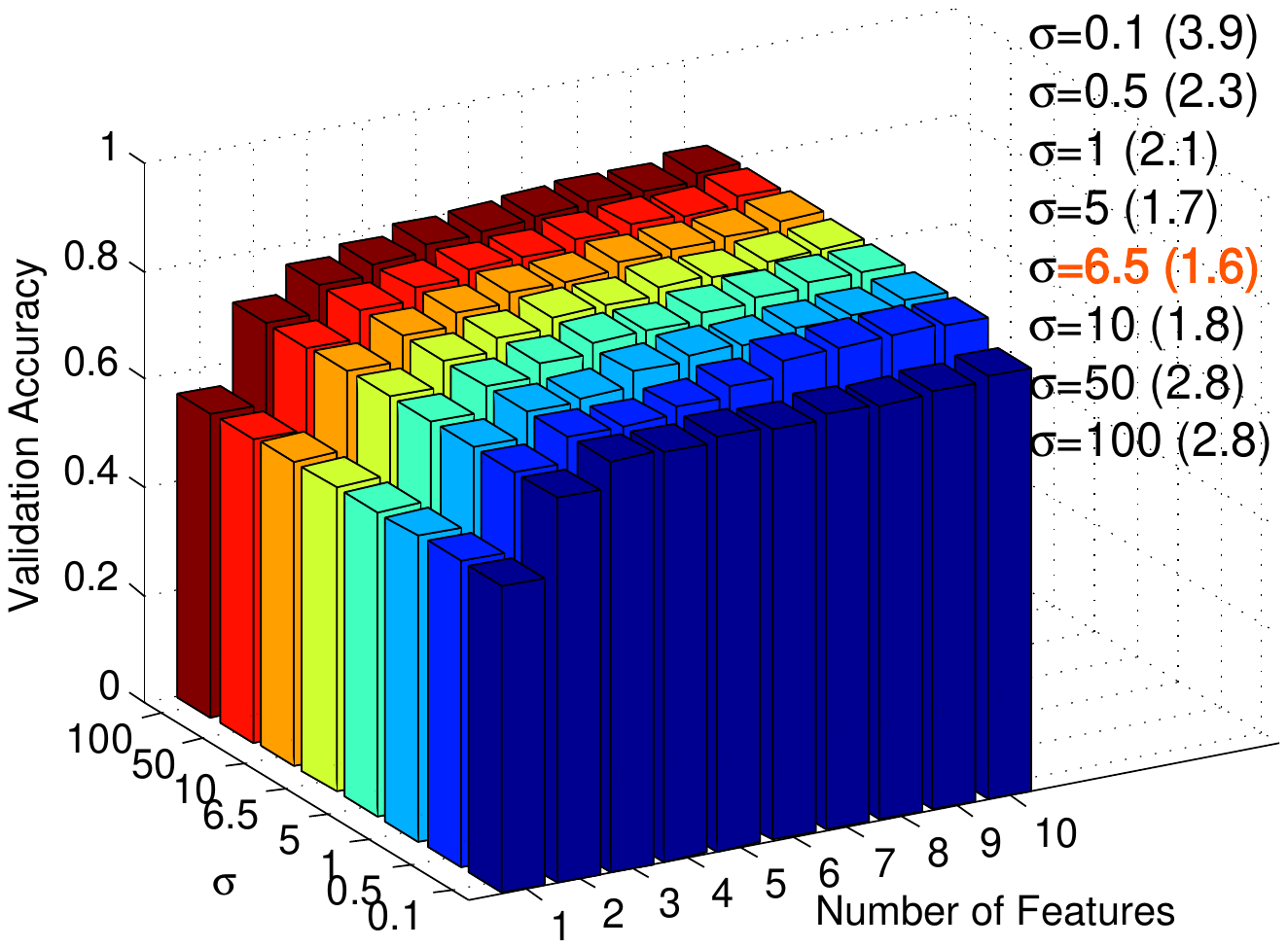}}
\subfigure[semeion] {\includegraphics[width=.45\textwidth]{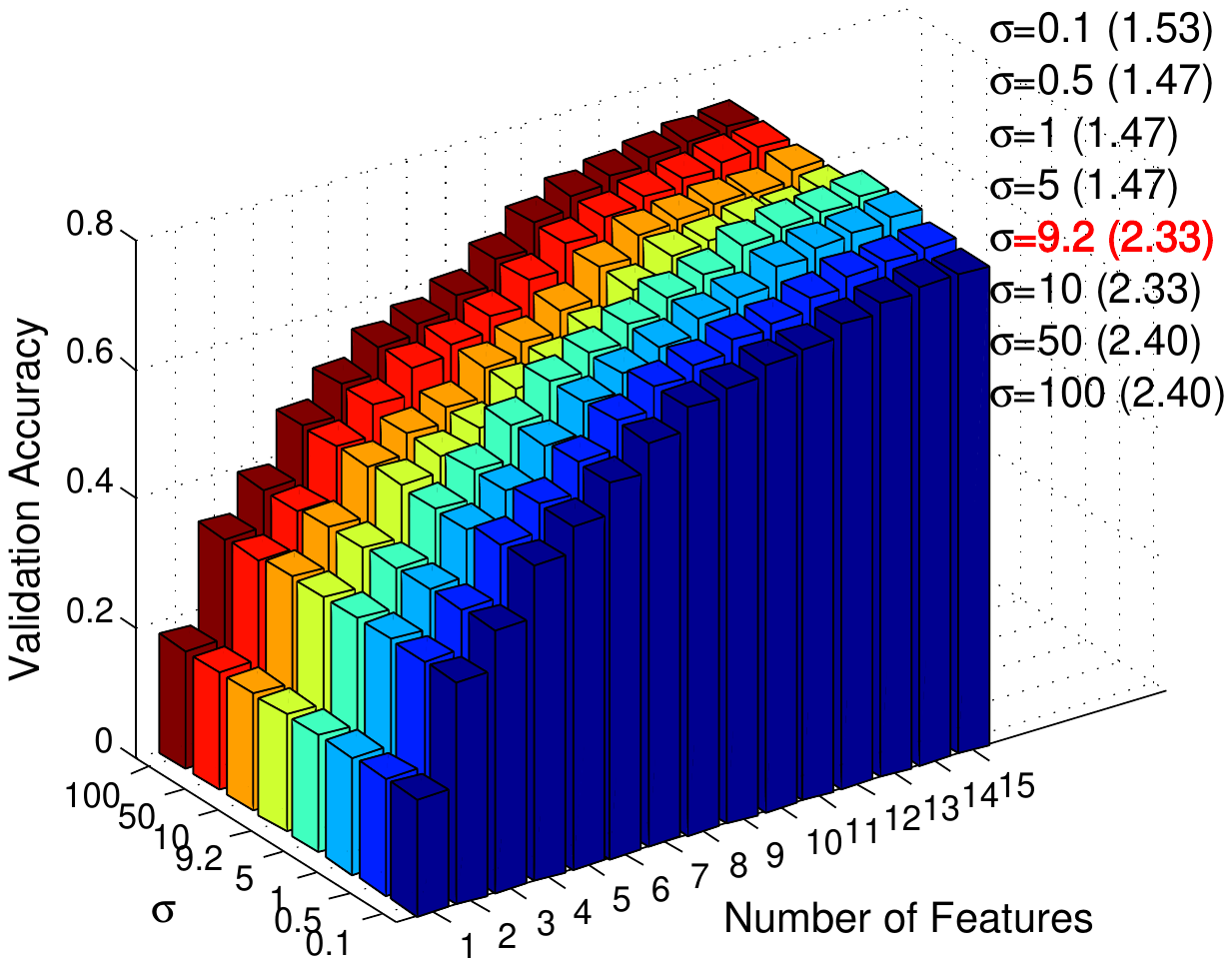}} \\
%\subfigure[waveform ($\sigma=6.5$)] {\includegraphics[width=.23\textwidth,height=3.3cm]{waveform_heuristic}}
%\subfigure[semeion ($\sigma=9.3$)] {\includegraphics[width=.23\textwidth,height=3.3cm]{semeion_heuristic}} \\
\caption{Validation accuracy of our method on waveform and semeion datasets with respect to different $\sigma$ and feature numbers ($\alpha=1.01$). The value in the parenthesis is the average rank of our method with respect to other competing methods, when using the corresponding $\sigma$. Our method works well in a large range of $\sigma$, and the $10$ to $20$ percent of the total (median) range is a more reliable choice to $\sigma$ (marked with red).}
\label{fig:parameter_robustness}
\end{figure}

We finally briefly analyze the computational and memory cost of different methods. Let $d$ be the number of features and $n$ the number of samples. Apart from MIM that only requires one sort, all methods use the same forward selection scheme and require $d^2/2$ times sort. The main difference in computational complexity comes from the estimation of mutual information. Our estimator deals with continuous (or mixed) random variables, it takes $\mathcal{O}(n^3)$ time for the eigenvalue decomposition of a $n\times n$ Gram matrix. Other methods focus on discrete random variables, which takes roughly $\mathcal{O}(n)$ time. However, if they substitute Shannon's discrete entropy with the differential entropy, the continuous PDF estimation typically take $\mathcal{O}(n^2)$ time~\cite{principe2010information}. As for the memory cost, MIM just needs to reserve $d$ mutual information values, our method needs to reserve $(d+1)$ Gram matrices of size $n\times n$, whereas others need to reserve all pairwise mutual information values. A summary is given in Table~\ref{lab:cost_summarization}. Admittedly, the computational complexity is higher for the original formulation of the matrix-based quantities. It is possible to apply methods such as kernel randomization~\cite{lopez2013randomized} to reduce the burden to $\mathcal{O}(n\log(n))$. Please also note that, the differentiability of the matrix-based objective opens the door to other search techniques beyond the greedy selection. We leave this to future work.

%\vspace{-20pt}
\begin{table}
\setlength{\abovecaptionskip}{-0.0cm}
\setlength{\belowcaptionskip}{-0.0cm}
\centering
\caption{Summary of computational and memory cost.}\label{lab:cost_summarization}
\begin{tabular}{ccccc}\hline
 & Computational & Memory \\\hline
$\text{MIM}$ & $\mathcal{O}(nd)$ & $\mathcal{O}(d)$ \\
$\text{Ours~(continuous)}$ & $\mathcal{O}(n^3d^2)$ & $\mathcal{O}(n^2d)$ \\
$\text{Others~(discrete)}$ & $\mathcal{O}(nd^2)$ & $\mathcal{O}(d^2)$ \\
$\text{Others~(continuous)}$ & $\mathcal{O}(n^2d^2)$ & $\mathcal{O}(d^2)$ \\\hline
\end{tabular}
\end{table}

\subsection{Feature selection for hyperspectral image (HSI) classification}
We finally evaluate the performances of all methods in another real example of great importance: band selection for hyperspectral image (HSI) classification. In particular, the spectrum of each pixel (consisting of measurements using hundreds of spectral bands) is a very popular feature in the literature. However, this kind of data is usually noisy and contains high redundancy between adjacent bands~\cite{feng2016multiple}. Therefore, it would be very helpful if one could select a subset of spectral bands (i.e., the most important bands or wavelengths) beforehand. For some applications, these spectral bands can be used to infer mineralogical and chemical properties~\cite{yu2018band}.

We apply all feature (here referring to bands) selection methods mentioned in sections~\ref{app_artificial} and~\ref{app_real} on the publicly available benchmark Indian Pine data~\cite{landgrebe2005signal}, consisting of $145\times145$ pixels by $220$ bands of reflectance Airborne Visible/Infrared Imaging Spectrometer (AVIRIS). Because of atmospheric water absorption, a total of $20$ bands can be identified as noisy ($104$-$108$, $150$-$163$, $220$) and safely removed as a preprocessing procedure~\cite{camps2005kernel}. There are $10,366$ labeled pixels from $16$ classes such as corn and grass.

We test the performances of all methods on three different gallery (training set) sizes, i.e., for each class, $1\%$, $5\%$ and $10\%$ of the available labeled samples were randomly selected as the gallery. The remaining samples were then used as the probe set for evaluation. For each gallery size, the random selection process was repeated $10$ times, and the average quantitative evaluation metrics among $10$ simulations were recorded. We choose SVM with a RBF kernel as the baseline classifier, as it is the most widely used method in HSI classification research~\cite{camps2014advances}. The overall accuracy (OA) and average accuracy (AA) are adopted as the objective metrics to evaluate HSI classification results. The OA is computed as the percentage of correctly classified test pixels, whereas the AA is the mean of the percentage of correctly classified pixels for each class.

\begin{figure}[!t]
\setlength{\abovecaptionskip}{0.cm}
\setlength{\belowcaptionskip}{-0.0cm}
\centering
\subfigure[OA] {\includegraphics[width=.45\textwidth]{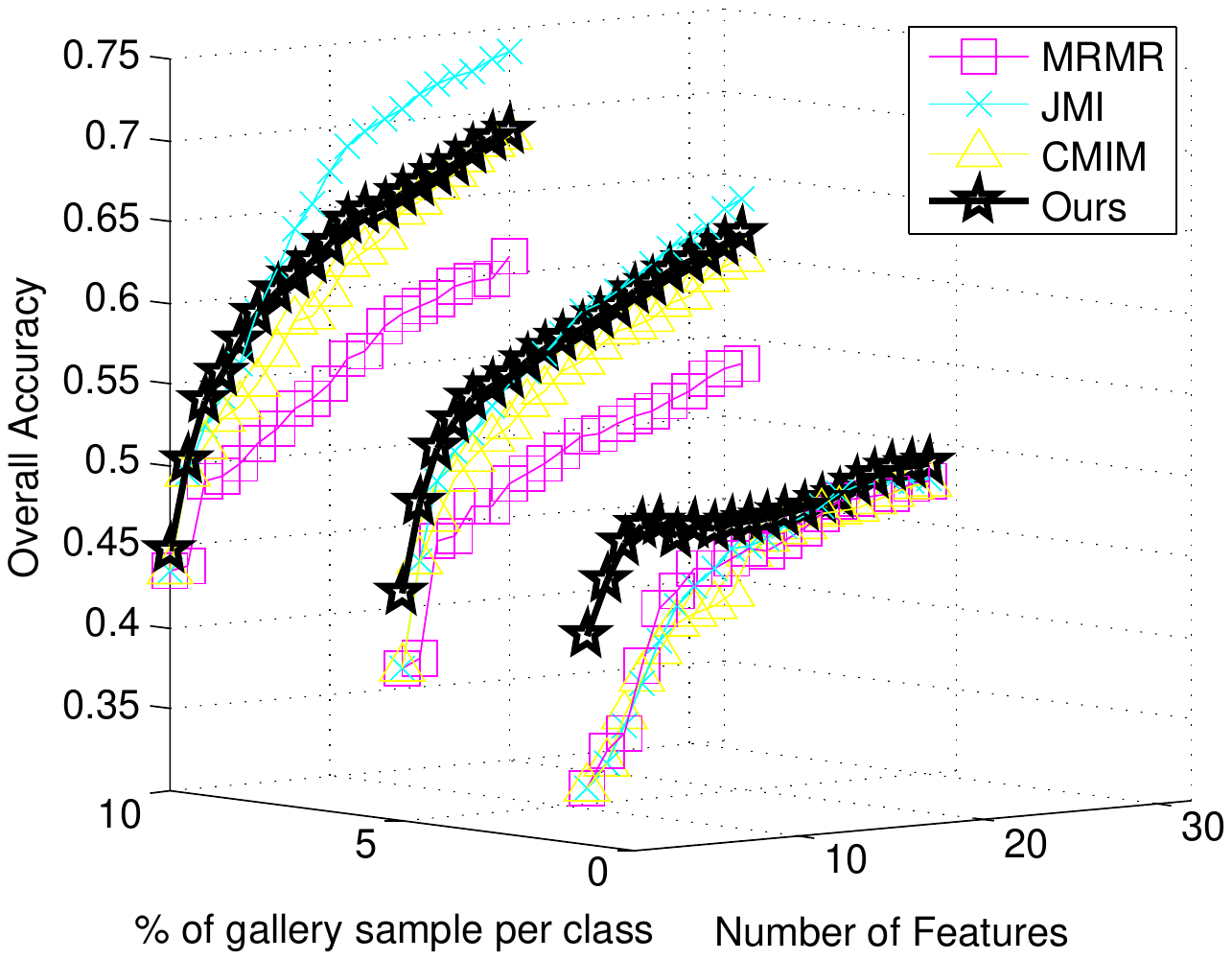}}
\subfigure[AA] {\includegraphics[width=.45\textwidth]{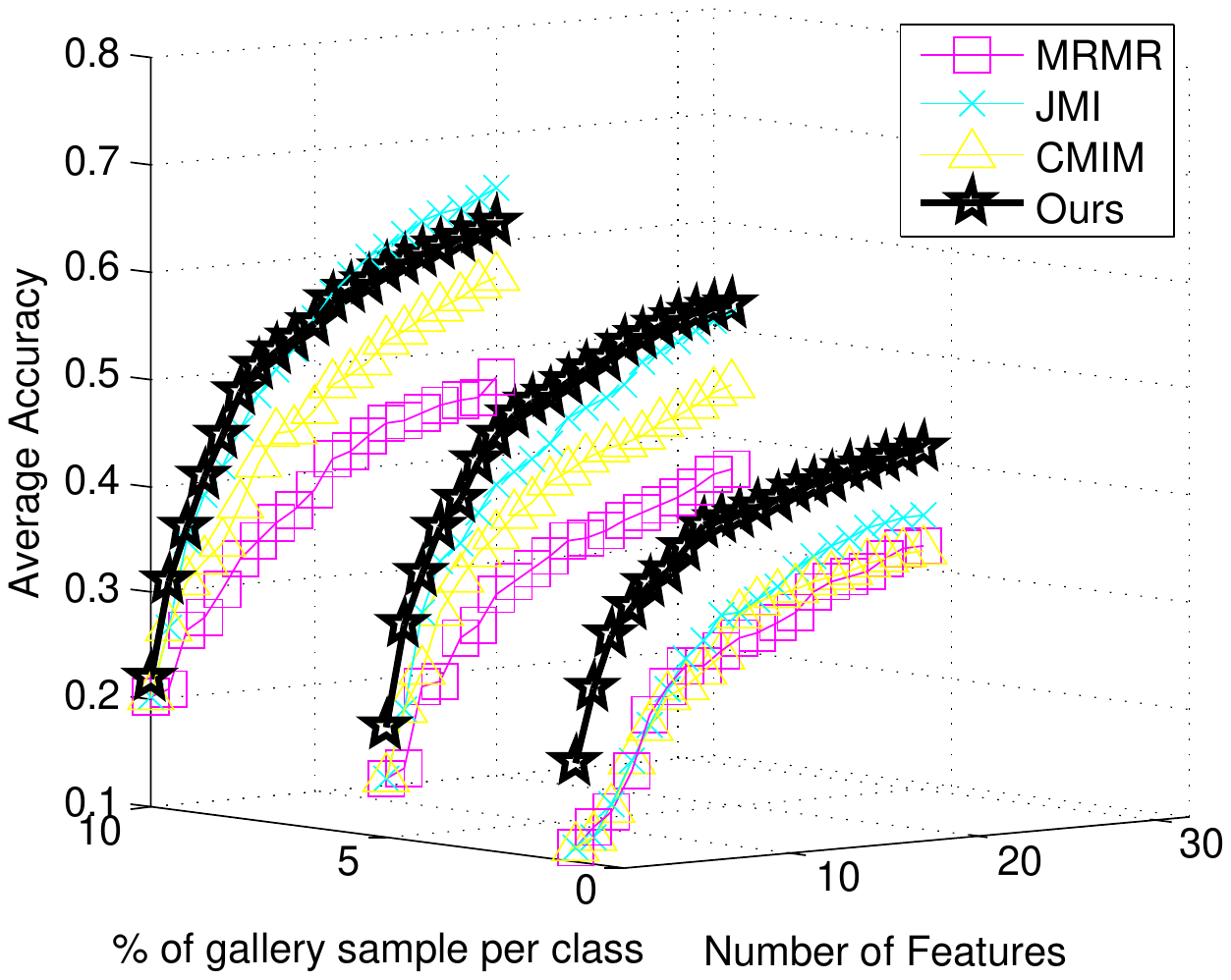}}
\caption{Overall accuracy (OA) and average accuracy (AA) of competing methods in three different gallery sizes ($1\%$, $5\%$ and $10\%$ gallery samples per class) on Indian Pine dataset. MIFS, FOU and MIM perform poorly in this example, and thus omitted. Our method is consistently better than other in the scenario of $1\%$ and $5\%$ gallery sizes, but inferior to JMI using $10\%$ gallery samples per class.}
\label{fig:indian_pine}
\end{figure}

The quantitative validation results are shown in Fig.~\ref{fig:indian_pine}. As can be seen, our method always provides consistently higher OA and AA values when the gallery is small. In the case of $1\%$ gallery samples, our OA values have a fluctuation after selecting $5$ bands. This is probably because the training samples are rather limited ($\sim7$ per class), thus a small perturbation in the classification hyperplane may result in a large change in OA or AA.
JMI outperforms our method given sufficient amount of training data. However, according to Fig.~\ref{fig:spectral_measurement}, the bands selected by JMI are not stable across $10$ runs, which makes JMI poor for interpretability~\cite{li2017feature}. In fact, the bands selected by our method are dispersed, which gives higher opportunity to provide complementary information, since the adjacent bands are rather redundant in HSI~\cite{feng2016multiple}. Meanwhile, these bands covers most regions with the large interval of the reflectance spectrums, indicating its highly discriminative ability for different categories~\cite{feng2016multiple}. Moreover, our method consistently selects bands $1$, $25$, $35$, $57$, $75$ and $89$ regardless of training data perturbations, which is consistent with previous work on band selection from different perspectives~\cite{jia2012unsupervised,yu2018band}, where the bands $1$, $23$-$27$, $30$-$36$, $57$, $75$-$78$ and $87$-$91$ are frequently selected.

\begin{figure}[!t]
\setlength{\abovecaptionskip}{0.cm}
\setlength{\belowcaptionskip}{-0.0cm}
\centering
\subfigure[JMI] {\includegraphics[width=.45\textwidth]{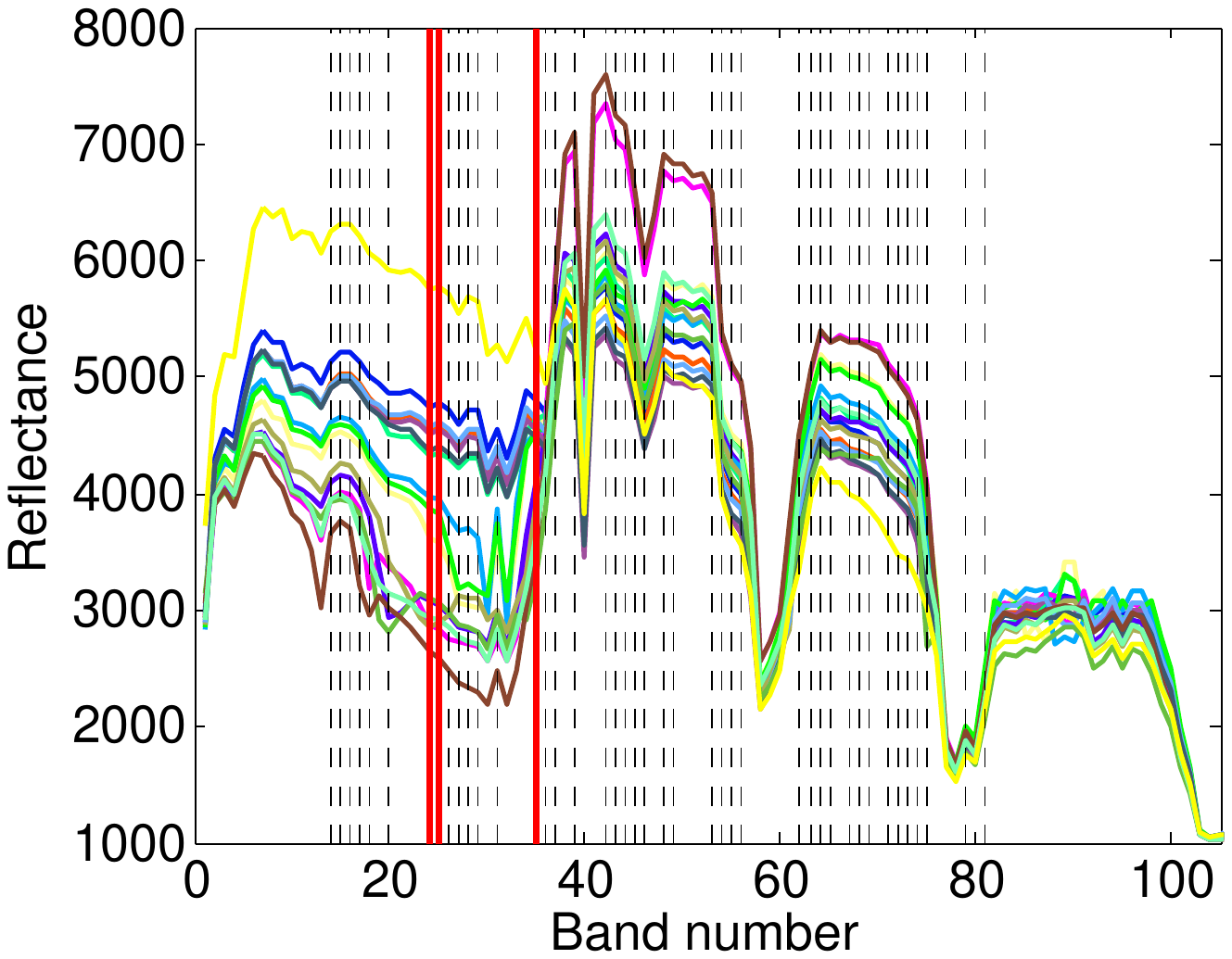}}
\subfigure[our method] {\includegraphics[width=.45\textwidth]{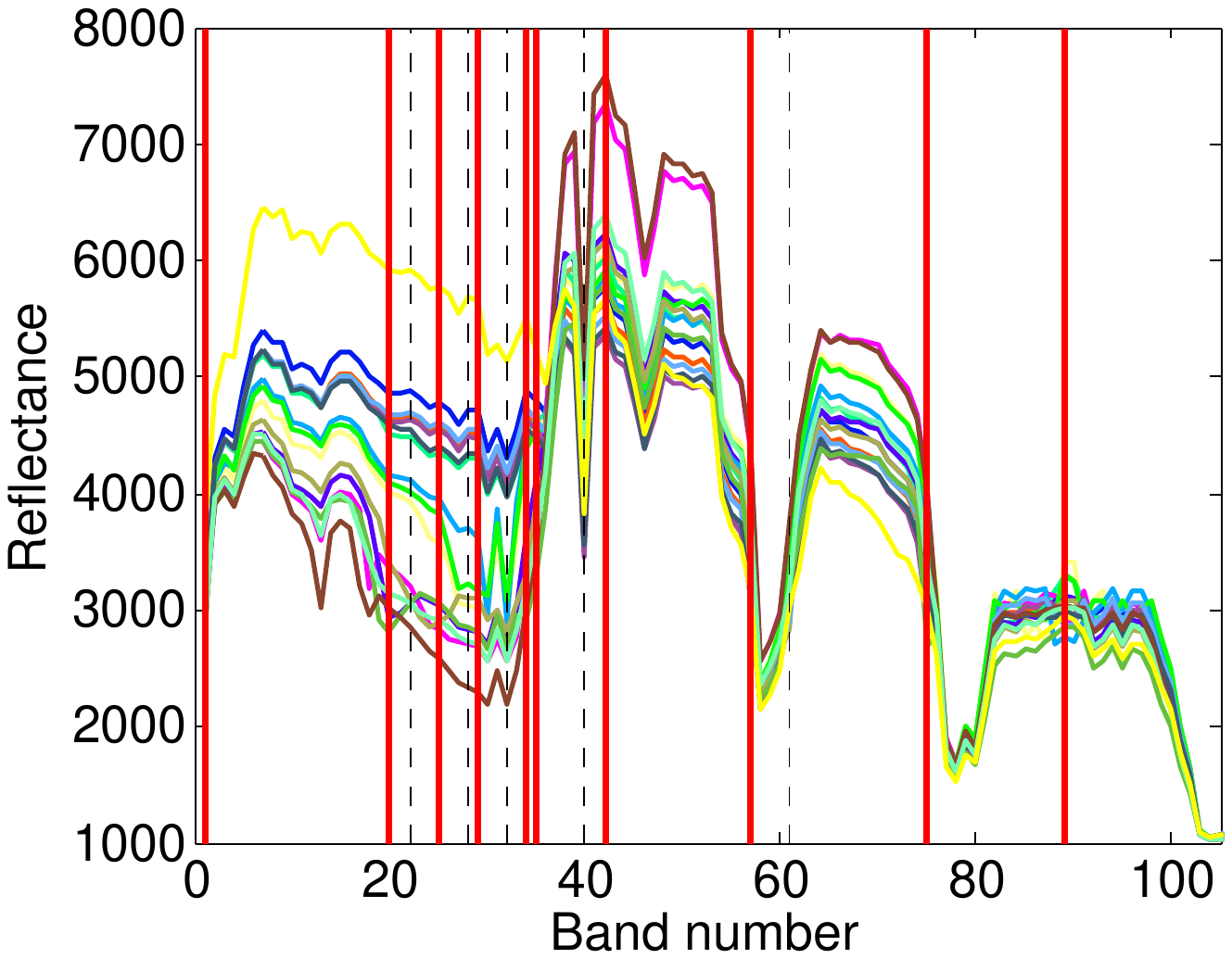}} \\
\subfigure[Legend of different land-cover categories] {\includegraphics[width=.90\textwidth]{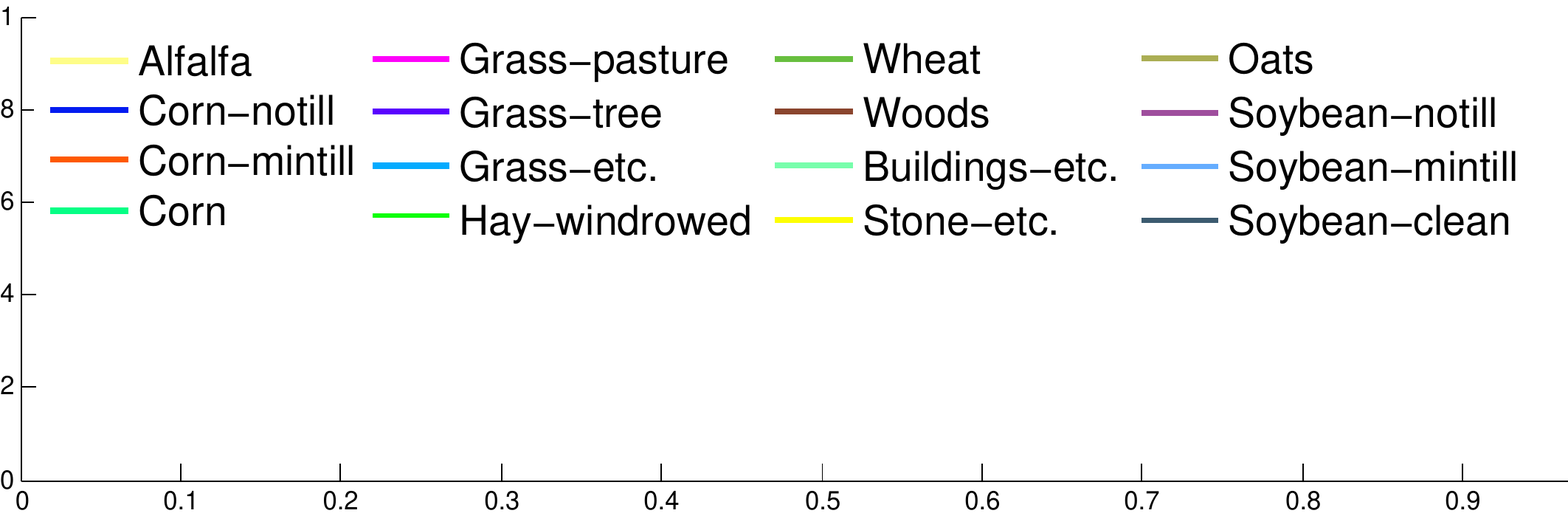}}
\caption{Reflectance of $16$ land-cover categories on the Indian Pine data. Only spectral bands between $1$ and $103$ are displayed, since the $10$ most significant bands selected by JMI and our method all lie in this region. In (a) and (b), the black dashed lines represent the bands selected by JMI and our method respectively over $10$ independent runs, whereas the red solid line highlight the bands that have been selected at least $7$ times in the $10$ runs. (c) shows the land-cover legend. Our method is stable with respect to training data perturbations.}
\label{fig:spectral_measurement}
\end{figure}

Finally, by referring to the classification maps shown in Fig.~\ref{fig:spectral_segmentation}, our method improves the region uniformity~\cite{feng2016multiple} of the grass-pasture, hay-windrowed and soybean-clean (marked with white rectangles) in comparison to JMI, although both methods offer similar OA and AA values.

\begin{figure*}[!t]
\setlength{\abovecaptionskip}{0.cm}
\setlength{\belowcaptionskip}{-0.0cm}
\centering
\subfigure[Ground truth] {\includegraphics[width=.32\textwidth]{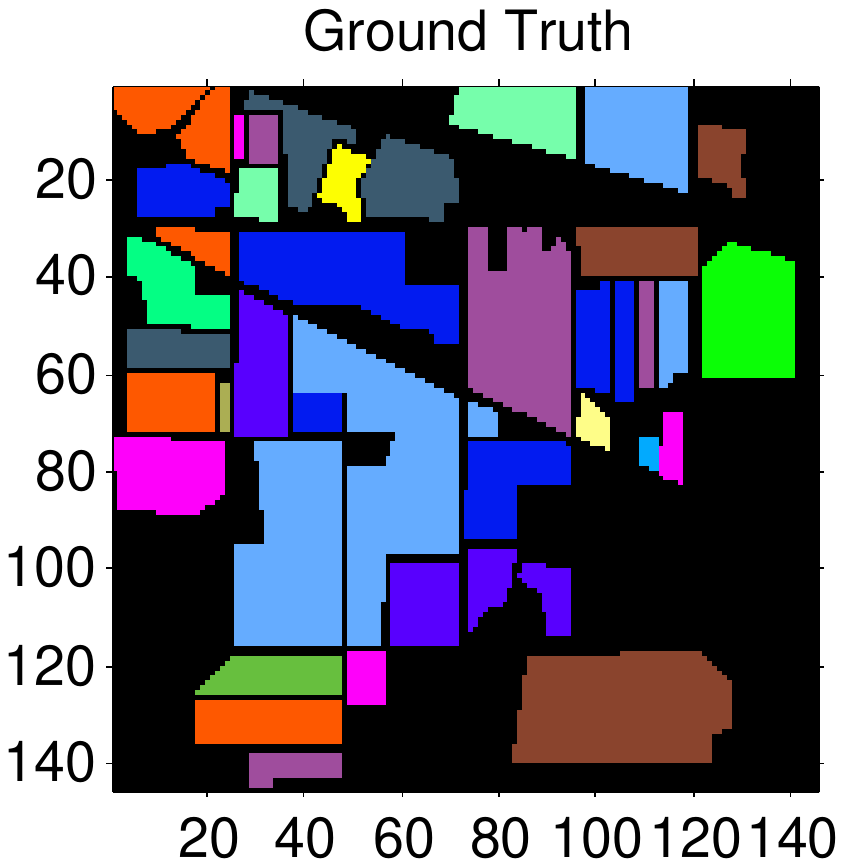}}
\subfigure[JMI] {\includegraphics[width=.32\textwidth]{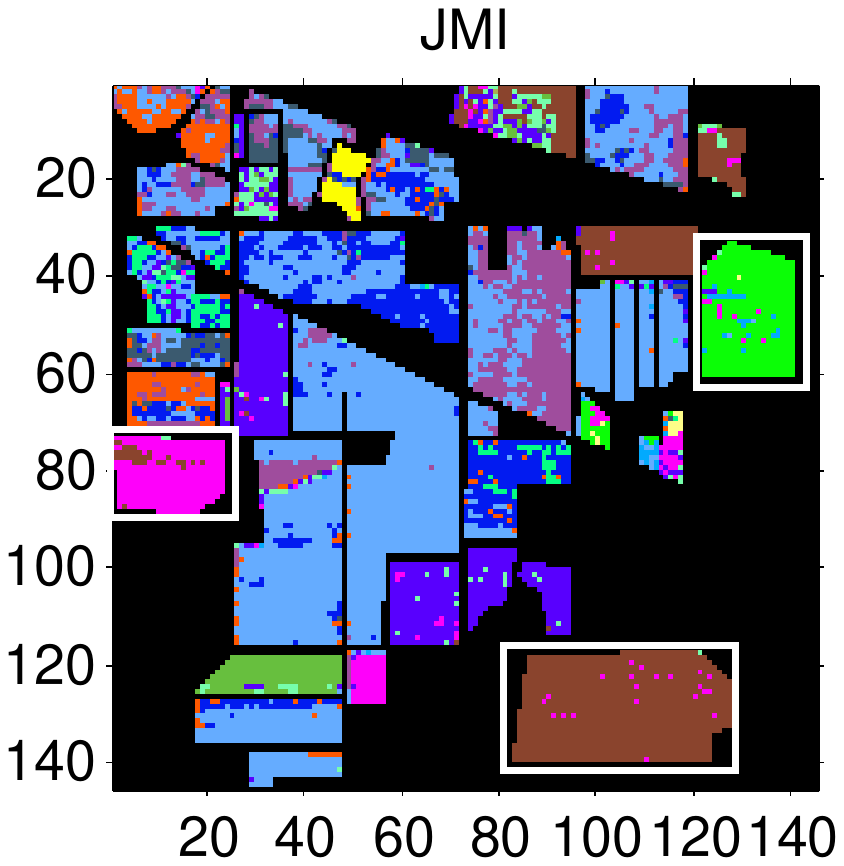}}
\subfigure[our method] {\includegraphics[width=.32\textwidth]{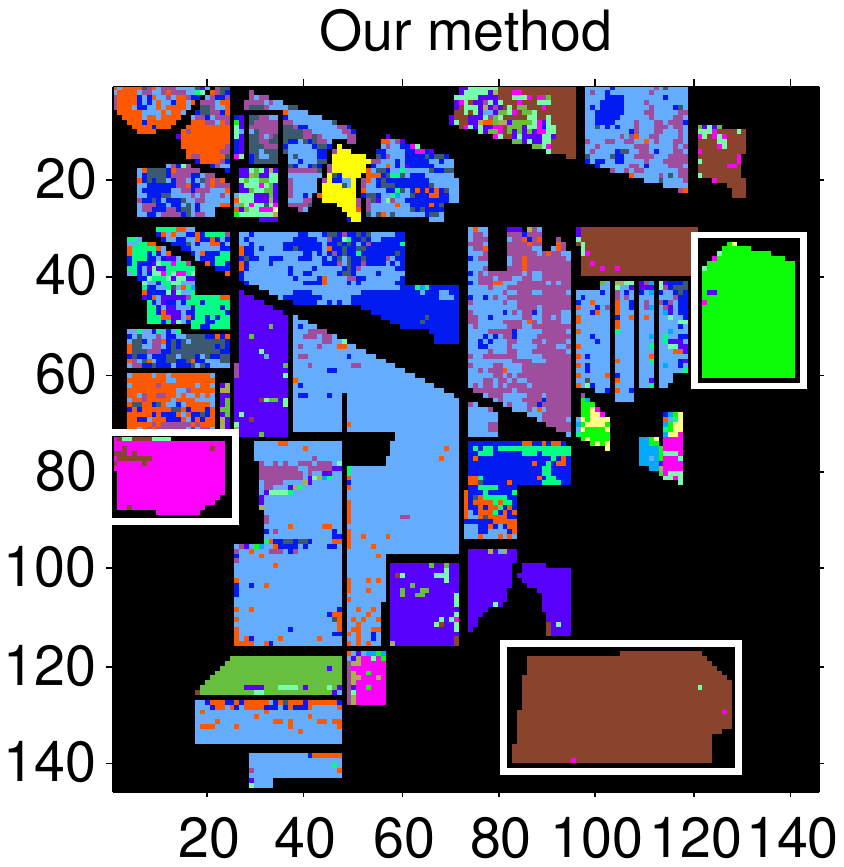}}
\caption{Classification maps on the Indian Pine data. $5\%$ samples from each class care selected to constitute gallery set. The legend is same to Fig.~\ref{fig:spectral_measurement}. Our method improves the uniformity of regions marked with white rectangles.}
\label{fig:spectral_segmentation}
\end{figure*}

\section{Conclusions} \label{conclusions}

In this paper, we generalize the matrix-based R{\'e}nyi's $\alpha$-order joint entropy to multiple variables. The new definition enables us to efficiently and effectively measure various multivariate interaction quantities, such as the interaction information and the total correlation. We finally present a real application on feature/band selection to show how our matrix definition works well, closely matching the ideal mutual information objective without any approximation or decomposition.

%In the future, we will explore the hyper-parameter setting to further improve feature selection performance. This is because our current result only shows overwhelming advantages on the initial selections. We expect a performance gain across all the number of features with carefully tuned hyper-parameters. At the same time, we are also interested in several other practical machine learning applications. Possible examples include understanding the learning dynamics of deep neural networks (DNNs) using information theory.

In the future, we will explore more machine learning applications in more complex scenarios involving high-dimensional data and complex dependence structure, such as understanding the learning dynamics of deep neural network (DNNs) with information theoretic concepts~\cite{yu2018understanding}. At the same time, we will investigate novel information theoretic objectives to further improve feature selection performance. One possible solution is to precisely determine the redundancy and synergy among different features using the partial information decomposition (PID) framework~\cite{williams2010nonnegative}.

\section*{Acknowledgment}
This work was funded in part by the U.S. ONR under grant N00014-18-1-2306, in part by the DARPA under grant FA9453-18-1-0039, and in part by the Norwegian Research Council FRIPRO grant no. 239844 on developing the Next Generation Learning Machines.

% An example of a floating table. Note that, for IEEE style tables, the
% \caption command should come BEFORE the table. Table text will default to
% \footnotesize as IEEE normally uses this smaller font for tables.
% The \label must come after \caption as always.
%
% \begin{table}
% increase table row spacing, adjust to taste
% \renewcommand{\arraystretch}{1.3}
% \caption{An Example of a Table}
% \label{table_example}
% \begin{center}
% Some packages, such as MDW tools, offer better commands for making tables
% than the plain LaTeX2e tabular which is used here.
% \begin{tabular}{|c||c|}
% \hline
% One & Two\\
% \hline
% Three & Four\\
% \hline
% \end{tabular}
% \end{center}
% \end{table}

% can use a bibliography generated by BibTeX as a .bbl file
% standard IEEE bibliography style from:
% http://www.ctan.org/tex-archive/macros/latex/contrib/supported/IEEEtran/bibtex
\bibliographystyle{IEEEabrv}
% argument is your BibTeX string definitions and bibliography database(s)
\bibliography{ITL_multivariate}

\appendix
\subsection{Feature selection results with $\alpha=0.6$}\label{appendix_A}

We first present feature selection results with $\alpha=0.6$. The quantitative evaluation results are shown in Fig.~\ref{fig:real_data_0_6} and Table~\ref{lab:difference_summarization_0_6}. As can be seen, the advantage of our method becomes more obvious, even superior to the case of $\alpha=1.01$. One possible reason is that the closer $\alpha$ is to zero, the more relevance of our estimator is put on the tails of the distribution, hence the better the method will be in the boundaries between classes (since classification uses a counting norm). However, practitioners should also be aware that the estimation with small $\alpha$ requires more samples to get a better consistency of the tails, which do not always occur. These results further confirm our argument that values of $\alpha$ lower than $2$ are preferred in the application of feature selection.

\begin{figure*}[!t]
\setlength{\abovecaptionskip}{0.cm}
\setlength{\belowcaptionskip}{-0.0cm}
\centering
\subfigure[MADELON] {\includegraphics[width=.23\textwidth,height=3cm]{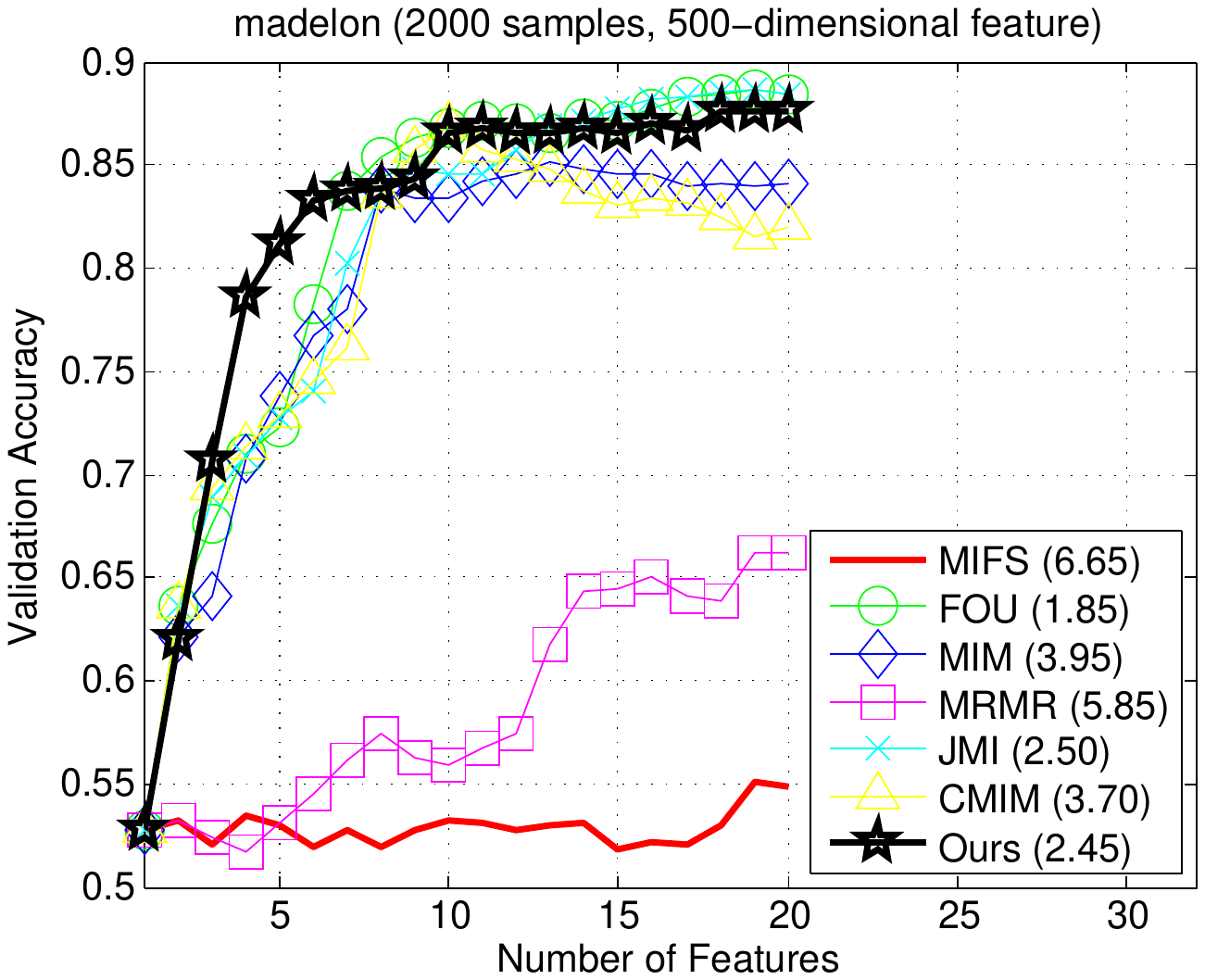}}
\subfigure[breast] {\includegraphics[width=.23\textwidth,height=3cm]{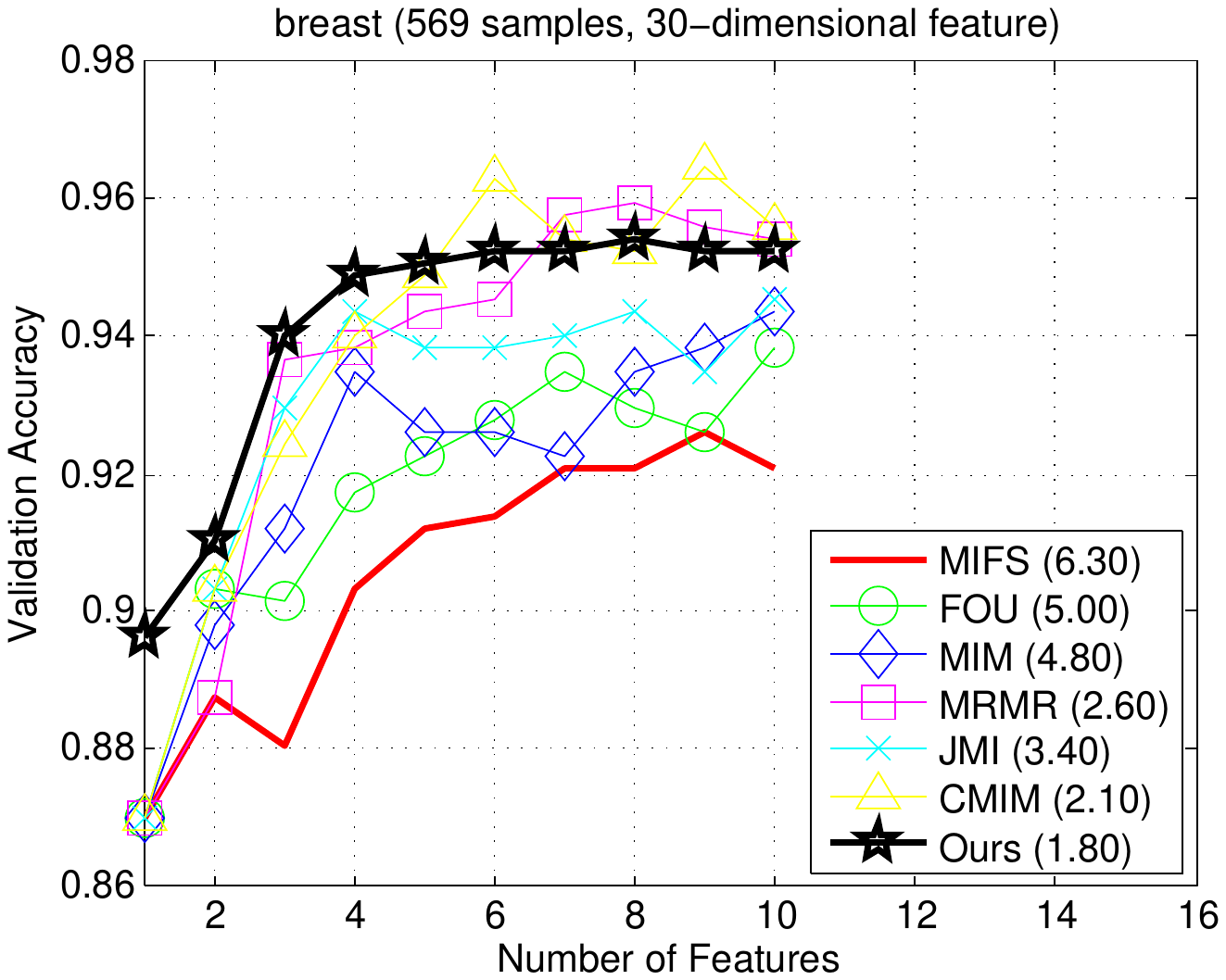}}
\subfigure[semeion] {\includegraphics[width=.23\textwidth,height=3cm]{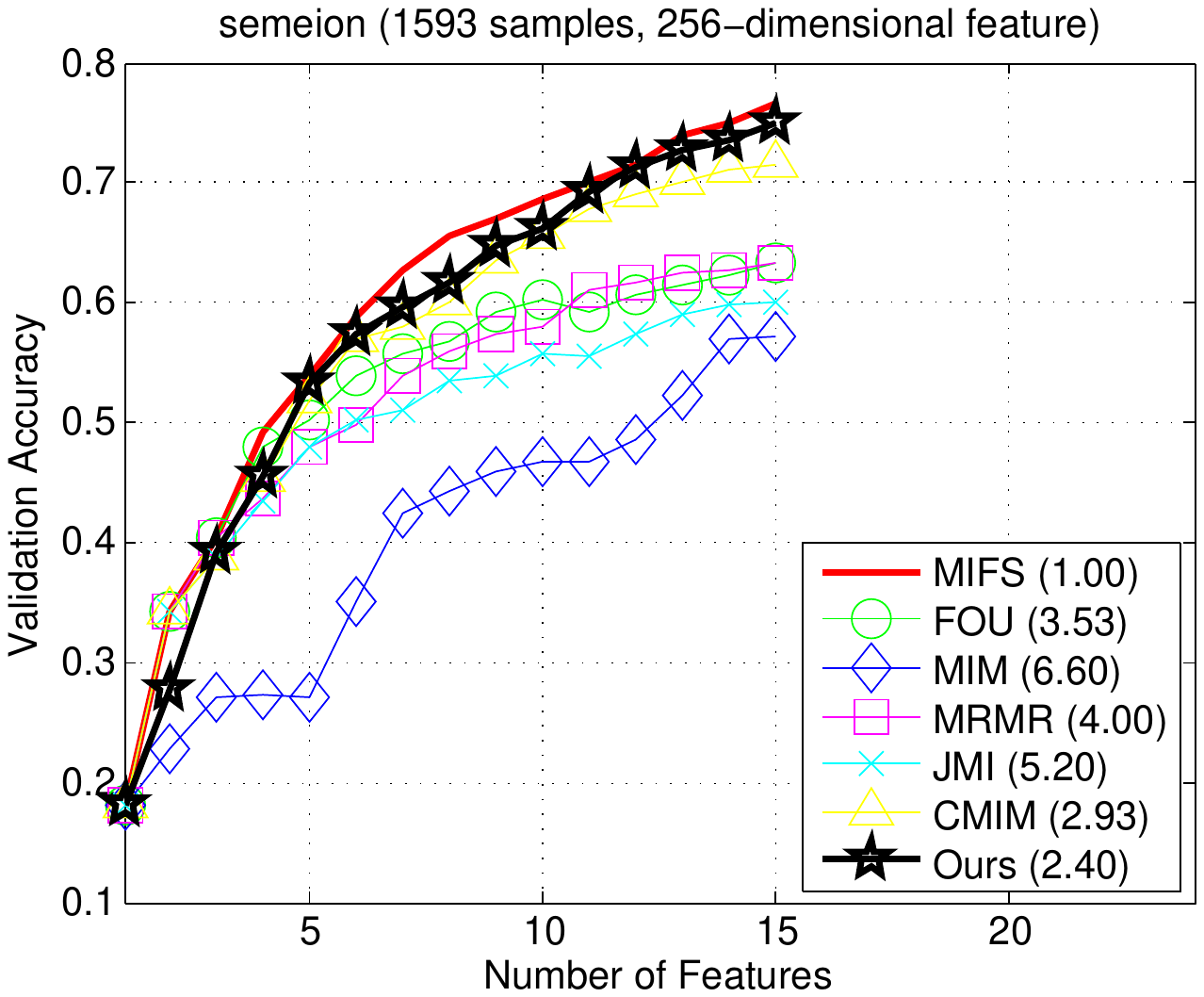}}
\subfigure[waveform] {\includegraphics[width=.23\textwidth,height=3cm]{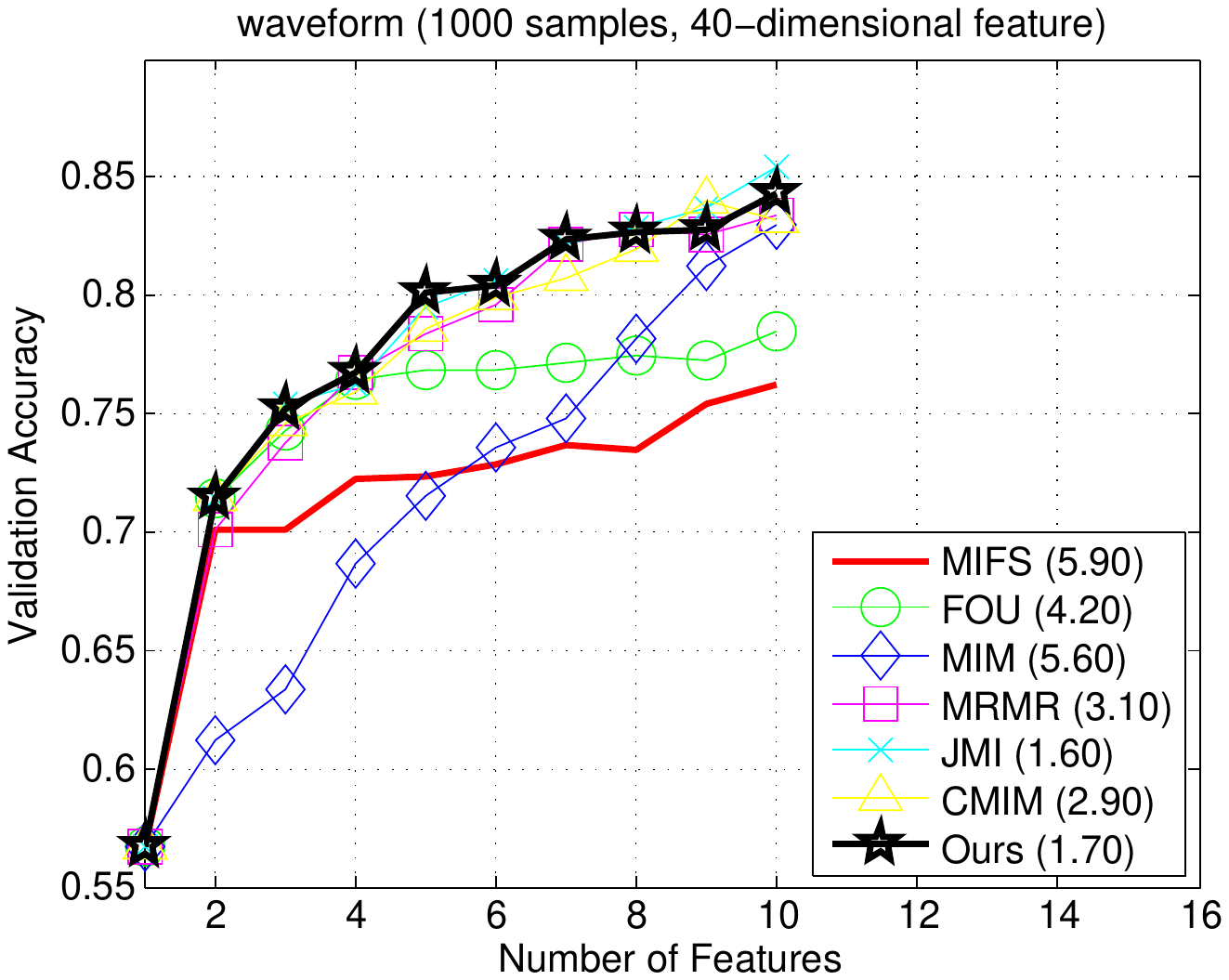}} \\
\subfigure[Lung] {\includegraphics[width=.23\textwidth,height=3cm]{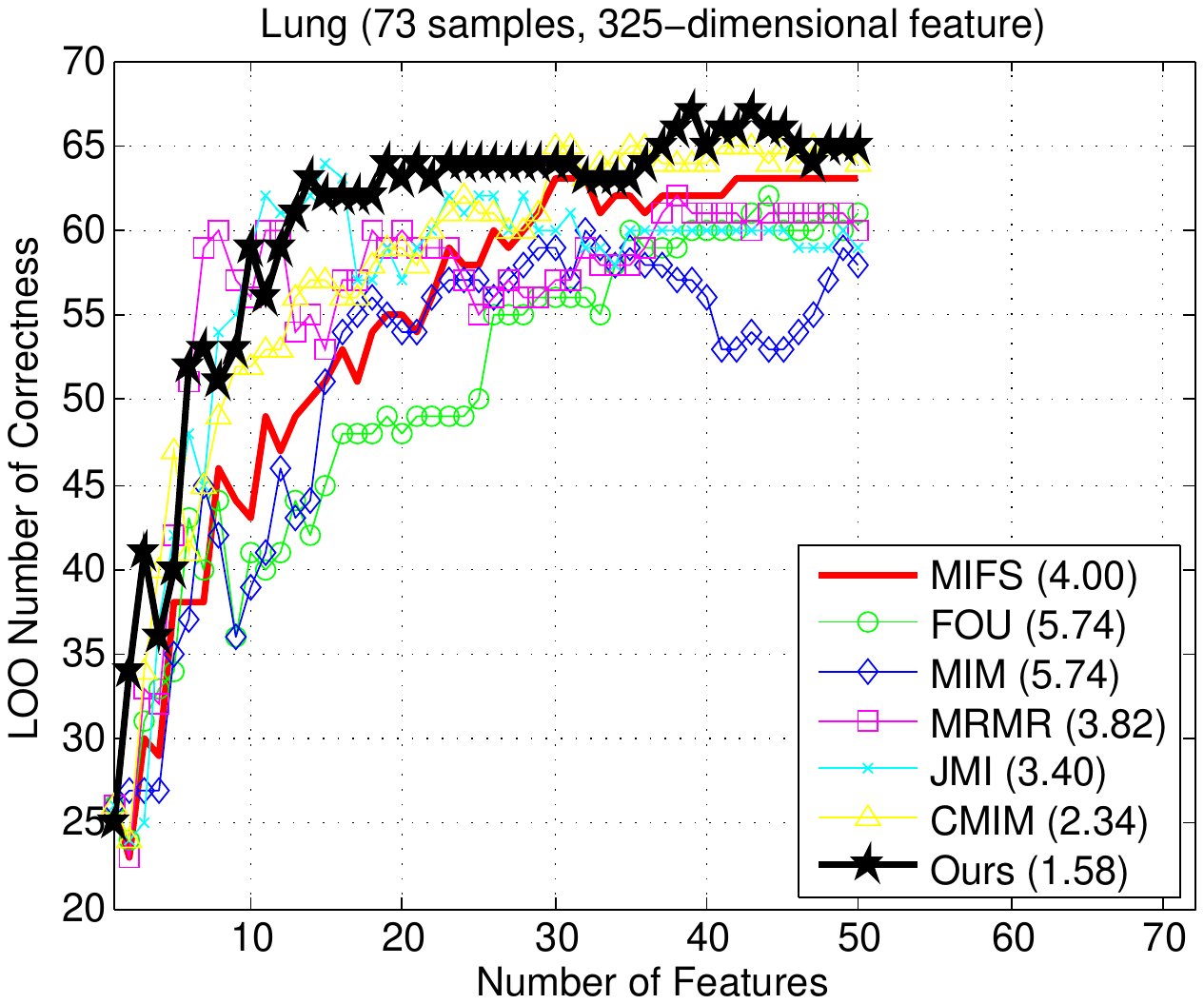}}
\subfigure[Lymph] {\includegraphics[width=.23\textwidth,height=3cm]{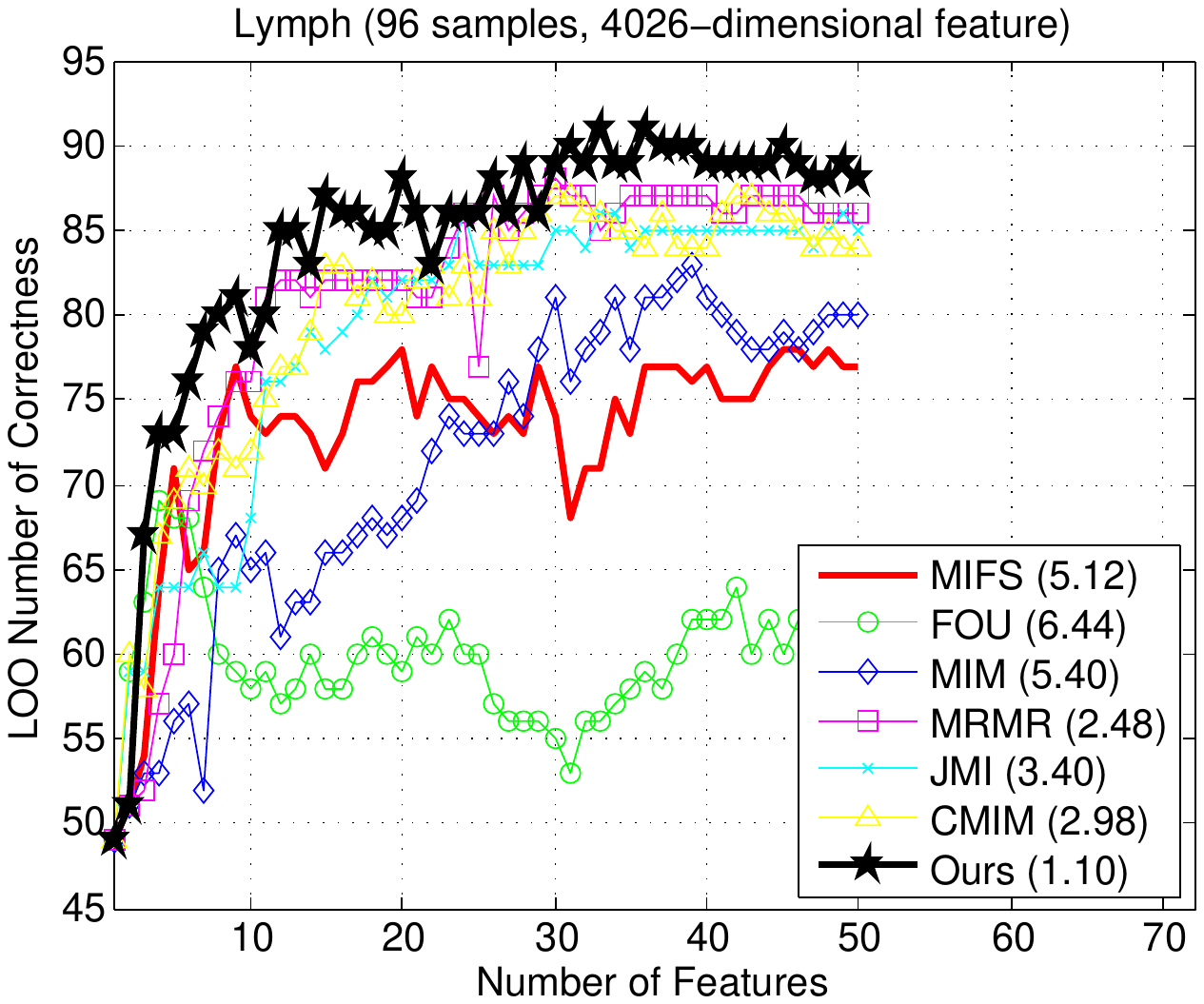}}
\subfigure[ORL] {\includegraphics[width=.23\textwidth,height=3cm]{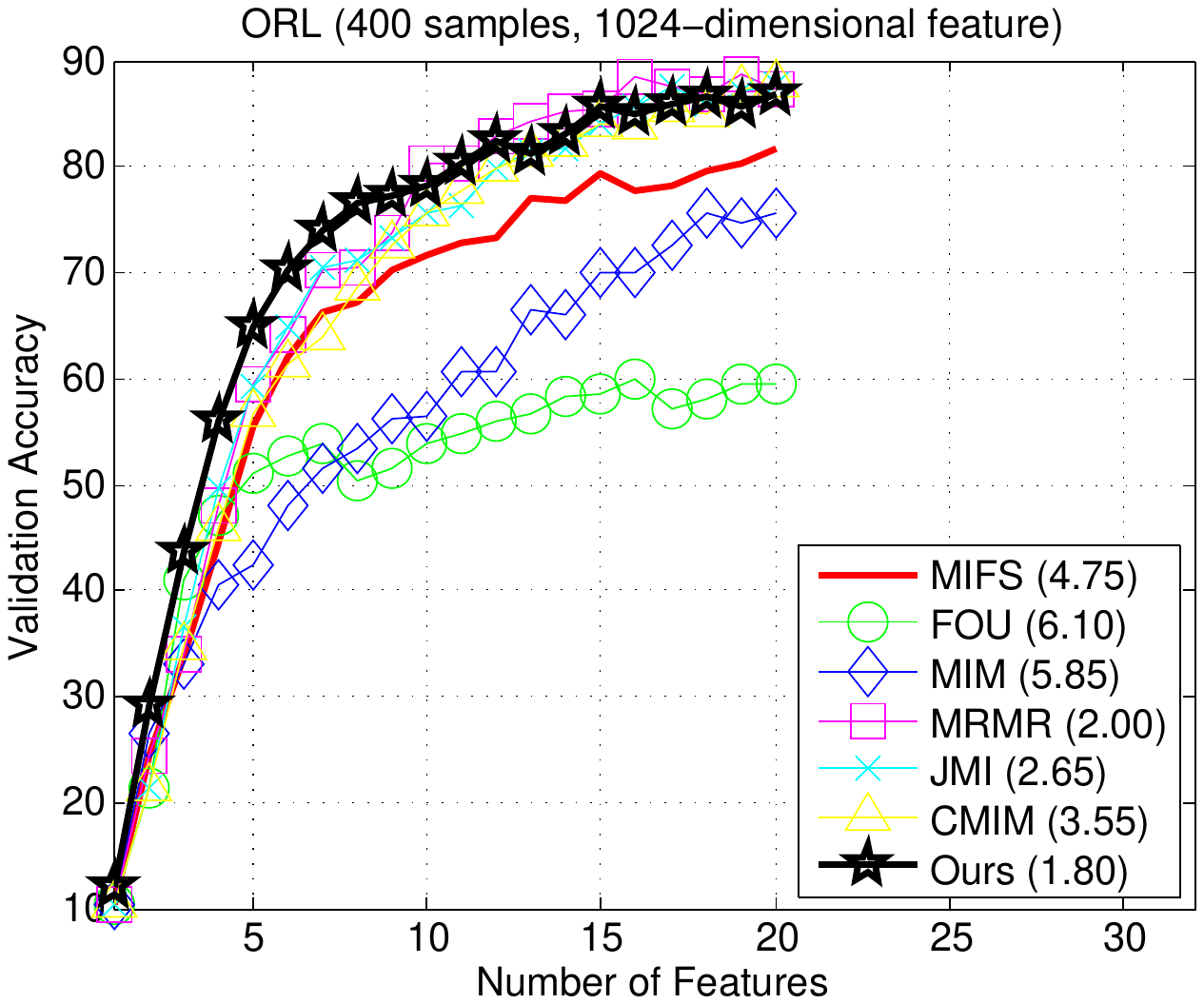}}
\subfigure[warpPIE10P] {\includegraphics[width=.23\textwidth,height=3cm]{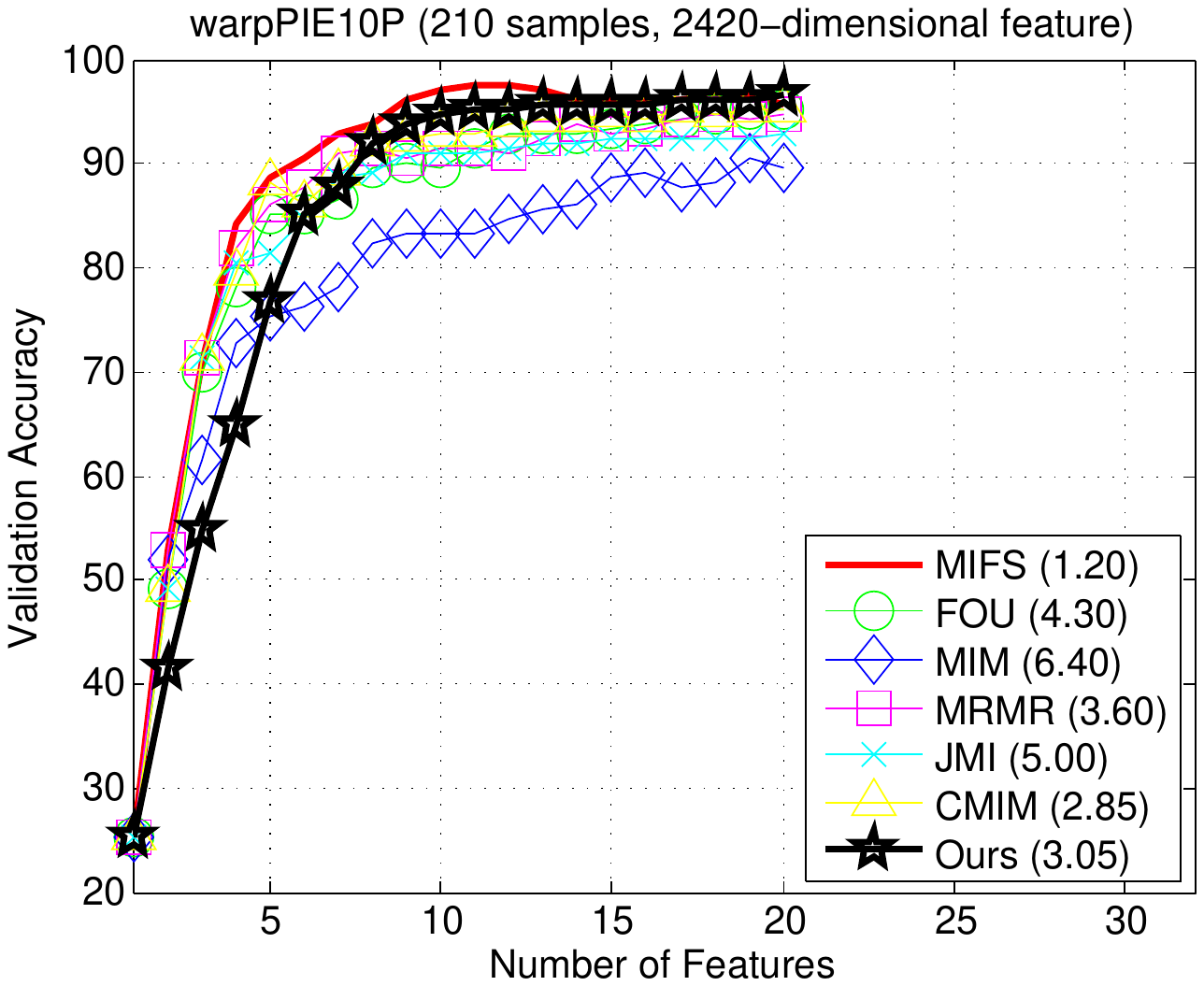}}
\caption{Validation accuracy or Leave-one-out (LOO) results on synthetic and real datasets. The number of samples and the feature dimensionality for each dataset are listed in the title. The value beside each method in the legend indicates the average rank in that dataset. Our method ($\alpha=0.6$) achieves the best or second-best performance in most datasets regardless of data characteristics.}
\label{fig:real_data_0_6}
\end{figure*}

\begin{table*}
\tiny
\centering
\caption{A summarization of different information-theoretic feature selection methods and their average ranks over different number of features in each dataset. The overall average ranks over different datasets are also reported. The best two performance in each dataset are marked with \textcolor{red}{red} and \textcolor{blue}{blue} respectively.}\label{lab:difference_summarization_0_6}
\begin{tabular}{ccccccccccc}\hline
 & Criteria & MADELON & breast & semeion & waveform & Lung & Lymph & ORL & PIE10P & average \\\hline
$\text{MIFS}$~\cite{battiti1994using} & $\mathbf{I}(X_{i_k};Y)-\beta\sum\limits_{l=1}^{k-1}\mathbf{I}(X_{i_k};X_{i_l})$ & $6.65$ & $6.30$ & $\color{red}{1.00}$ & $5.90$ & $4.00$ & $5.12$ & $4.75$ & $\color{red}{1.20}$ & $4.75$ \\
$\text{FOU}$~\cite{brown2009new} & $\mathbf{I}(X_{i_k};Y)-\sum\limits_{l=1}^{k-1}[\mathbf{I}(X_{i_k};X_{i_l})-\mathbf{I}(X_{i_k};X_{i_l}|Y)]$ & $\color{red}{1.85}$ & $5.00$ & $3.53$ & $4.20$ & $5.74$ & $6.44$ & $6.10$ & $4.30$ & $5.19$ \\
$\text{MIM}$~\cite{lewis1992feature} & $\mathbf{I}(X_{i_k};Y)$ & $3.95$ & $4.80$ & $6.60$ & $5.60$ & $5.74$ & $5.40$ & $5.85$ & $6.40$ & $6.06$ \\
$\text{MRMR}$~\cite{peng2005feature} & $\mathbf{I}(X_{i_k};Y)-\frac{1}{k-1}\sum\limits_{l=1}^{k-1}\mathbf{I}(X_{i_k};X_{i_l})$ & $5.85$ & $2.60$ & $4.00$ & $3.10$ & $3.82$ & $\color{blue}{2.48}$ & $\color{blue}{2.00}$ & $3.60$ & $3.75$ \\
$\text{JMI}$~\cite{yang2000data} & $\sum\limits_{l=1}^{k-1}\mathbf{I}(\{X_{i_k},X_{i_l}\};Y)$ & $2.50$ & $3.40$ & $5.20$ & $\color{red}{1.60}$ & $3.40$ & $3.40$ & $2.65$ & $5.00$ & $3.75$ \\
$\text{CMIM}$~\cite{fleuret2004fast} & $\min\limits_l\mathbf{I}(X_{i_k};Y|X_{i_l})$ & $3.70$ & $\color{blue}{2.10}$ & $2.93$ & $2.90$ & $\color{blue}{2.34}$ & $2.98$ & $3.55$ & $\color{blue}{2.85}$ & $\color{blue}{2.88}$ \\
$\text{Ours}~(\alpha=0.6)$ & $\mathbf{I}(\{X_{i_1},X_{i_2},\cdots,X_{i_k}\};Y)$ & $\color{blue}{2.45}$ & $\color{red}{1.80}$ & $\color{blue}{2.40}$ & $\color{blue}{1.70}$ & $\color{red}{1.58}$ & $\color{red}{1.10}$ & $\color{red}{1.80}$ & $3.05$ & $\color{red}{1.63}$ \\\hline
\end{tabular}
\end{table*}

\newpage
\subsection{Feature selection results with $\alpha=2$}\label{appendix_B}

We finally present feature selection results with $\alpha=2$. Note that, $\alpha=2$ is not recommended for the application of feature selection which involves classification. This is because classification uses a counting norm, attention should be paid on tails of the distribution or multiple modalities, thus values of $\alpha$ lower than $2$ are preferred. In fact, according to the quantitative evaluation results shown in Fig.~\ref{fig:real_data_2} and Table~\ref{lab:difference_summarization_2}, the performance of our method with $\alpha=2$ is decreased compared to the cases of $\alpha=1.01$ and $\alpha=0.6$.

\begin{figure*}[!t]
\setlength{\abovecaptionskip}{0.cm}
\setlength{\belowcaptionskip}{-0.0cm}
\centering
\subfigure[MADELON] {\includegraphics[width=.23\textwidth,height=3cm]{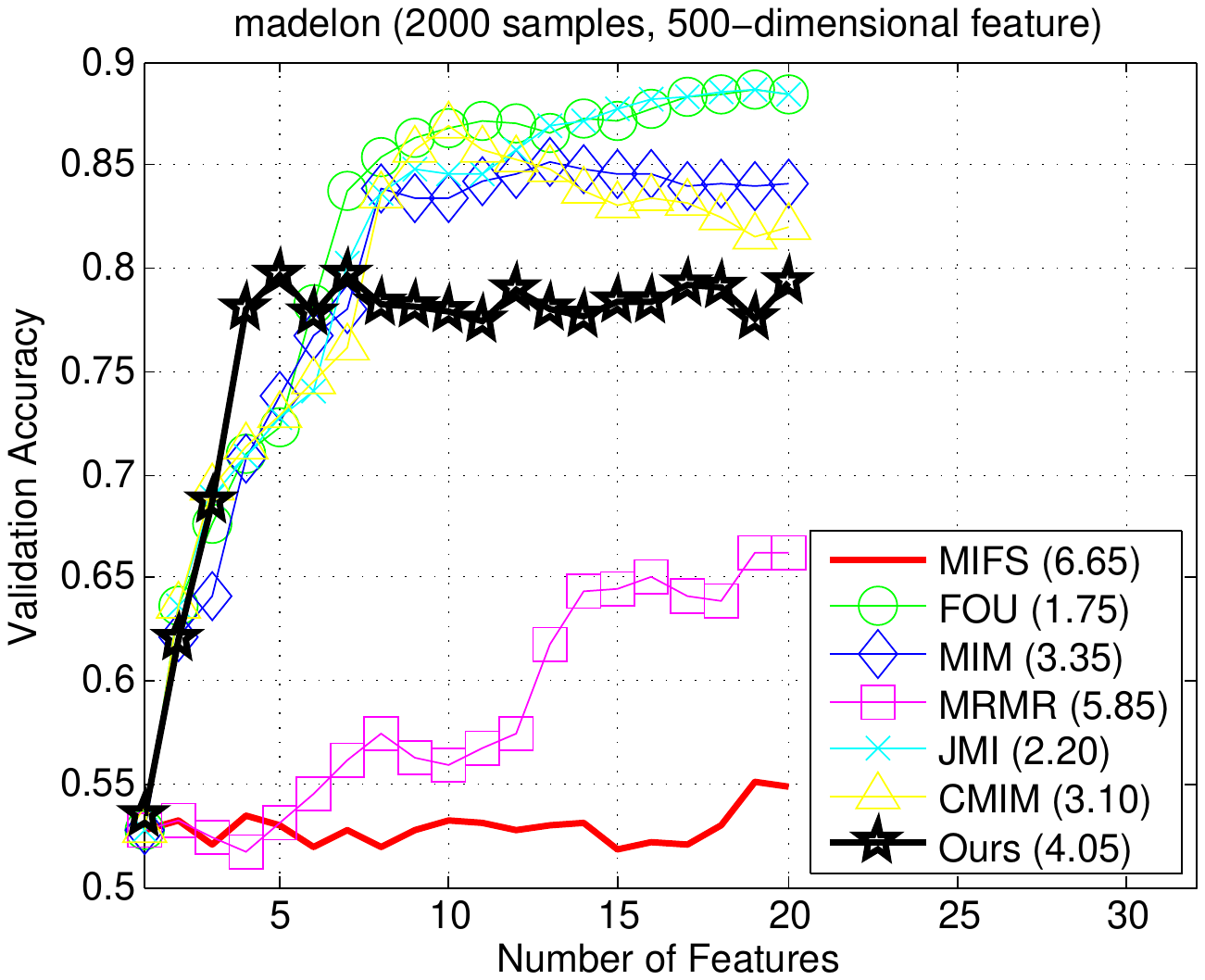}}
\subfigure[breast] {\includegraphics[width=.23\textwidth,height=3cm]{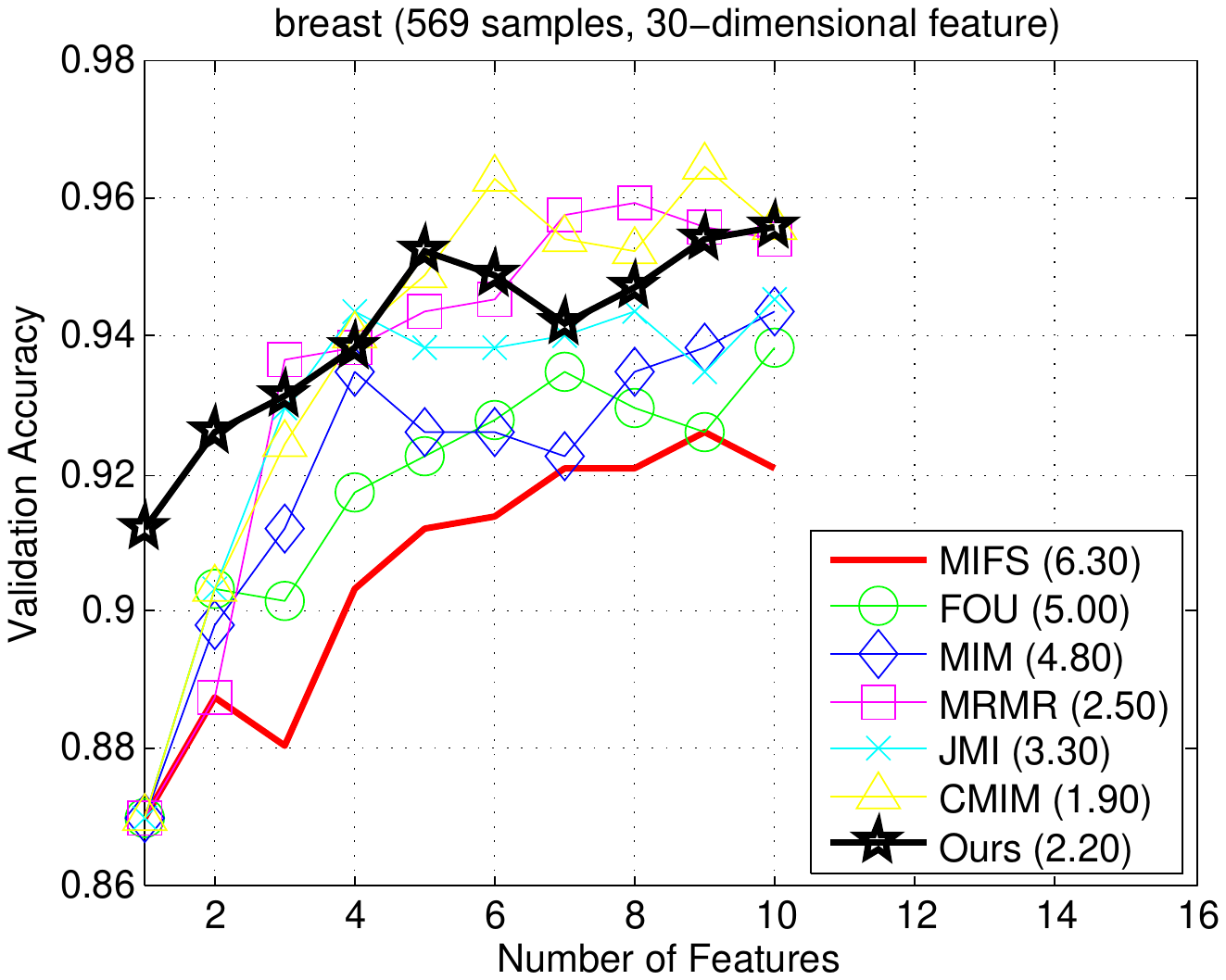}}
\subfigure[semeion] {\includegraphics[width=.23\textwidth,height=3cm]{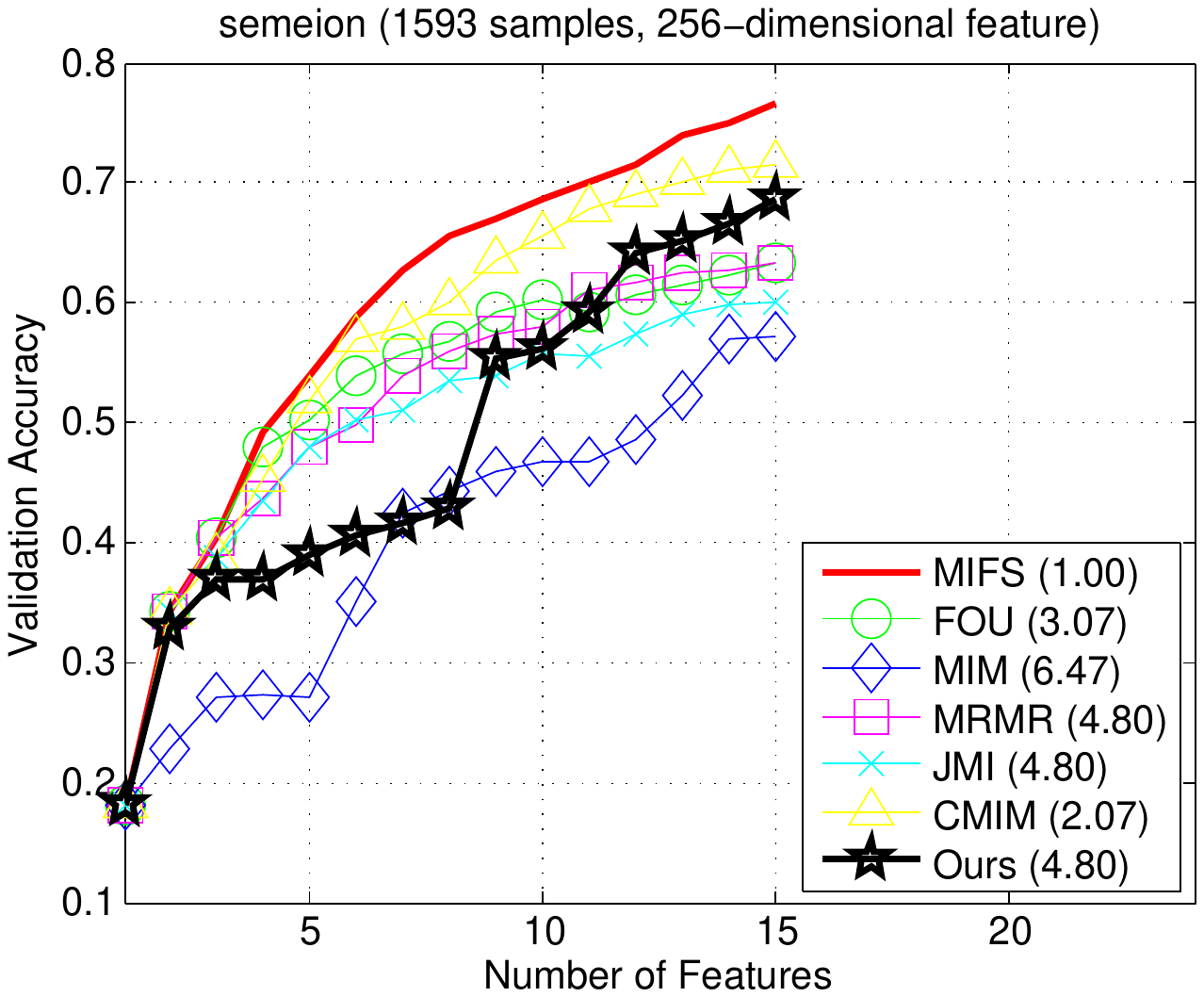}}
\subfigure[waveform] {\includegraphics[width=.23\textwidth,height=3cm]{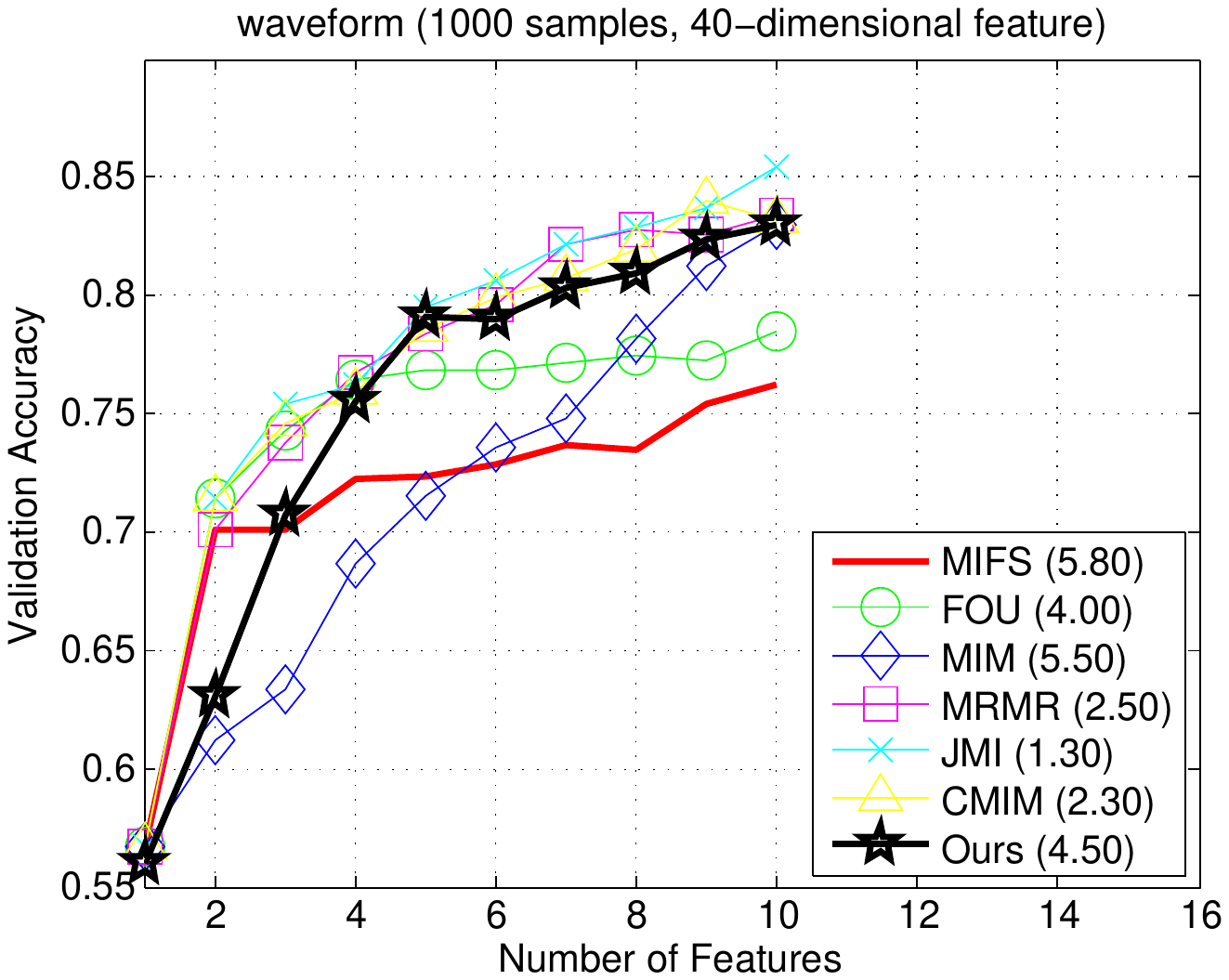}} \\
\subfigure[Lung] {\includegraphics[width=.23\textwidth,height=3cm]{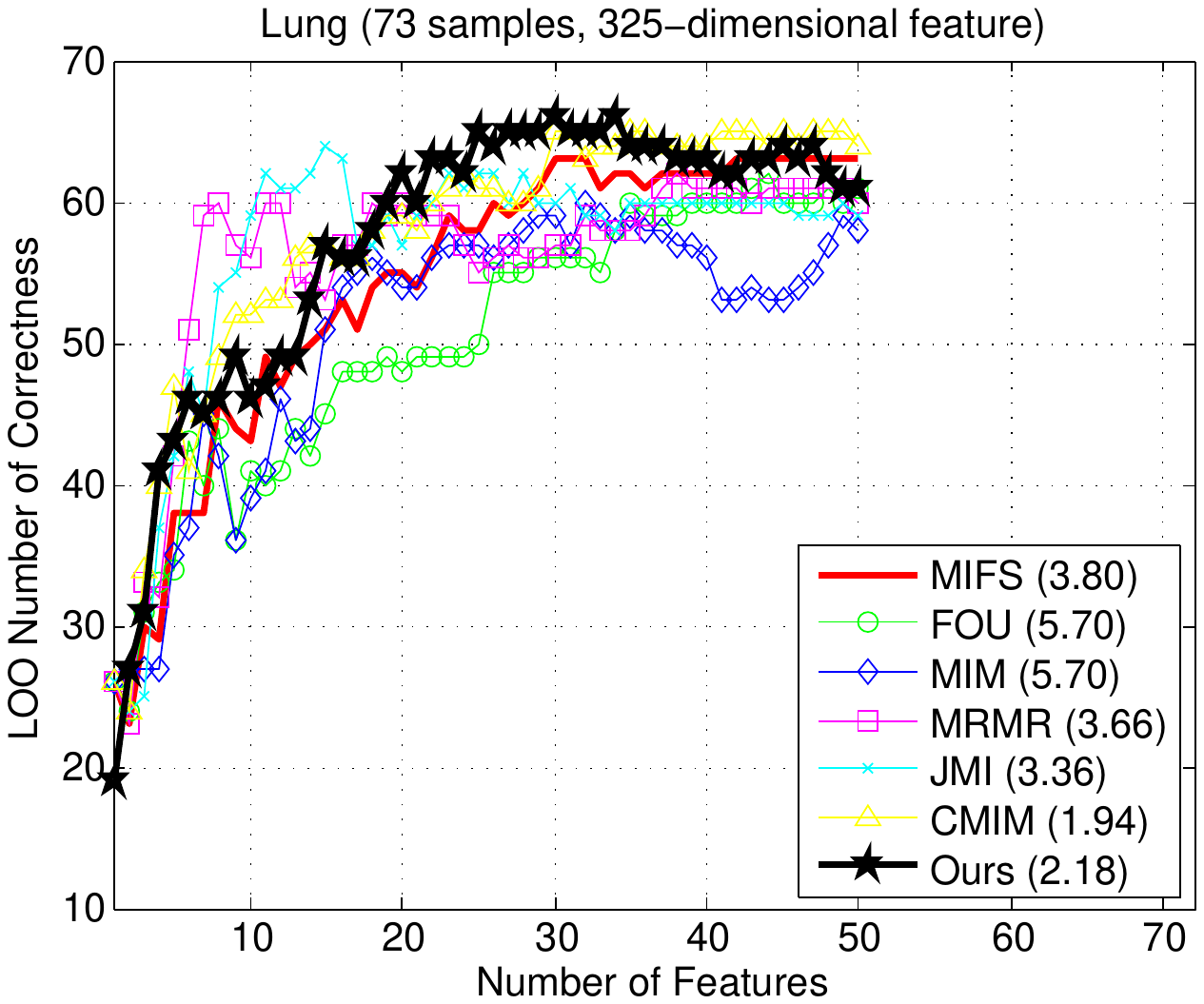}}
\subfigure[Lymph] {\includegraphics[width=.23\textwidth,height=3cm]{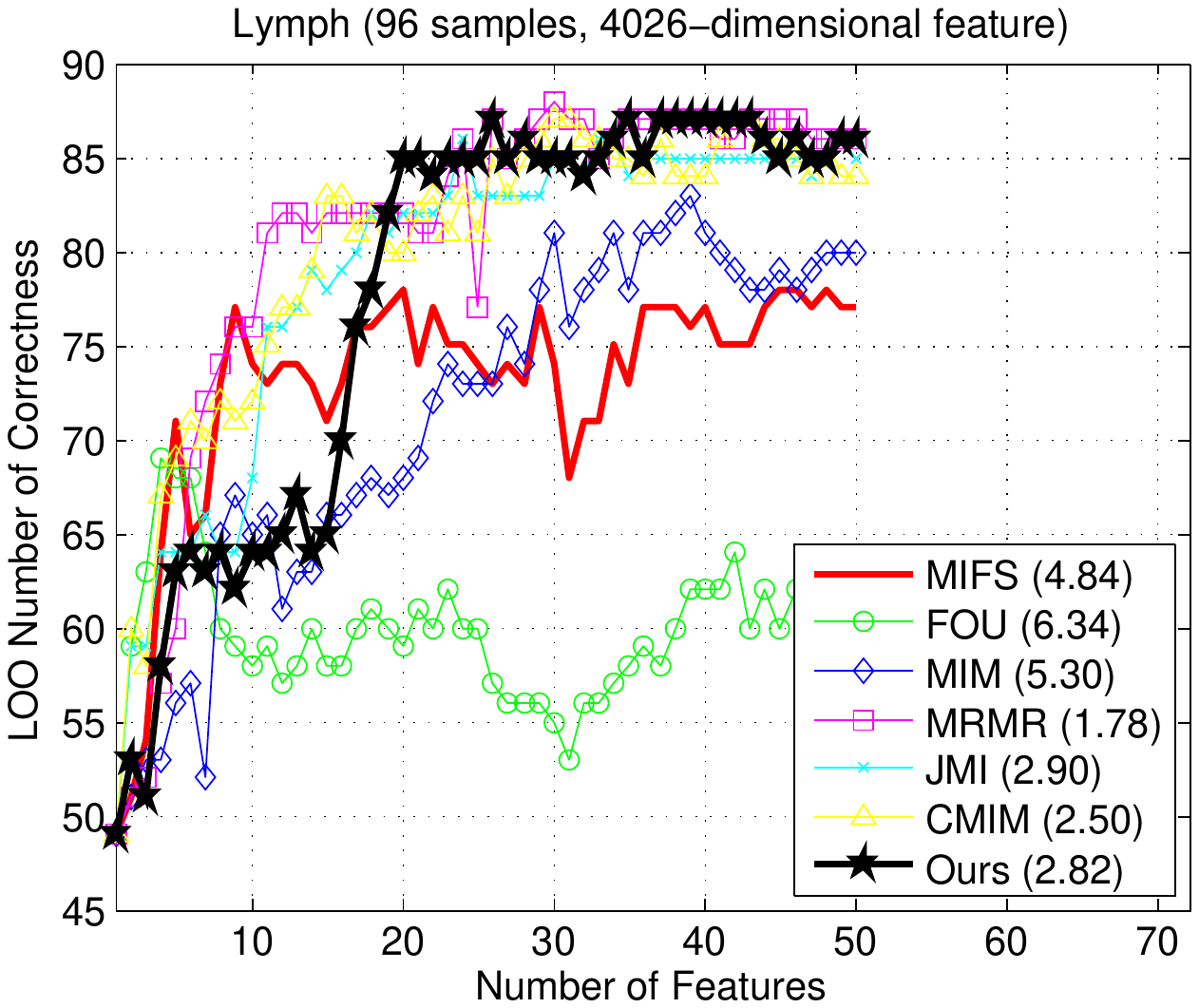}}
\subfigure[ORL] {\includegraphics[width=.23\textwidth,height=3cm]{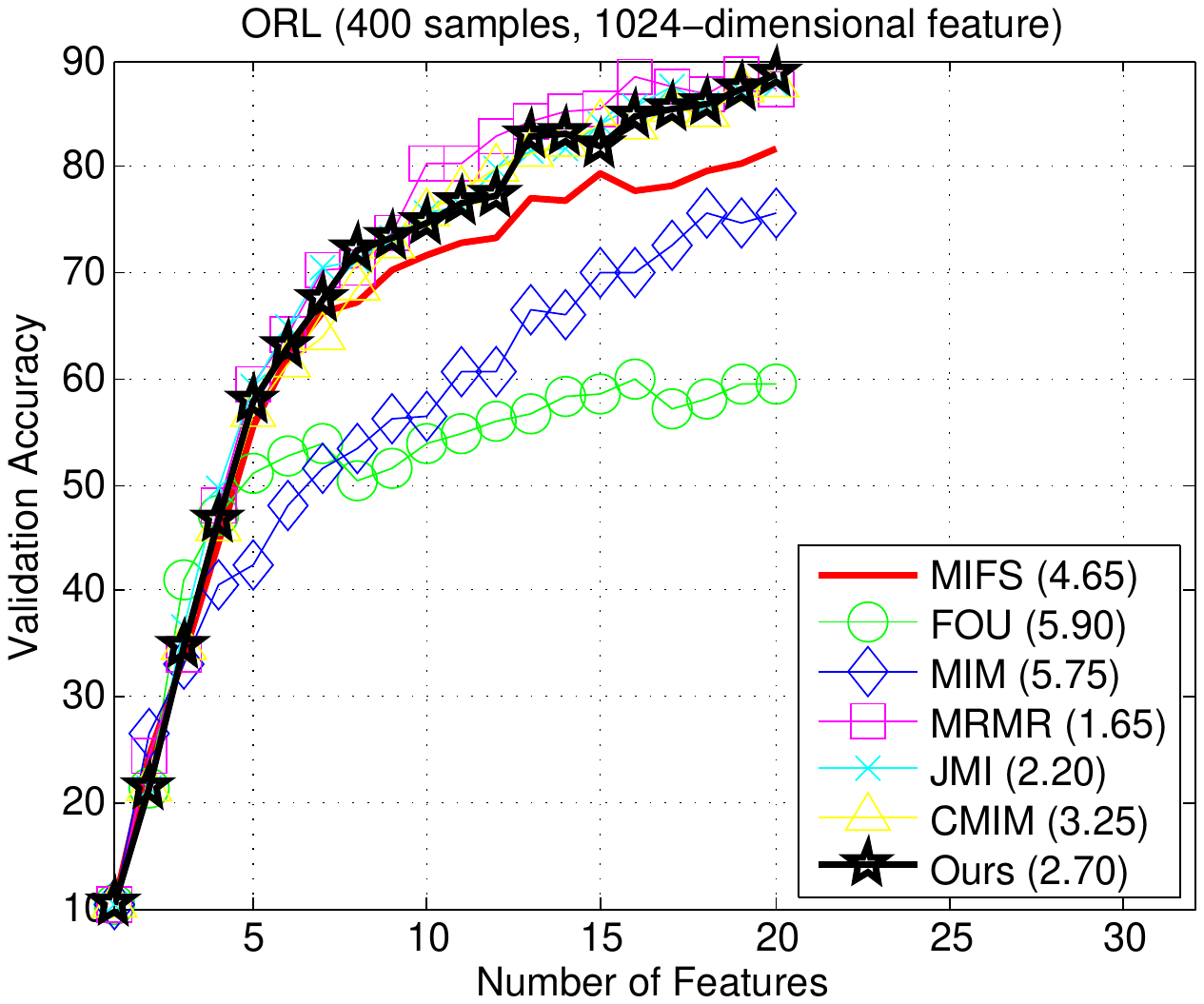}}
\subfigure[warpPIE10P] {\includegraphics[width=.23\textwidth,height=3cm]{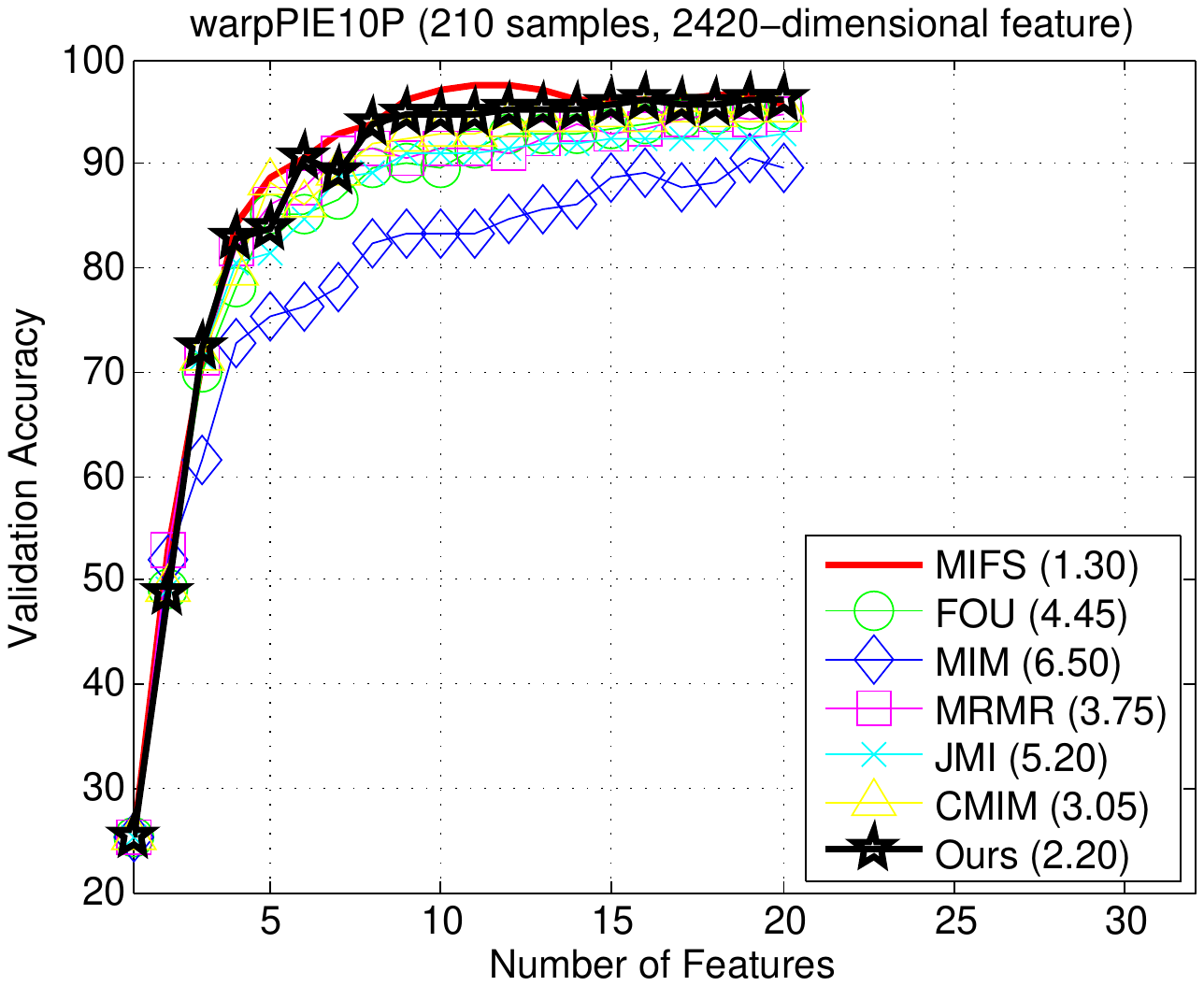}}
\caption{Validation accuracy or Leave-one-out (LOO) results on synthetic and real datasets. The number of samples and the feature dimensionality for each dataset are listed in the title. The value beside each method in the legend indicates the average rank in that dataset. The performance of our method ($\alpha=2$) is decreased in most datasets.}
\label{fig:real_data_2}
\end{figure*}

\begin{table*}
\tiny
\centering
\caption{A summarization of different information-theoretic feature selection methods and their average ranks over different number of features in each dataset. The overall average ranks over different datasets are also reported. The best two performance in each dataset are marked with \textcolor{red}{red} and \textcolor{blue}{blue} respectively.}\label{lab:difference_summarization_2}
\begin{tabular}{ccccccccccc}\hline
 & Criteria & MADELON & breast & semeion & waveform & Lung & Lymph & ORL & PIE10P & average \\\hline
$\text{MIFS}$~\cite{battiti1994using} & $\mathbf{I}(X_{i_k};Y)-\beta\sum\limits_{l=1}^{k-1}\mathbf{I}(X_{i_k};X_{i_l})$ & $6.65$ & $6.30$ & $\color{red}{1.00}$ & $5.80$ & $3.80$ & $4.84$ & $4.65$ & $\color{red}{1.30}$ & $4.75$ \\
$\text{FOU}$~\cite{brown2009new} & $\mathbf{I}(X_{i_k};Y)-\sum\limits_{l=1}^{k-1}[\mathbf{I}(X_{i_k};X_{i_l})-\mathbf{I}(X_{i_k};X_{i_l}|Y)]$ & $\color{red}{1.75}$ & $5.00$ & $3.07$ & $4.00$ & $5.70$ & $6.34$ & $5.90$ & $4.45$ & $4.94$ \\
$\text{MIM}$~\cite{lewis1992feature} & $\mathbf{I}(X_{i_k};Y)$ & $3.35$ & $4.80$ & $6.47$ & $5.50$ & $5.70$ & $5.30$ & $5.75$ & $6.50$ & $5.94$ \\
$\text{MRMR}$~\cite{peng2005feature} & $\mathbf{I}(X_{i_k};Y)-\frac{1}{k-1}\sum\limits_{l=1}^{k-1}\mathbf{I}(X_{i_k};X_{i_l})$ & $5.85$ & $2.50$ & $4.80$ & $2.50$ & $3.66$ & $\color{red}{1.78}$ & $\color{red}{1.65}$ & $3.75$ & $\color{blue}{3.31}$ \\
$\text{JMI}$~\cite{yang2000data} & $\sum\limits_{l=1}^{k-1}\mathbf{I}(\{X_{i_k},X_{i_l}\};Y)$ & $\color{blue}{2.20}$ & $3.30$ & $4.80$ & $\color{red}{1.30}$ & $3.36$ & $2.90$ & $\color{blue}{2.20}$ & $5.20$ & $3.44$ \\
$\text{CMIM}$~\cite{fleuret2004fast} & $\min\limits_l\mathbf{I}(X_{i_k};Y|X_{i_l})$ & $3.10$ & $\color{red}{1.90}$ & $\color{blue}{2.07}$ & $\color{blue}{2.30}$ & $\color{red}{1.94}$ & $\color{blue}{2.50}$ & $3.25$ & $3.05$ & $\color{red}{2.25}$ \\
$\text{Ours}~(\alpha=2)$ & $\mathbf{I}(\{X_{i_1},X_{i_2},\cdots,X_{i_k}\};Y)$ & $4.05$ & $\color{blue}{2.20}$ & $4.80$ & $4.50$ & $\color{blue}{2.18}$ & $2.82$ & $2.70$ & $\color{blue}{2.20}$ & $3.38$ \\\hline
\end{tabular}
\end{table*}

\begin{IEEEbiography}
	[{\includegraphics[width=1in,height=1.25in,clip,keepaspectratio]{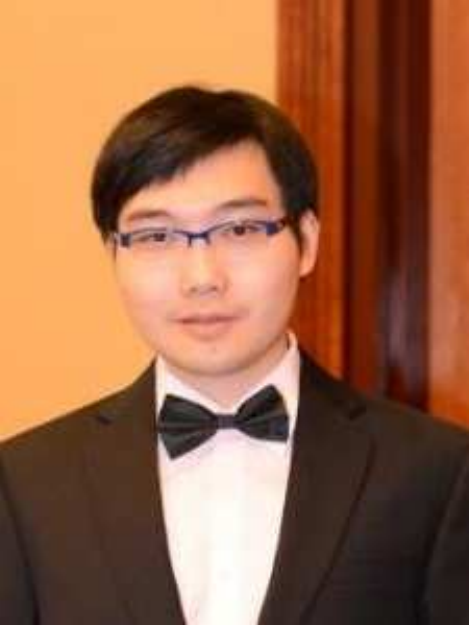}}]{Shujian Yu}
	(SM'17) received his Ph.D. degree in Electrical and Computer Engineering with a Ph.D. minor in Statistics at the University of Florida in 2019. He earned his B.S. degree from the School of Electronic Information and Communications at the Huazhong University of Science and Technology in 2013. Shujian was a Data Mining Research Scientist Intern at the Bosch Research and Technology Center in Palo Alto, CA, in summer 2016, and a Machine Learning Research Scientist Intern at the Nokia Bell Labs in Murray Hill, NJ, in summer 2017. He joined the NEC Labs Europe in Heidelberg, Germany, in August 2019, as a Machine Learning Research Scientist. His research interests include topics in machine learning, signal processing, and information theory.
\end{IEEEbiography}

\begin{IEEEbiography}
	[{\includegraphics[width=1in,height=1.25in,clip,keepaspectratio]{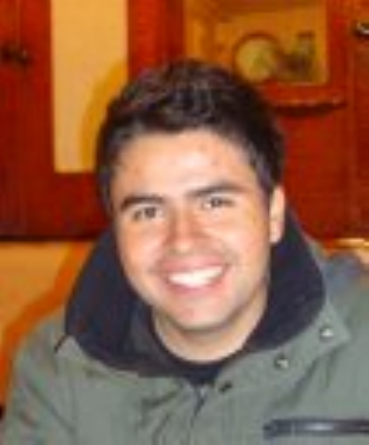}}]{Luis Gonzalo S\'{a}nchez Giraldo}
	received the BS degree in Electronics Engineering and the MEng degree in Industrial Automation from Universidad Nacional de Colombia in 2005 and 2008, respectively, and the PhD degree in Electrical and Computer Engineering from the University of Florida in 2012. He is currently a postdoctoral associate in the Department of Computer Science at the University of Miami. Between 2004 and 2008, he was appointed as a research assistant at the Control and Digital Signal Processing Group (GCPDS) at Universidad Nacional de Colombia. During his PhD studies, he was a research assistant at the Computational NeuroEngineering Laboratory (CNEL) at the University of Florida. His main research interests are in machine learning, signal processing, and computational neuroscience.
\end{IEEEbiography}

\begin{IEEEbiography}
    [{\includegraphics[width=1in,height=1.25in,clip,keepaspectratio]{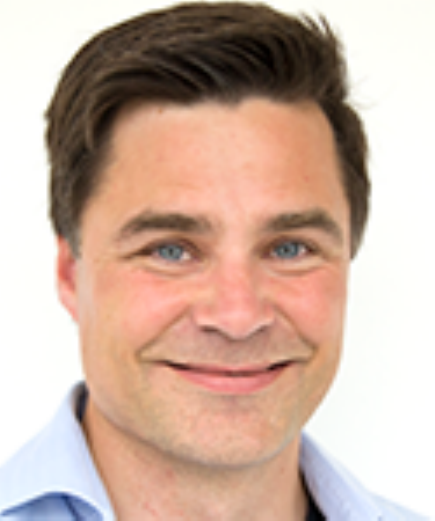}}]{Robert Jenssen}
    (M'02) received the PhD (Dr. Scient.) in Electrical Engineering from the University of Troms{\o}, in 2005. Currently, he is a Professor and Head of the UiT Machine Learning Group at UiT - The Arctic University of Norway, Troms{\o}, Norway. Jenssen received the Honorable Mention for the 2003 Pattern Recognition Journal Best Paper Award, the 2005 IEEE ICASSP Outstanding Student Paper Award, the 2007 UiT Young Investigator Award, the 2013 IEEE Geoscience and Remote Sensing Society Letters Best Paper Award, and the 2017 Scandinavian Conference on Image Analysis Best Student Paper Award (as supervisor). His paper, ``Kernel Entropy Component Analysis," was a featured paper of the IEEE Transactions on Pattern Analysis and Machine Intelligence. Jenssen is an adjunct professor with the Norwegian Computing Center in Oslo. He was previously a senior researcher (20$\%$) at the Norwegian Center on E-Health Research. He was Guest Researcher with the Technical University of Denmark, Kongens Lyngby, Denmark, from 2012 to 2013, with the Technical University of Berlin, Berlin, Germany, from 2008 to 2009, and with the University of Florida, Gainesville, FL, USA, spring 2018, spring 2004, and from Sept. 2002 to July 2003. Jenssen serves on the IEEE Technical Committee on Machine Learning for Signal Processing, he is on the IAPR Governing Board, and he is an Associate Editor for the journal Pattern Recognition. Jenssen is the General Chair of the annual Northern Lights Deep Learning Workshop - NLDL.
\end{IEEEbiography}

\begin{IEEEbiography}
	[{\includegraphics[width=1in,height=1.25in,clip,keepaspectratio]{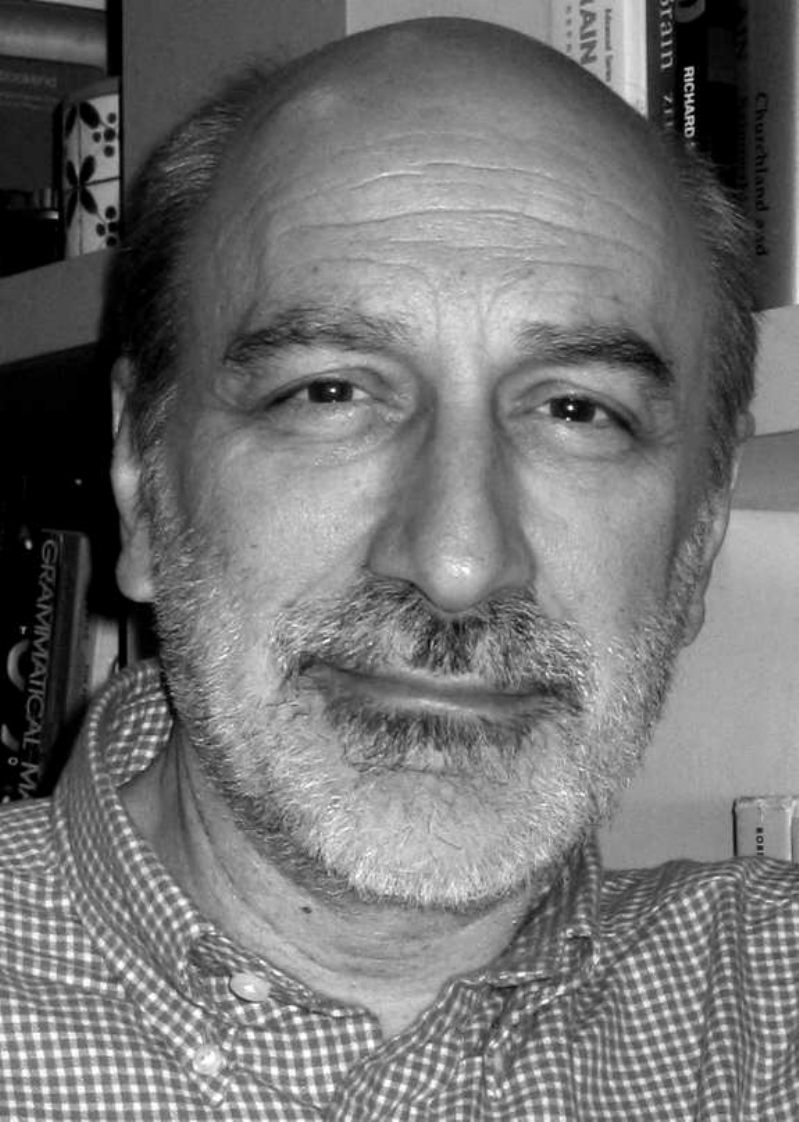}}]{Jos\'{e} C. Pr\'{i}ncipe}
	(M'83-SM'90-F'00) is the BellSouth and Distinguished Professor of Electrical and Biomedical Engineering at the University of Florida, and the Founding Director of the Computational NeuroEngineering Laboratory (CNEL). His primary research interests are in advanced signal processing with information theoretic criteria and adaptive models in reproducing kernel Hilbert spaces (RKHS), with application to brain-machine interfaces (BMIs). Dr.  Pr\'{i}ncipe is a Fellow of the IEEE, ABME, and AIBME. He is the past Editor in Chief of the IEEE Transactions on Biomedical Engineering, past Chair of the Technical Committee on Neural Networks of the IEEE Signal Processing Society, Past-President of the International Neural Network Society, and a recipient of the IEEE EMBS Career Award and the IEEE Neural Network Pioneer	Award.
\end{IEEEbiography}

% that's all folks
\end{document}